\documentclass[letterpaper, 10 pt, journal, onecolumn]{IEEEtran}  %

\IEEEoverridecommandlockouts                              %
\usepackage{cite}
\usepackage{amsmath,amssymb,amsfonts,bbold}
\usepackage{graphicx}
\usepackage{textcomp}
\usepackage{xcolor}
\usepackage{algpseudocode}
\usepackage{algorithm}
\usepackage{mathtools}
\usepackage{amsthm}
\usepackage{booktabs}

\newtheorem{theorem}{\textbf{Theorem}}[section]
\newtheorem{remark}{\textbf{Remark}}[theorem]

\newtheorem{lemma}[theorem]{\textbf{Lemma}}
\newtheorem{proposition}[theorem]{\textbf{Proposition}}

\newtheorem{definition}{\textbf{Definition}}[section]
\newtheorem{assumption}{\textbf{Assumption}}[section]

\renewcommand{\Pr}{\mathop{\mathrm{Pr}}\nolimits}

\newcommand{\ie}{\emph{i.e.}}

\begin{document}

\setlength{\emergencystretch}{2em} %

\title{Universal Estimation of the Fisher Information
for Processes with Memory
\thanks{This work has been partially supported by the Swedish Research Council under contract number 2023-05170. The authors are with the Department of Decision and Control Systems, KTH Royal Institute of Technology, SE-100 44 Stockholm, Sweden. Emails: \texttt{\{yibos, crro\}@kth.se}}}
\author{Yibo Shi, and Cristian R. Rojas}

\maketitle

\begin{abstract}
The Fisher information matrix (FIM) determines the
accuracy attainable in an estimation problem. For a parametric process
with memory it has a closed form only for narrow classes of models, and
it cannot be computed when the model is available only as a simulator.
This paper estimates the Fisher information rate of such a process, the
per-observation limit of its FIM, from simulator output alone and
without knowledge of the memory length. The construction uses the
identity between the FIM and the local curvature of the
Kullback--Leibler (KL) divergence. Directional divergence rates are
estimated by context-tree weighting (CTW), and the matrix is recovered
from them by least squares. We establish strong consistency in an
iterated limit and a
finite-sample bound, uniform over every working depth at least as large
as the true memory length, which gives a mean error of order
$n^{-1/3}$. The analysis also delivers a finite-sample rate for
CTW-based estimation of the KL divergence rate. Experiments confirm the
predicted exponents, show that the accuracy does not degrade as the
working depth grows, and use the estimate to plan a sampling budget
that attains a prescribed precision. The estimator stays accurate on a
hidden-state source that has no finite memory length.
\end{abstract}

\begin{IEEEkeywords}
Fisher information, context-tree weighting,
Kullback--Leibler divergence, sample-size determination,
simulation-based inference, universal source coding.
\end{IEEEkeywords}
\section{Introduction}
\label{sec:introduction}

The Fisher information matrix (FIM) of a parametric
statistical model is one of the basic quantities of statistics and
signal processing, and
its importance has grown with the spread of such models
across fields. In statistics it gives the
Cram\'er--Rao lower bound on the variance of unbiased estimators and
the asymptotic covariance of the maximum likelihood
estimator~\cite{lehmann1998theory}, and it underlies
experimental design and sample-size determination by deciding how much
data a prescribed precision requires~\cite{adcock1997sample}. In
signal processing it is the performance benchmark for nearly every
estimation problem~\cite{kay1993}. In machine learning it defines the
natural gradient~\cite{amari1998natural}, it is the curvature that
second-order optimizers approximate~\cite{martens2015kfac}, and it
measures the importance of network parameters in methods against
catastrophic forgetting~\cite{kirkpatrick2017ewc}. In Bayesian analysis
it defines the Jeffreys prior~\cite{jeffreys1946invariant,
Clarke-Barron-94}. All of these applications require the
value of the FIM, and that value is increasingly hard to
obtain. First, the FIM has a closed form only for a
narrow class of model structures, and as models grow more complex most
of them fall outside that class. Second, models are increasingly given
as parametric black boxes, simulators
indexed by the parameter that generate data at any
chosen parameter value but whose likelihood can neither be
written down nor evaluated. This is the standard setting of
simulation-based
inference~\cite{Cranmer2020SBI,Sisson2018ABCHandbook,Wood2010NatureSynLik}.
Furthermore, dynamic models with memory are increasingly
important. For them
the FIM is the inverse asymptotic covariance of
prediction-error and maximum-likelihood
identification~\cite{Ljung:99,Soderstrom-Stoica-89} and the objective
of experiment design in system identification~\cite{valenzuela2017robust},
but the nature of these models makes it harder to obtain. For state-space models, for instance, it has to be approximated by
specially designed particle
methods~\cite{Poyiadjis2011Biometrika,Kantas2015StatSci}.

Estimators of the FIM that work under these
circumstances are therefore a real need, in particular for models that
are complex, have memory, or are available only as black boxes. Under
the usual regularity conditions the FIM is the
covariance of the \emph{score function}~\cite{lehmann1998theory}, which is the gradient of the log-likelihood with respect
to the parameter. Most existing estimators start from this identity,
along two tracks. The first track uses an explicit score function or a computable log-likelihood: the score covariance is
estimated by the sample average of its outer products, or the log-likelihood is differentiated
numerically~\cite{efron1978assessing,spall2005monte,das2007efficient}.
The second track needs to estimate the score function itself as
well, by approximating it from data and then taking its sample
covariance, as particle methods do for state-space
models~\cite{Poyiadjis2011Biometrika,Kantas2015StatSci,valenzuela2017robust}.
Both approaches share one premise, namely, that the structure of the model is
known: the first approach requires the likelihood to be written in a given
parameterization, whereas the second one requires a concrete
structure, such as a state-space form, to build the approximation,
and in system identification the asymptotic covariance is
likewise derived separately for each model
structure~\cite{Ljung:99}. Each different structure calls for a new
derivation, so, as a result, both approaches fail for a black-box model,
which offers no structure to exploit.

A different route avoids the score altogether and relies
only on samples generated by the model. It rests on a classical
fact: the FIM is the local curvature of divergence in
parameter space. For a small perturbation of the parameter, the
\emph{Kullback--Leibler (KL) divergence} between the
output distributions of the model at the reference and at the
perturbed parameter value is, to second order, a quadratic form on the FIM~\cite{cover1999elements}. The FIM can therefore
be recovered from the curvature, which is obtained by
differencing divergences between neighbouring parameter values.
Estimating the FIM thus becomes estimating divergences.
Divergences are defined through the likelihood, as the
FIM is, and cannot be computed either. However, the KL
divergence enjoys an interesting property: it is the
expected log-likelihood ratio, the average gap between how well a
sample is explained by its own distribution and by the other one. It
can therefore be estimated from samples alone, by fitting a predictor
to a sample from one distribution and testing it on a sample from the
other. Estimating divergences from samples is for that reason a
problem with established tools, such as nearest-neighbour~\cite{wang2009knn}
and variational~\cite{nguyen2010estimating} estimators for independent
samples and estimators built on universal compression for
sequences~\cite{ziv1993relative,cai2006universal}. No
comparable tools exist for the FIM. The whole route
therefore uses only samples, no likelihood, no score, and no
assumption on the structure of the model. It has been
taken for independent and identically distributed
(i.i.d.) samples: the
divergences are estimated nonparametrically and the FIM is
recovered by least squares~\cite{berisha2014empirical}. That
estimator, however, requires i.i.d.~samples, and the result
is consistency only. For processes with memory, which is the class for which
the FIM is hardest to obtain, no such estimator
has been proposed yet.

This paper follows this route to construct a universal,
data-driven estimator of the FIM for processes with
memory. Universality means that the estimator works for a
whole class of processes without knowing which one. In particular, it does not need to know the \emph{memory length} of the process, that is, the number of
past observations on which the next observation depends, which is
usually unknown in practice. The estimator also comes with theoretical
guarantees.

This route needs two components. The recovery of the FIM
is direct, since the directional curvatures are linear in its entries
and least squares over enough directions recovers the
matrix~\cite{berisha2014empirical}. The remaining difficulty is the
divergence estimator: it must work from data, handle dependent
observations, and not require the memory length of the process to be
known in advance.

Three kinds of divergence estimators are available. Plug-in estimators assume a fixed
memory length, estimate the conditional probabilities over
many past observations by empirical frequencies, and substitute
them into the expression for the divergence~\cite{gao2008estimating}. They are the simplest methods, and work well when the assumed length is
correct and the sample is large, but the assumed length is a design
variable, the true memory length is a property of the model, and a
good estimate requires the two lengths to match, while the true length is
unknown. Variational and neural
estimators~\cite{nguyen2010estimating,belghazi2018mutual} need
no assumed memory length, but for processes
with memory they come without consistency or error
guarantees. The third kind builds on universal
compressors. A universal compressor approaches the optimal code length
for every process in a given class without being told which process
generated the data. The code lengths lead to estimates of the entropies of the data, and their
difference correspond to an estimate of the KL divergence. This family of methods has
two branches, the Lempel--Ziv methods~\cite{ziv1993relative} and
context-tree weighting (CTW)~\cite{willems2002contextB,cai2006universal}.
For this paper, we select CTW, for three properties: It is a sequential predictor,
observation by observation, so a predictor trained on one sample can
directly predict another one, which is exactly what the divergence
estimation above requires. It does not commit to one memory length but
averages automatically over all memory lengths up to a user-chosen
depth $L$, so the user only has to choose $L$ at least as large as the true memory
length, without knowing its exact value. Its guarantees are explicit and valid for finite samples, and they stop deteriorating with $L$ once $L$ exceeds the
true memory length~\cite{Kon24}. In empirical comparisons it is also currently
the most accurate universal-compressor~\cite{gao2008estimating}. CTW has been
used to build consistent estimators of the KL divergence
between sequences in \cite{cai2006universal}, but that work stops at the
divergence, gives no rate, and is not connected to the FIM. Recent developments of the CTW family, including
extensions to real-valued time series and to processes
with infinite memory~\cite{papageorgiou2023context,papageorgiou2025bctssm}, give
it the potential to serve as a universal divergence estimator in a
general sense. This paper starts from classical CTW, where the theory
is most complete, and accordingly states its theory for stationary
processes with finite-valued observations and finite memory; the
infinite-memory case is probed only in experiments. Finally, the
theoretical guarantees for the resulting estimator, namely, consistency and
convergence rate, are the main technical contributions of the paper.

The contributions of this paper are as follows:
\begin{enumerate}
\item A CTW-based estimator of the FIM that requires
only simulator access, for stationary processes with finite-valued
observations and finite memory.
\item Strong consistency of the estimator in an iterated
limit.
\item A finite-sample error rate for the estimator, together with the
step size that attains it. The bound is numerically tight.
As an extra contribution, the analysis includes a finite-sample error
bound for CTW-based estimators of the KL divergence.
\item A numerical evaluation of the estimator, which verifies its performance even outside the regime covered by the theory, and shows its value in a
sample-size determination task.
\end{enumerate}

Section~\ref{sec:problem} formulates the problem and states the
assumptions, Section~\ref{sec:method} constructs the estimator,
Section~\ref{sec:theory} establishes consistency and the rates, and
Section~\ref{sec:experiments} reports the numerical experiments,
and Section~\ref{sec:conclusion} concludes.
Proofs, the details of the CTW algorithm, and additional experiments
are given in the appendices.

\textit{Notation.} All logarithms are natural. The
Euclidean norm of a vector is $\|\cdot\|_2$ and the Frobenius norm of a
matrix is $\|\cdot\|_F$. For positive sequences $a_n$ and $b_n$,
$a_n=O(b_n)$ means that the ratio $a_n/b_n$ is bounded above, and
$a_n\asymp b_n$ that it is bounded above and below by positive
constants.

\section{Problem Formulation}
\label{sec:problem}

This section defines the Fisher information rate, the
per-observation version of the FIM that is the relevant object for a
process with memory. It states the standing assumptions and formulates
the estimation problem.

\subsection{The Fisher Information Rate}
\label{subsec:setup}

Let $Y$ be a random variable taking values in an arbitrary finite set
$\mathcal A$ (called an \emph{alphabet}). For a parameter
$\theta\in\Theta\subset\mathbb R^d$, write its
\emph{probability mass function} (pmf) as
$p_\theta(y):=p(y;\theta):=\Pr_\theta\{Y=y\}$.
On a finite alphabet, the pmf uniquely determines a \emph{probability law} $P_\theta$:
the probability of any subset of $\mathcal A$ is the sum of
$p_\theta(y)$ over the elements in that subset.

\begin{definition}[Fisher information matrix]
\label{def:fim}
The \emph{Fisher information matrix} (FIM) at $\theta$, denoted by
$J_\theta$, is
\begin{equation}
J_\theta
\;:=\;
\mathbb{E}_\theta\!\left[
\nabla_\theta \log p_\theta(Y)\,
\nabla_\theta^{\top} \log p_\theta(Y)
\right],
\label{eq:fisher-info-def}
\end{equation}
where $\nabla_\theta$ denotes the gradient with respect to $\theta$
and $\mathbb E_\theta[f(Y)]
:=\sum_{y\in\mathcal A}f(y)p_\theta(y)$.
\end{definition}

A \emph{stochastic process} over the alphabet $\mathcal A$ is a collection
$\{Y_t\}_{t\in\mathbb Z}$ of $\mathcal A$-valued random variables. We
consider a stationary process, as formally imposed in
Assumption~\ref{ass:source}, and let $P_\theta$ denote its probability
law on $\mathcal A^{\mathbb Z}$. We use $\mathbb E_\theta$ for
expectation under $P_\theta$ or the relevant marginal thereof. For
integers $i\le j$, define the random
block $Y_{i:j}:=(Y_i,Y_{i+1},\ldots,Y_j)$,
with realization
$y_{i:j}:=(y_i,y_{i+1},\ldots,y_j)\in\mathcal A^{\,j-i+1}$.
By stationarity, the distribution of a block depends on its length but
not on its starting time; in particular,
\[
Y_{i:i+n-1}\overset{d}{=}Y_{1:n},
\qquad i\in\mathbb Z.
\]
For $n\in\mathbb N$, let $P_\theta^{(n)}$ denote the common
distribution of every length-$n$ block. Since the alphabet is finite,
this distribution has the pmf
\[
p_\theta^{(n)}(y_{1:n})
:=P_\theta^{(n)}(\{y_{1:n}\})
=\Pr\nolimits_\theta\{Y_{1:n}=y_{1:n}\},
\qquad y_{1:n}\in\mathcal A^n.
\]

\begin{definition}[Fisher information rate]
\label{def:fim-rate}
At parameter value $\theta$, the \emph{length-$n$ FIM} of the process is
\begin{equation}
J_\theta^{(n)}
\;:=\;
\mathbb{E}_{P_\theta^{(n)}}\!\left[
\nabla_\theta\log p_\theta^{(n)}(Y_{1:n})
\nabla_\theta^{\!\top}\log p_\theta^{(n)}(Y_{1:n})
\right].
\label{eq:fim-n-pn}
\end{equation}
When observations are i.i.d., $J_\theta^{(n)}=nJ_\theta$. For general, non-i.i.d.\ observation sequences, the relevant quantity is the per-symbol \emph{Fisher information rate}
\begin{equation}
I(\theta)
\;:=\;
\lim_{n\to\infty}\frac{1}{n}\,J_\theta^{(n)},
\label{eq:fim-rate}
\end{equation}
provided the limit exists. 
Under the smoothness condition of Assumption~\ref{ass:reg}(ii) below,
the FIM equals the negative of the expected Hessian of the
log-likelihood
\begin{equation}
    J_\theta^{(n)}
=-\mathbb E_{P_\theta^{(n)}}\bigl[\nabla_\theta^2\log
p_\theta^{(n)}(Y_{1:n})\bigr].
\end{equation}
Thus $n^{-1}J_\theta^{(n)}$ is the expected curvature (negative Hessian)
of the \emph{average log-likelihood} at $\theta$, and $I(\theta)$ is its
limiting per-symbol value. In the i.i.d.\ case, $I(\theta)$ reduces to
the ordinary FIM $J_\theta$.
\end{definition}

Our goal is to estimate the Fisher information rate $I(\theta)$ for non-i.i.d.~sources, in a setting where the likelihood and its
derivatives are inaccessible and only simulated trajectories of the
process are available. We make this precise through the following
standing assumptions.

\subsection{Assumptions}
\label{subsec:assumptions}

\begin{assumption}[Simulator access]\label{ass:sim}
For any finite collection of parameter values
$\theta_1,\ldots,\theta_R\in\Theta$ and any $n\in\mathbb{N}$, the
simulator can generate mutually independent trajectories
$Y_{1:n}^{(r)}\sim P_{\theta_r}^{(n)}$, $r=1,\ldots,R$.
The observations within each trajectory may be temporally dependent.
Direct evaluation of the marginal pmfs $p_{\theta_r}^{(n)}$ and their
scores $\nabla_{\theta_r}\log p_{\theta_r}^{(n)}$ is unavailable or
impractical.
\end{assumption}
\begin{assumption}[Finite alphabet, finite memory, and stationarity]
\label{ass:source}
Under $P_\theta$, the process $(Y_t)_{t\in\mathbb Z}$ is stationary,
takes values in a finite alphabet $\mathcal{A}$, and is a Markov source
of order at most $L^\star<\infty$; that is,
\begin{equation*}
\Pr_\theta\bigl(Y_t=a\mid Y_{-\infty:t-1}\bigr)
=
\Pr_\theta\bigl(Y_t=a\mid Y_{t-L^\star:t-1}\bigr)
\qquad\text{almost surely}
\end{equation*}
for every $a\in\mathcal A$.
The state process
$S_t:=Y_{t-L^\star:t-1}$, obtained by sliding a window of length
$L^\star$ along the sequence, is a first-order Markov chain on
$\mathcal{A}^{L^\star}$, which is assumed to be irreducible and
aperiodic.
\end{assumption}

\begin{assumption}[Regularity and local identifiability]
\label{ass:reg}
Fix a parameter value $\theta_0\in\operatorname{int}(\Theta)$ at which
$I(\theta_0)$ is to be estimated. There exist an open neighborhood
$\mathcal N(\theta_0)\subset\Theta$ and a constant
$\delta\in(0,\tfrac12)$ such that the following hold for every
$\theta\in\mathcal N(\theta_0)$:
\begin{enumerate}
\item[\textnormal{(i)}]
\emph{(Uniform regularity.)}
For every state $s\in\mathcal A^{L^\star}$ and every
$a\in\mathcal A$,
\[
\Pr_\theta(Y_t=a\mid Y_{t-L^\star:t-1}=s)
\in[\delta,1-\delta].
\]
\item[\textnormal{(ii)}]
\emph{(Smoothness.)}
For every $n\in\mathbb N$ and every $y_{1:n}\in\mathcal A^n$, the map
\[
\theta\mapsto
\log p_\theta^{(n)}(y_{1:n})
\]
is three times continuously differentiable on
$\mathcal N(\theta_0)$.
Since the alphabet is finite, expectations are finite
sums, and differentiation may always be interchanged with them;
consequently, the score identity
\[
\mathbb E_\theta\!\left[
\nabla_\theta\log p_\theta^{(n)}(Y_{1:n})\right]=0
\]
and the Fisher information identity
\[
-\mathbb E_\theta\!\left[
\nabla_\theta^2\log p_\theta^{(n)}(Y_{1:n})\right]
=J_\theta^{(n)}
\]
hold for every $n\in\mathbb N$ and
$\theta\in\mathcal N(\theta_0)$.
\item[\textnormal{(iii)}]
\emph{(Local identifiability.)}
If $P_\theta=P_{\theta'}$ for
$\theta,\theta'\in\mathcal N(\theta_0)$, then
$\theta=\theta'$.
\end{enumerate}
\end{assumption}

The next few remarks clarify the role of each assumption.

Assumption~\ref{ass:sim} formalizes the \emph{black-box} setting considered throughout: the data-generating model is accessible only through a simulator, while closed-form evaluation of the likelihood and its parameter derivatives is unavailable. This is the regime in which classical, score-based computation of the FIM~\eqref{eq:fisher-info-def}-\eqref{eq:fim-n-pn} cannot be applied.

Assumption~\ref{ass:source} specifies the source class considered in this
paper: stationary finite-alphabet processes with bounded memory. This is
the standard setting for CTW-based coding and divergence estimation, and
the i.i.d.\ case is recovered by $L^\star=0$. Irreducibility and
aperiodicity ensure that the finite-state chain $S_t$ is ergodic. The
bounded-memory restriction is made for compatibility with the depth-$L$
CTW estimator introduced in Section~\ref{subsec:kl-est}; extensions
beyond finite memory are left outside the scope of this paper.

Assumption~\ref{ass:reg} collects the local regularity conditions used
near the fixed target value $\theta_0$. Part (i) keeps the transition
probabilities nondegenerate, part (ii) supplies the differentiability and
Fisher identities needed for the local KL--Fisher expansion, and
part (iii) rules out local non-identifiability. These
conditions are standard for establishing the well-posedness and
consistency of Fisher-information estimators~\cite{berisha2014empirical}.
They are in force throughout the paper: the consistency
results of Section~\ref{sec:theory} use them qualitatively, and the
finite-sample analysis of Section~\ref{subsec:rate} uses them again
quantitatively (in particular the lower bound $\delta$), adding the
structural conditions it alone needs at their point of use.

\subsection{Problem Statement}
\label{subsec:problem-statement}

Under Assumptions~\ref{ass:sim}--\ref{ass:reg}, the problem addressed in this paper is the following: \emph{given only simulator access to the process laws $P_\theta$, estimate the Fisher information rate $I(\theta_0)$ at the fixed parameter value $\theta_0$, without evaluating the likelihood $p_{\theta_0}^{(n)}$ or the score $\nabla_\theta\log p_\theta^{(n)}|_{\theta=\theta_0}$.} For finite $n$, the normalized length-$n$ FIM $J_{\theta_0}^{(n)}/n$ serves as a finite-sample proxy for $I(\theta_0)$, and is recovered in the i.i.d.\ case as the ordinary FIM $J_{\theta_0}$. The estimator we propose, developed in Section~\ref{sec:method}, addresses this problem; its consistency and finite-sample guarantees are established in Section~\ref{sec:theory}.

\section{CTW-Based Estimation of the Fisher Information}
\label{sec:method}

We now describe the proposed estimator of the Fisher information rate
$I(\theta_0)$. The estimator is \emph{score-free} and
\emph{likelihood-free}: it accesses the data-generating process only
through the simulator of Assumption~\ref{ass:sim}, and does not
evaluate the marginal pmfs $p_{\theta_0}^{(n)}$ or their parameter
derivatives.

Under Assumption~\ref{ass:sim}, the score and the likelihood are unavailable, so the Fisher information cannot be obtained directly from its definition~\eqref{eq:fisher-info-def}. The estimator instead exploits a second-order relationship between the Fisher information and the KL divergence: $I(\theta_0)$ is the local curvature of the KL divergence between process laws near $P_{\theta_0}$ (Section~\ref{subsec:kl-fim-equivalence}). The KL divergence itself depends on the likelihood, but it can be written as the difference between a cross-entropy rate and an entropy rate. We estimate these two rates following the CTW-based universal divergence-estimation construction of Cai et al.~\cite{cai2006universal}. The reference trajectory, simulated at $\theta_0$, is encoded by a sequential CTW code, giving the entropy rate. CTW scorers frozen on trajectories simulated at perturbed parameter values then score that same trajectory, giving the cross-entropy rate (Section~\ref{subsec:kl-est}). Directional KL estimates obtained along a set of probing directions are then mapped to $I(\theta_0)$ through a least-squares recovery step (Section~\ref{subsec:ls-recovery}). The complete procedure is summarized in Algorithms~\ref{alg:ctw-kl} and~\ref{alg:fim-recovery}.

\subsection{Relation Between the KL Divergence and the Fisher Information Matrix}
\label{subsec:kl-fim-equivalence}

The first reduction replaces the Fisher information by KL divergences.

\begin{definition}[Block information measures]
\label{def:block-information}
For two probability laws $P^{(n)}$ and $Q^{(n)}$ on
$\mathcal{A}^n$, with pmfs $p^{(n)}$ and $q^{(n)}$, respectively, the
\emph{(block) cross-entropy} is
\begin{equation}
H\bigl(P^{(n)},Q^{(n)}\bigr)
:= -\,\mathbb{E}_{P^{(n)}}\!
\left[\log q^{(n)}(Y_{1:n})\right],
\label{eq:ce-def}
\end{equation}
and the \emph{entropy} of $P^{(n)}$ is its diagonal value,
$H(P^{(n)}):=H(P^{(n)},P^{(n)})$. The
\emph{KL divergence (relative entropy)} of $P^{(n)}$ from $Q^{(n)}$
is the difference
\begin{equation}
D\bigl(P^{(n)}\,\|\,Q^{(n)}\bigr)
:=H\bigl(P^{(n)},Q^{(n)}\bigr)-H\bigl(P^{(n)}\bigr).
\label{eq:kl-def}
\end{equation}
For fixed block length $n$, define the \emph{normalized block
cross-entropy}, \emph{entropy}, and \emph{KL divergence}, respectively, as
\begin{equation}
\bar{H}_n\bigl(P^{(n)},Q^{(n)}\bigr)
:= \tfrac{1}{n}H\bigl(P^{(n)},Q^{(n)}\bigr),
\quad
\bar{H}_n\bigl(P^{(n)}\bigr)
:= \tfrac{1}{n}H\bigl(P^{(n)}\bigr),
\quad
\bar{D}_n\bigl(P^{(n)}\,\|\,Q^{(n)}\bigr)
:= \tfrac{1}{n}D\bigl(P^{(n)}\,\|\,Q^{(n)}\bigr).
\label{eq:per-sample}
\end{equation}
\end{definition}

\begin{definition}[Information rates]
\label{def:information-rates}
For the sequence models considered here
(Assumption~\ref{ass:source}), the \emph{entropy rate},
\emph{cross-entropy rate}, and \emph{KL divergence rate} between
process laws at two parameter values $\theta$ and $\theta'$ are
\begin{align}
\bar H(P_\theta)
&:=\lim_{n\to\infty}\bar H_n(P_\theta^{(n)}), \nonumber\\
\bar H(P_\theta,P_{\theta'})
&:=\lim_{n\to\infty}
\bar H_n(P_\theta^{(n)},P_{\theta'}^{(n)}), \nonumber\\
\bar D(P_\theta\,\|\,P_{\theta'})
&:=
\lim_{n\to\infty}
\bar D_n(P_\theta^{(n)}\,\|\,P_{\theta'}^{(n)}),
\label{eq:kl-rate-def}
\end{align}
whenever these limits exist. Equivalently,
\begin{equation}
\bar D(P_\theta\,\|\,P_{\theta'})
\;=\;
\bar H(P_\theta,P_{\theta'})-\bar H(P_\theta).
\label{eq:kl-rate-entropy}
\end{equation}
\end{definition}

We now relate these information measures to the Fisher information at
the fixed target parameter $\theta_0$. For a sufficiently small
perturbation $\Delta\in\mathbb R^d$, a second-order Taylor expansion of
the length-$n$ log-pmf
$\log p_\theta^{(n)}(y_{1:n})$ around $\theta_0$ gives
\begin{equation}
D\bigl(P_{\theta_0}^{(n)}\,\|\,P_{\theta_0+\Delta}^{(n)}\bigr)
=
\tfrac12\,\Delta^\top J_{\theta_0}^{(n)}\Delta
+O\!\bigl(n\|\Delta\|^3\bigr).
\label{eq:kl-fim-block}
\end{equation}
Thus, both the pmfs $p_{\theta_0}^{(n)}$ and
$p_{\theta_0+\Delta}^{(n)}$ underlying the KL divergence and the
corresponding FIM $J_{\theta_0}^{(n)}$ depend on the block length $n$.
Dividing~\eqref{eq:kl-fim-block} by $n$ yields the normalized block
relationship
\begin{equation}
\bar D_n\bigl(P_{\theta_0}^{(n)}\,\|\,P_{\theta_0+\Delta}^{(n)}\bigr)
=
\tfrac12\,\Delta^\top
\left(\tfrac1n J_{\theta_0}^{(n)}\right)\Delta
+O\!\bigl(\|\Delta\|^3\bigr).
\label{eq:kl-fim-normalized-block}
\end{equation}
Finally, taking $n\to\infty$ in~\eqref{eq:kl-fim-normalized-block}
gives the corresponding rate-level relationship
\begin{equation}
\bar{D}\bigl(P_{\theta_0}\,\|\,P_{\theta_0+\Delta}\bigr)
= \tfrac{1}{2}\,\Delta^\top I(\theta_0)\,\Delta
+ O\!\bigl(\|\Delta\|^3\bigr),
\label{eq:kl-fim-rate}
\end{equation}
where $I(\theta_0)$ is the Fisher information
rate~\eqref{eq:fim-rate}. Equations~\eqref{eq:kl-fim-block}-\eqref{eq:kl-fim-rate} exhibit the progression from block laws and
their length-$n$ FIM to the corresponding per-symbol limits. The
rate-level quadratic relationship is the foundation of the proposed
estimator: it shows that $I(\theta_0)$ can be recovered from the KL
divergence rates between $\theta_0$ and nearby parameter values, with
no access to scores or likelihoods. The following
lemma makes this precise, including a uniform bound on the third-order
remainder; the proof is given in Appendix~\ref{app:aux}.

\begin{lemma}[Local KL--FIM expansion]
\label{lem:kl-fim}
Under Assumptions~\ref{ass:source}--\ref{ass:reg}, there
exist constants $\varepsilon_0>0$ and $C_3<\infty$, depending only on
the local model family in a neighborhood of $\theta_0$, such that for
every $\Delta\in\mathbb R^d$ with $0<\|\Delta\|_2\le\varepsilon_0$,
\begin{equation}
\bar{D}\bigl(P_{\theta_0}\,\|\,P_{\theta_0+\Delta}\bigr)
= \tfrac12\,\Delta^\top I(\theta_0)\,\Delta + r(\Delta),
\qquad
|r(\Delta)|\le C_3\,\|\Delta\|_2^3.
\label{eq:kl-fim-expansion}
\end{equation}
\end{lemma}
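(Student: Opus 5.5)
The plan is to avoid passing a block-level remainder through the limit $n\to\infty$, since constants in \eqref{eq:kl-fim-normalized-block} need not be uniform in $n$. Instead, I will work directly with an exact rate-level formula that exists under Assumption~\ref{ass:source}. Write $q_\theta(a\mid s):=\Pr_\theta(Y_t=a\mid Y_{t-L^\star:t-1}=s)$ for $s\in\mathcal A^{L^\star}$, and let $\pi_\theta$ be the stationary law of $S_t$. The chain rule for $\log p_\theta^{(n)}$, together with stationarity, gives
\[
\log p_\theta^{(n)}(y_{1:n})=\log p_\theta^{(L^\star)}(y_{1:L^\star})+\sum_{t=L^\star+1}^{n}\log q_\theta(y_t\mid y_{t-L^\star:t-1}).
\]
Taking expectations under $P_{\theta_0}$ and dividing by $n$, the initial term contributes $O(1/n)$, because by (i) all probabilities of length-$L^\star$ blocks are bounded below by $\delta^{L^\star}$ and the log terms stay bounded. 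Hence
\[
\bar D(P_{\theta_0}\|P_\theta)=F(\theta):=\sum_{s}\pi_{\theta_0}(s)\sum_a q_{\theta_0}(a\mid s)\log\frac{q_{\theta_0}(a\mid s)}{q_\theta(a\mid s)},
\]
which is a finite sum. The same decomposition applied to the score gives $\tfrac1n J^{(n)}_{\theta_0}\to \sum_s\pi_{\theta_0}(s)\,\mathbb E_{q_{\theta_0}(\cdot\mid s)}[\nabla\log q\,\nabla^\top\log q]$. The cross terms between different times vanish by the martingale property of the conditional scores, and the initial block contributes $O(1/n)$. So $I(\theta_0)$ exists and equals this expression.

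Next I need smoothness of $\theta\mapsto \log q_\theta(a\mid s)$. Using (ii) with $n=L^\star+1$ and $n=L^\star$, write
\[
\log q_\theta(a\mid s)=\log p^{(L^\star+1)}_\theta(s,a)-\log p^{(L^\star)}_\theta(s).
\]
Both terms are $C^3$ on $\mathcal N(\theta_0)$, so each $\log q_\theta(a\mid s)$ is $C^3$. Now Taylor-expand $F(\theta_0+\Delta)$ to second order with an integral or Lagrange remainder.
\begin{itemize}
\item $F(\theta_0)=0$.
\item $\nabla F(\theta_0)=-\sum_s\pi_{\theta_0}(s)\sum_a\nabla q_{\theta_0}(a\mid s)=0$, because $\sum_a q_\theta(a\mid s)=1$ for every $\theta$.
\item $\nabla^2F(\theta_0)=-\sum_s\pi_{\theta_0}(s)\,\mathbb E[\nabla^2\log q]$, which equals the conditional Fisher expression by the same normalization identity. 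This is exactly $I(\theta_0)$.
\end{itemize}

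Finally, I choose $\varepsilon_0$ so that the closed ball of radius $\varepsilon_0$ around $\theta_0$ lies inside $\mathcal N(\theta_0)$; shrinking it if needed makes it compact. Set
\[
C_3:=\tfrac16\max_{s,a}\,\sup_{\|\theta-\theta_0\|\le\varepsilon_0}\|\nabla^3\log q_\theta(a\mid s)\|,
\]
where the norm is the operator norm of the trilinear form. This supremum is finite by continuity on a compact set, and there are finitely many pairs $(s,a)$. Because the weights $\pi_{\theta_0}(s)q_{\theta_0}(a\mid s)$ sum to one, the remainder satisfies $|r(\Delta)|\le C_3\|\Delta\|_2^3$.

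The main obstacle is the first step: identifying $\bar D$ and $I(\theta_0)$ with the finite one-step conditional expressions. This requires care with the initial block, which is why the lower bound $\delta$ from (i) is used. It also requires the martingale-difference argument that removes the cross terms in $J^{(n)}_{\theta_0}/n$, so that the $n\to\infty$ limits commute with the Taylor step. Local identifiability (iii) is not needed for this expansion.
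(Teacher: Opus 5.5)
Your proof is correct and gives the same constant as the paper, $C_3=M_T/6$, where $M_T$ is the maximal operator norm of $\nabla^{\otimes 3}\log T_{s,a}$ over $(s,a)$ and the closed ball. The route, however, differs.

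The paper Taylor-expands the \emph{block} log-likelihood $\log p^{(n)}_{\theta_0+\Delta}(y_{1:n})$ pathwise. It then uses the block score and Fisher identities of Assumption~\ref{ass:reg}(ii) to obtain $D(P^{(n)}_{\theta_0}\|P^{(n)}_{\theta_0+\Delta})=\tfrac12\Delta^\top J^{(n)}_{\theta_0}\Delta-\mathbb E_{\theta_0}[R_3]$. Next, through the additive decomposition \eqref{eq:logli-decom}, it bounds the block third derivative by $M_\pi+nM_T$. This gives $\tfrac1n|\mathbb E R_3|\le\tfrac16(M_T+M_\pi/n)\|\Delta\|_2^3$ uniformly in $n$. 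Only then does it pass to the limit: $r(\Delta)$ is defined as the limit of the normalized remainders, and the closed forms \eqref{eq:fim-stationary} and \eqref{eq:kl-rate-closed} are invoked to guarantee that the limits exist. Your concern that the block-level constants need not be uniform in $n$ is reasonable in general, but the paper's Step~4 settles it for this source class.

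You instead compute both limits exactly first and Taylor-expand the finite sum $F(\theta)$ directly. This needs only the one-step normalization identities $\sum_a\nabla q=0$ and $\sum_a\nabla^2q=0$. It needs no block-level score or Fisher identities, no uniform-in-$n$ remainder bound, and no definition of $r$ as a limit.

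Your route is shorter. It is also the template the paper itself adopts for Lemma~\ref{lem:sym-expansion}, so it extends to the fourth-order symmetric expansion with no extra work. The paper's route buys a rigorous version of the block-level relations \eqref{eq:kl-fim-block}--\eqref{eq:kl-fim-normalized-block} stated in Section~\ref{subsec:kl-fim-equivalence}, confirming that the $O(n\|\Delta\|^3)$ remainder there is uniform. The cost is that its Step~5 still needs the same closed forms you use.
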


It remains to estimate the KL divergence rate
$\bar D(P_{\theta_0}\,\|\,P_{\theta_0+\Delta})$ for small perturbations
$\Delta$ from simulator output alone.
This is the subject of the next subsection.

\subsection{CTW-Based Universal Estimation of the KL Divergence Rate}
\label{subsec:kl-est}

The expansion~\eqref{eq:kl-fim-rate} reduces the estimation of
$I(\theta_0)$ to that of KL divergence rates
$\bar{D}(P_{\theta_0}\,\|\,P_{\theta_0+\Delta})$. These rates still
involve the marginal pmfs $p_{\theta_0}^{(n)}$ and
$p_{\theta_0+\Delta}^{(n)}$, which are inaccessible under
Assumption~\ref{ass:sim}. Since
\begin{equation}
\bar{D}(P_{\theta_0}\,\|\,P_{\theta_0+\Delta})
=
\bar{H}(P_{\theta_0},P_{\theta_0+\Delta})
-
\bar{H}(P_{\theta_0}),
\label{eq:kl-split}
\end{equation}
it is enough to estimate an entropy rate and a cross-entropy rate from
simulator trajectories. We do this with the CTW-based universal divergence
estimator of Cai et al.~\cite{cai2006universal}. We first recall the
CTW coding mechanism and then describe the frozen scoring protocol used
for KL estimation.

\subsubsection{Context-tree weighting}

We use the \emph{context-tree weighting (CTW)} algorithm of Willems,
Shtarkov, and Tjalkens~\cite{willems2002contextB} as the universal
coding component of our estimator. Its underlying data structure is the
\emph{context tree} of working depth $L$: the complete
$|\mathcal A|$-ary tree whose nodes are the strings
$s\in\mathcal A^{\ell}$, $0\le \ell\le L$. A node $s$ at depth $\ell$
stands for the $\ell$ most recent symbols of the past and is called a
\emph{context}. In particular, the state
$S_t=Y_{t-L^\star:t-1}$ of Assumption~\ref{ass:source} is the context of
depth $L^\star$ at time $t$. The root is the empty string, the children of a context
$s$ are its one-symbol extensions $as$, $a\in\mathcal A$, and the leaves
are the contexts of full length $L$. The working depth is chosen so
that $L\ge L^\star$. Since the conditional law of the next symbol
depends on the past only through its $L^\star$ most recent symbols,
the working tree
contains every context the source actually uses.

While processing a string, each node $s$ stores the counts of the
symbols that followed the context $s$. These counts define, at each
node, a local \emph{Krichevsky--Trofimov (KT)}
predictor~\cite{krichevsky1981performance}: a smoothed empirical
estimate of the conditional distribution of the next symbol given the
context $s$, assigning strictly positive probability to every symbol,
including those not yet observed. Its exact form and the Bayesian
argument behind it are given in Appendix~\ref{app:ctw}. CTW then
combines the KT predictors of different context lengths, from the
leaves up to the root, into a single probability
$Q^{\mathrm{CTW}}_{L}(y_{1:i})$ assigned to the processed string
$y_{1:i}$, from which the next symbol is predicted sequentially by
\[
Q^{\mathrm{CTW}}_{L}(y_i\mid y_{1:i-1})
=
\frac{Q^{\mathrm{CTW}}_{L}(y_{1:i})}{Q^{\mathrm{CTW}}_{L}(y_{1:i-1})}.
\]

Three features make CTW suitable for the KL-estimation step below:

\begin{itemize}
    \item \textbf{Linear-time implementation.} The CTW recursion processes one symbol at a time and evaluates the required block probabilities at total cost linear in the block length $n$.
    \item \textbf{No prior knowledge of the memory length.} The CTW recursion combines the context-tree models of depth at most $L$. Consequently, choosing $L\ge L^\star$ allows the true bounded-memory source to be represented without requiring prior knowledge of its exact memory length. Although the original algorithm was formulated for binary sources, it has been extended to larger alphabets~\cite{tjalkens1993sequential}, which is the form used here; the construction further extends to unbounded tree depth~\cite{willems2002contextE}, and, through quantization, can serve as a building block for sources with continuous observation spaces~\cite{papageorgiou2023context}.
    \item \textbf{Positive probabilities.} For every history
$y_{1:t}$ and every next symbol $a\in\mathcal A$, the sequential CTW
predictor satisfies
\begin{equation}
Q^{\mathrm{CTW}}_{L}\bigl(a\mid y_{1:t}\bigr)
\;\ge\;\frac{1}{2t+|\mathcal{A}|}.
\label{eq:ctw-positivity-method}
\end{equation}
Consequently, every CTW block probability assigned to a finite
string is strictly positive, so the negative log scores used in
\eqref{eq:entropy-est} and \eqref{eq:cross-entropy-est} are finite on
every sample path.
\end{itemize}

The CTW properties above are established in
Appendix~\ref{app:ctw}.

For entropy estimation, CTW can be used as it is: the per-symbol code length
$-\tfrac{1}{n}\log Q^{\mathrm{CTW}}_{L}(z_{1:n})$ of the observed block
directly estimates the entropy rate.
Let $Z_{1:n}\sim P_{\theta_0}^{(n)}$ be a reference path. The entropy
rate of $P_{\theta_0}$ is estimated by the sequential CTW code length of
this block,
\begin{equation}
\widehat{\bar{H}}_n(P_{\theta_0})
:= -\tfrac{1}{n}\log
Q^{\mathrm{CTW}}_{L}(Z_{1:n}).
\label{eq:entropy-est}
\end{equation}

This settles the second term of~\eqref{eq:kl-split}. The
cross-entropy term $\bar H(P_{\theta_0},P_{\theta_0+\Delta})$, by
contrast, requires evaluating a CTW model built from one source on data
generated by another, which the sequential code does not provide; this
is the purpose of the frozen scoring protocol introduced next.

\subsubsection{Frozen CTW scoring for KL estimation}

Following Cai et
al.~\cite{cai2006universal}, a CTW
scorer is built from a block generated by one source, then frozen and
used to score a block generated by the other source. This frozen
scoring step is what turns CTW coding into a divergence estimator.
Building a depth-$L$ frozen CTW scorer from a string
$x_{1:n}$ means running the standard CTW recursion on $x_{1:n}$ and
then keeping the resulting node counts, KT predictors, and CTW weighted
probabilities fixed when scoring another string; see
Appendix~\ref{app:ctw}.

\begin{definition}[Frozen CTW scorer]
\label{def:frozen-ctw-scorer}

For a string $x_{1:n}\in\mathcal A^n$ used to build the scorer
and a working depth $L$, let
\[
\widehat P^{\mathrm{CTW}}_{x_{1:n},L}(z_{1:r}),
\qquad r\ge 1,
\]
denote the probability assigned to a scoring string
$z_{1:r}\in\mathcal A^r$ by the frozen CTW scorer built from
$x_{1:n}$ and then kept fixed while scoring $z_{1:r}$.
The node counts are accumulated over the positions
$t=L+1,\dots,n$, so that every count is taken at full depth.

\end{definition}

\begin{definition}[Frozen CTW one-step predictor]
\label{def:frozen-ctw-onestep}

For a frozen scorer built from $x_{1:n}$ at working depth $L$ and for a
full working-depth context $s\in\mathcal A^L$, write
\[
\widehat T^{\mathrm{CTW}}_{x_{1:n},L}(a\mid s),
\qquad a\in\mathcal A,
\]
for the one-step probability assigned to symbol $a$ when the current
scoring context is $s$. For $n\ge L$, the associated block score admits the
factorization
\begin{equation}
\widehat P^{\mathrm{CTW}}_{x_{1:n},L}(z_{1:n})
=
C_{x,L}(z_{1:L})
\prod_{t=L+1}^{n}
\widehat T^{\mathrm{CTW}}_{x_{1:n},L}
\bigl(z_t\mid z_{t-L:t-1}\bigr),
\label{eq:frozen-ctw-factorization}
\end{equation}
where $C_{x,L}(z_{1:L})$ denotes the warm-up probability assigned to the
first $L$ scoring symbols,
\[
C_{x,L}(z_{1:L})
:=
\widehat P^{\mathrm{CTW}}_{x_{1:n},L}(z_{1:L}).
\]

\end{definition}

The frozen scorer differs from the sequential code: the frozen one-step
predictor $\widehat T^{\mathrm{CTW}}_{x_{1:n},L}(a\mid s)$ uses the
counts of the \emph{completed} construction pass over $x_{1:n}$, whereas the
sequential conditional $Q^{\mathrm{CTW}}_{L}(z_t\mid z_{1:t-1})$ uses
only the counts accumulated up to time $t-1$.
$Q^{\mathrm{CTW}}_{L}$ is a pmf on
$\mathcal A^n$, and so is $\widehat P^{\mathrm{CTW}}_{x_{1:n},L}$ for each
fixed construction string $x_{1:n}$, since in either case the one-step
probabilities sum to one over $\mathcal A$. They differ in what they depend on:
$Q^{\mathrm{CTW}}_{L}$ is a universal code that learns from the very
block it encodes, whereas
$\widehat P^{\mathrm{CTW}}_{x_{1:n},L}$ is determined by the
construction path alone and is therefore independent of the block it
scores whenever that path is drawn independently. The two objects
therefore receive separate treatments in the analysis of
Section~\ref{sec:theory}: the first is controlled by its coding
redundancy and the second by conditioning on the construction path.

The cross-entropy rate is estimated by drawing an independent auxiliary
path $X^{(\Delta)}_{1:n}\sim P_{\theta_0+\Delta}^{(n)}$,
building a frozen CTW scorer from
$X^{(\Delta)}_{1:n}$, and then
scoring the reference path $Z_{1:n}$ of~\eqref{eq:entropy-est}:
\begin{equation}
\widehat{\bar{H}}_n\bigl(P_{\theta_0},P_{\theta_0+\Delta}\bigr)
:= -\tfrac{1}{n}\log
\widehat P^{\mathrm{CTW}}_{X^{(\Delta)}_{1:n},L}(Z_{1:n}).
\label{eq:cross-entropy-est}
\end{equation}
Taking the difference of the two estimates yields the plug-in
per-symbol KL estimator
\begin{equation}
\widehat{\bar{D}}_n\bigl(P_{\theta_0}\,\|\,P_{\theta_0+\Delta}\bigr)
:= \widehat{\bar{H}}_n\bigl(P_{\theta_0},P_{\theta_0+\Delta}\bigr)
- \widehat{\bar{H}}_n\bigl(P_{\theta_0}\bigr).
\label{eq:kl-est}
\end{equation}
Equations~\eqref{eq:entropy-est}-\eqref{eq:kl-est} are
a variant of the universal divergence estimator of
\cite{cai2006universal}, adapted here to local perturbations of
$\theta_0$, with the entropy rate estimated by the
sequential CTW code rather than by a frozen model built from the
reference path itself. Its strong consistency is established in
Lemma~\ref{lem:ctw-kl}, and its finite-sample accuracy
in Proposition~\ref{prop:kl-rate}. The complete CTW-based KL estimator is stated in Algorithm~\ref{alg:ctw-kl}.

\begin{algorithm}[t]
\caption{CTW-based plug-in estimator of
$\bar{D}(P_{\theta_0}\,\|\,P_{\theta_0+\Delta})$}
\label{alg:ctw-kl}
\begin{algorithmic}[1]
\Require Target parameter $\theta_0$; perturbation $\Delta$;
length $n$;
CTW working depth $L$.
\Ensure $\widehat{\bar{D}}_n(P_{\theta_0}\,\|\,P_{\theta_0+\Delta})$.
\State Draw a reference path
$z_{1:n}\sim P_{\theta_0}^{(n)}$.
\State Draw an independent auxiliary path
$x_{1:n}\sim P_{\theta_0+\Delta}^{(n)}$.
\State Run the sequential CTW code on $z_{1:n}$ to obtain
$Q^{\mathrm{CTW}}_{L}(z_{1:n})$.
\State Build a depth-$L$ frozen CTW scorer
$\widehat P^{\mathrm{CTW}}_{x_{1:n},L}$ from $x_{1:n}$.
\State $\widehat{H}_n(P_{\theta_0})
\leftarrow -n^{-1}\log
Q^{\mathrm{CTW}}_{L}(z_{1:n})$.
\State $\widehat{H}_n(P_{\theta_0},P_{\theta_0+\Delta})
\leftarrow -n^{-1}\log
\widehat P^{\mathrm{CTW}}_{x_{1:n},L}(z_{1:n})$.
\State \textbf{return}
$\widehat{\bar{D}}_n(P_{\theta_0}\,\|\,P_{\theta_0+\Delta})
\leftarrow \widehat{H}_n(P_{\theta_0},P_{\theta_0+\Delta})
- \widehat{H}_n(P_{\theta_0})$.
\end{algorithmic}
\end{algorithm}

\subsection{Least-Squares Recovery of the Fisher Information}
\label{subsec:ls-recovery}

The final step inverts the quadratic relation~\eqref{eq:kl-fim-rate}
to recover $I(\theta_0)$ from a set of directional KL estimates. The
recovery follows the least-squares scheme of Berisha and
Hero~\cite{berisha2014empirical}, developed there for i.i.d.\ data with
a graph-based divergence estimator; here it is driven instead by the
CTW-based KL estimator of Section~\ref{subsec:kl-est}, which extends
the construction to non-i.i.d.\ finite-alphabet processes, and is
accompanied by the consistency and
finite-sample guarantees of Section~\ref{sec:theory}.

To recover the full matrix $I(\theta_0)$, the local KL
curvature must be probed in multiple directions, since each KL estimate
provides only one scalar quadratic-form measurement. Fix a step size
$\varepsilon>0$ and select unit vectors
$\{u_j\}_{j=1}^{m}\subset\mathbb{R}^d$. We therefore
instantiate the generic perturbation in Section~\ref{subsec:kl-est} by
the directional perturbations
\begin{equation}
\Delta_j:=\varepsilon u_j,\qquad \|u_j\|_2=1.
\label{eq:directional-perturbation}
\end{equation}
For each $j$, compute
$\widehat{\bar{D}}_j
:=\widehat{\bar{D}}_n(P_{\theta_0}\,\|\,P_{\theta_0+\Delta_j})
=\widehat{\bar{D}}_n(P_{\theta_0}\,\|\,P_{\theta_0+\varepsilon u_j})$ via
Algorithm~\ref{alg:ctw-kl}. Rescaling to remove the quadratic factor
in~\eqref{eq:kl-fim-rate} gives
\begin{equation}
v_j := \tfrac{2}{\varepsilon^2}\,\widehat{\bar{D}}_j
\;\approx\; u_j^\top I(\theta_0)\,u_j,
\qquad j=1,\dots,m,
\label{eq:rescale}
\end{equation}
where the approximation incurs an $O(\varepsilon)$ error from the third-order remainder in~\eqref{eq:kl-fim-rate}; this error is quantified in Lemma~\ref{lem:kl-fim}.

Each measurement~\eqref{eq:rescale} is a quadratic form in
$I(\theta_0)$, and the collection of them determines the FIM. Since
$I(\theta_0)$ is symmetric, expanding the quadratic form gives
\begin{equation*}
u_j^\top I(\theta_0)\,u_j
= \sum_{k=1}^{d} I_{kk}\,u_{jk}^2
+ 2\sum_{1\le k<\ell\le d} I_{k\ell}\,u_{jk}u_{j\ell}.
\end{equation*}
We collect the distinct entries of $I(\theta_0)$ into the vector
\begin{equation}
f
:=\operatorname{vec}_\triangle\!\bigl(I(\theta_0)\bigr)
=\bigl(I_{11},\dots,I_{dd},\,I_{12},I_{13},\dots,I_{(d-1)d}\bigr)^\top
\in\mathbb{R}^{q},
\qquad q := \tfrac{d(d+1)}{2},
\label{eq:f-vector}
\end{equation}
where $\operatorname{vec}_\triangle$ denotes the map that stacks the
diagonal entries followed by the strictly upper-triangular entries of a
symmetric matrix. We write $\operatorname{unvec}_\triangle$ for the
inverse map, which places the off-diagonal coordinates symmetrically in
the upper- and lower-triangular positions.
For each direction, define the feature row
\begin{equation}
U(u_j)^\top := \bigl(u_{j1}^2,\dots,u_{jd}^2,\,
2u_{j1}u_{j2},\dots,2u_{j(d-1)}u_{jd}\bigr),
\label{eq:feature-row}
\end{equation}
so that $u_j^\top I(\theta_0)u_j=U(u_j)^\top f$. The $m$ relations
\eqref{eq:rescale} can then be stacked into the linear regression model
\begin{equation}
v \;\approx\; U f,
\qquad
v=(v_1,\dots,v_m)^\top\in\mathbb{R}^m,\quad U\in\mathbb{R}^{m\times q}.
\label{eq:linear-system}
\end{equation}

Recovering $I(\theta_0)$ thus reduces to solving the linear
system~\eqref{eq:linear-system} in the $q$ unknowns $f$; the system is
overdetermined when $m>q$. When
$\operatorname{rank}(U)=q$, the ordinary least-squares (OLS) solution is
\begin{equation}
\widehat{f}_{\mathrm{ols}}
= \arg\min_{f}\tfrac{1}{2}\|Uf - v\|_2^2
= (U^\top U)^{-1}U^\top v.
\label{eq:ols}
\end{equation}

The matrix $U$ depends only on the chosen directions. It has full
column rank precisely when the rank-one symmetric matrices
$\{u_ju_j^\top\}_{j=1}^m$ span the space of symmetric
$d\times d$ matrices; well-separated directions generally improve its
conditioning. When the directional estimates have heterogeneous variances or are correlated, or when $U^\top U$ is poorly conditioned, generalized or regularized variants of~\eqref{eq:ols} can improve finite-sample efficiency; these are discussed in Section~\ref{subsec:practical}.

It remains to reshape $\widehat{f}$ into a matrix. Although
Assumption~\ref{ass:reg} ensures that the true rate $I(\theta_0)$ is
positive semidefinite (PSD), the unconstrained solution~\eqref{eq:ols}
may fail to be PSD. We therefore project the reshaped estimate onto
the PSD cone:
\begin{equation}
\widehat{I}(\theta_0)
:= \Pi_{\succeq 0}\bigl(\operatorname{unvec}_\triangle(\widehat{f})\bigr)
= \arg\min_{X\succeq 0}\bigl\|X
- \operatorname{unvec}_\triangle(\widehat{f})\bigr\|_F^2,
\label{eq:psd-proj}
\end{equation}
which, if $\operatorname{unvec}_\triangle(\widehat{f})$ has eigenvalue decomposition $V\Lambda V^\top$, can be computed in closed form by eigenvalue clipping as
\begin{equation}
\Pi_{\succeq 0}\bigl(V\Lambda V^\top\bigr)
= V\max(\Lambda,0)\,V^\top,
\label{eq:eig-clip}
\end{equation}
at a cost of $O(d^3)$ operations~\cite{higham1988computing}. Since $I(\theta_0)$
itself lies in the PSD cone, this projection is
non-expansive in the Frobenius norm
(Lemma~\ref{lem:psd-proj}): it never increases the estimation error,
and reduces to the identity whenever the reshaped estimate is already
PSD. As an alternative, the PSD constraint can be imposed
during the fit through a convex semidefinite program, and
coordinate-direction anchoring of the diagonal
entries~\cite{berisha2014empirical} can further stabilize
off-diagonal recovery; these options are deferred to
Section~\ref{subsec:practical}.

The complete procedure (directional probing, the CTW--KL subroutine, least-squares recovery, and PSD projection) is summarized in Algorithm~\ref{alg:fim-recovery}. Its consistency is established in Theorem~\ref{thm:consistency}, and its convergence rate in Theorem~\ref{thm:fim-rate}.

\begin{algorithm}[t]
\caption{Reconstruction of $I(\theta_0)$ from CTW-based
directional KL estimates}
\label{alg:fim-recovery}
\begin{algorithmic}[1]
\Require Target parameter $\theta_0\in\mathbb{R}^d$; step size
$\varepsilon>0$; unit directions $\{u_j\}_{j=1}^{m}$; length $n$;
CTW working depth $L$.
\Ensure Estimated Fisher information rate $\widehat{I}(\theta_0)$.
\State Draw a reference path $z_{1:n}\sim P_{\theta_0}^{(n)}$.
\State Run the sequential CTW code on $z_{1:n}$ to obtain
$Q^{\mathrm{CTW}}_{L}(z_{1:n})$.
\State $\widehat{\bar H}_n(P_{\theta_0})\leftarrow
-n^{-1}\log
Q^{\mathrm{CTW}}_{L}\bigl(z_{1:n}\bigr)$.
\Comment{reference entropy, shared across all directions}
\For{$j=1$ \textbf{to} $m$}
  \State $\Delta_j\leftarrow \varepsilon u_j$.
  \State Draw an independent auxiliary path
  $x^{(j)}_{1:n}\sim P_{\theta_0+\Delta_j}^{(n)}
  =P_{\theta_0+\varepsilon u_j}^{(n)}$.
  \State Build a depth-$L$ frozen CTW scorer
  $\widehat P^{\mathrm{CTW}}_{x^{(j)}_{1:n},L}$ from $x^{(j)}_{1:n}$.
  \State $\widehat{\bar H}_n(P_{\theta_0},P_{\theta_0+\Delta_j})\leftarrow
  -n^{-1}\log
  \widehat P^{\mathrm{CTW}}_{x^{(j)}_{1:n},L}\bigl(z_{1:n}\bigr)$.
  \State $\widehat{\bar D}_j \leftarrow
  \widehat{\bar H}_n(P_{\theta_0},P_{\theta_0+\Delta_j)}
  - \widehat{\bar H}_n(P_{\theta_0})$.
  \State $v_j \leftarrow \tfrac{2}{\varepsilon^2}\,\widehat{\bar D}_j$.
  \State Form the feature row $U(u_j)^\top$ as in~\eqref{eq:feature-row}.
\EndFor
\State Stack $v\leftarrow(v_1,\dots,v_m)^\top$ and
$U\leftarrow[U(u_1)^\top;\dots;U(u_m)^\top]$.
\State Solve for $\widehat{f}$ by least squares,
\eqref{eq:ols}.
\State Reshape $\widehat{I}^{\mathrm{LS}}
\leftarrow\operatorname{unvec}_\triangle(\widehat{f})$.
\State Enforce PSD: $\widehat{I}(\theta_0)\leftarrow
\Pi_{\succeq 0}(\widehat{I}^{\mathrm{LS}})$ as in~\eqref{eq:eig-clip}.
\State \textbf{return} $\widehat{I}(\theta_0)$.
\end{algorithmic}
\end{algorithm}

The arrangement of Algorithm~\ref{alg:fim-recovery} reflects a
structural property of the directional estimates. The reference path
$z_{1:n}\sim P_{\theta_0}^{(n)}$ and its sequential CTW code
$Q^{\mathrm{CTW}}_{L}(z_{1:n})$, and hence the entropy estimate
$\widehat{\bar H}_n(P_{\theta_0})$, do not
depend on the probing direction: every estimate $\widehat{\bar D}_j$
subtracts the same $\widehat{\bar H}_n(P_{\theta_0})$ and encodes the same
reference path $z_{1:n}$ under a direction-specific frozen CTW scorer
$\widehat P^{\mathrm{CTW}}_{x^{(j)}_{1:n},L}$. The reference quantities are therefore
computed once, outside the loop, and reused across all $m$ directions.
The total cost is dominated by the construction of $1+m$ CTW
scorers, each linear in $n$, rather than the $2m$ scorers of a
naive per-direction implementation.

The measurements~\eqref{eq:rescale} probe the local KL curvature from
one side only: each direction $u_j$ is explored at
$\theta_0+\varepsilon u_j$ alone, so the systematic
error of $v_j$ inherits the odd part of the local
expansion~\eqref{eq:kl-fim-rate}. A symmetric variant
removes that odd part at the cost of one additional
auxiliary path per direction: run Algorithm~\ref{alg:ctw-kl} at both
perturbations $\pm\varepsilon u_j$, reusing the same reference path
$z_{1:n}$ and the same entropy estimate, and form the
averaged measurement
\begin{equation}
v^{(2)}_j:=\frac{1}{\varepsilon^2}
\Bigl[\widehat{\bar D}_n(P_{\theta_0}\,\|\,P_{\theta_0+\varepsilon u_j})
+\widehat{\bar D}_n(P_{\theta_0}\,\|\,P_{\theta_0-\varepsilon u_j})\Bigr].
\label{eq:central-meas}
\end{equation}
The remainder of Algorithm~\ref{alg:fim-recovery} being unchanged;
write $\widehat I^{(2)}_{n,\varepsilon}(\theta_0)$ for the resulting
estimator. Averaging over $\pm\varepsilon u_j$ cancels the odd-order
terms of the expansion, so the systematic error of $v^{(2)}_j$ is one
order smaller in $\varepsilon$ than that of $v_j$, while consistency is
unaffected. The cost rises from $1+m$ to $1+2m$ CTW scorers.
The two schemes therefore trade simulation cost against
accuracy, and both are retained in the sequel.
Section~\ref{subsec:rate} gives a convergence rate for each and
quantifies the resulting acceleration.

\begin{remark}[Higher-order differencing]
\label{rem:higher-order}
In principle, bias terms of successively higher order can be cancelled
by combining KL estimates taken at several step sizes, as in Richardson
extrapolation~\cite{stoer2002}. Each additional order, however, requires simulator runs
at further step sizes and stronger smoothness of the
transition probabilities in $\theta$. The weights of the extrapolated combination
also grow with the order and amplify the noise of the individual KL
estimates. Meanwhile, each further order improves the convergence rate
by a diminishing margin. Differencing reduces only the bias, and the
statistical fluctuation of the KL estimates eventually dominates the
error. Since the central
difference is the only refinement obtained by symmetry alone, we
restrict attention to the one-sided and central schemes.
\end{remark}

\begin{remark}[Choice of the direction set]
\label{rem:directions}
The constant $\kappa_U:=1/\sigma_{\min}(U)$ governs how
measurement errors propagate through the least-squares step
(Lemma~\ref{lem:ls-continuity}). It depends only on the probing
directions $\{u_j\}_{j=1}^m$, and not on $n$ or $\varepsilon$, so the
conditioning of the recovery is entirely under the user's control. A
minimal full-rank direction design ($m=q$) is
\[
\{e_k\}_{k=1}^d
\cup
\{(e_k+e_\ell)/\sqrt{2}\colon 1\le k<\ell\le d\},
\]
and well-separated directions keep $\sigma_{\min}(U)$
away from zero, hence $\kappa_U$ is small.
\end{remark}

\subsection{Practical Refinements (move to appendix?)}
\label{subsec:practical}

The estimator analyzed in Section~\ref{sec:theory} uses ordinary least squares followed by post-fit projection, which is the simplest instance of the recovery step. Several refinements can improve finite-sample performance under specific conditions. We summarize them here as practical guidance, while the consistency proof below is stated for the canonical OLS-projection version.

\begin{enumerate}
\item \emph{Generalized least squares.}
The directional estimates $v_j$ may have heterogeneous variances, for instance when some directions $u_j$ probe flatter regions of the log-likelihood and are therefore estimated less reliably; they are also correlated by construction, because all directions share the reference path and subtract the same entropy estimate. In this case, generalized least squares improves efficiency. With $W=\widehat\Sigma^{-1}$, where $\widehat\Sigma$ is the sample covariance of the measurement vector $v$ over $R$ independent replications of the full procedure, the solution
\begin{equation}
\widehat{f}_{\mathrm{gls}}
= (U^\top W U)^{-1}U^\top W\,v,
\qquad W = \widehat\Sigma^{-1},
\label{eq:wls}
\end{equation}
can downweight the noisier measurements and exploit their correlation; when $R$ is small relative to $m$, a diagonal $\widehat\Sigma$ recovers weighted least squares with $w_j\propto1/\widehat{\operatorname{Var}}(v_j)$.

\item \emph{Tikhonov regularization.}
Because $U$ is built from the user-chosen directions, ill-conditioning of $U^\top U$ can usually be avoided by selecting well-separated directions. If this ill-conditioning nonetheless arises, as with nearly collinear $u_j$, a Tikhonov term can stabilize the solution~\cite{tikhonov1977}:
\begin{equation}
\widehat{f}_{\lambda}
= (U^\top W U + \lambda I)^{-1}U^\top W\,v,
\qquad \lambda > 0.
\label{eq:ridge}
\end{equation}
A small default regularization term such as $\lambda = 10^{-3}\,\operatorname{tr}(U^\top U)/q$ provides robustness with negligible bias.

\item \emph{PSD-constrained fitting.}
The post-fit projection~\eqref{eq:psd-proj} restores positive semidefiniteness after the unconstrained fit. The constraint can instead be imposed \emph{during} the fit by solving the convex program
\begin{equation}
\min_{f}\ \frac{1}{2}\bigl\|U f - v\bigr\|_2^2
\quad\text{s.t.}\quad
\operatorname{unvec}_\triangle(f)\succeq 0,
\label{eq:sdp}
\end{equation}
a quadratic objective under a semidefinite constraint~\cite{boyd2004,vandenberghe1996semidefinite}. This formulation avoids the eigenvalue clipping of~\eqref{eq:eig-clip} and can reduce small-sample variance when the unconstrained estimate lies far from the PSD cone, at the cost of solving a semidefinite program. The generalized or regularized objective may replace the least-squares term without affecting convexity. For moderate $d$, up to a few tens, the program can be solved in seconds by off-the-shelf solvers, whereas for larger $d$ the post-fit projection is computationally lighter.

\item \emph{Diagonal anchoring.}
Berisha and Hero~\cite{berisha2014empirical} exploit coordinate directions to anchor the diagonal of the FIM. Including the coordinate directions $\{e_k\}_{k=1}^{d}$ among the $\{u_j\}$ yields direct estimates $\widehat{I}_{kk} := v_j$ (for the indices $j$ with $u_j=e_k$) of the diagonal entries, which are then held fixed while the remaining entries are recovered by a PSD-constrained fit,
\begin{equation}
\min_{f}\ \frac{1}{2}\bigl\|U f - v\bigr\|_2^2
\quad\text{s.t.}\quad
\operatorname{unvec}_\triangle(f)\succeq 0,\quad
f^{(k)} = \widehat{I}_{kk},\ k=1,\dots,d.
\label{eq:anchor-hard}
\end{equation}
The equality constraints inject direct information about the diagonal, improving conditioning and stabilizing off-diagonal recovery. When the diagonal estimates are themselves noisy, a soft-constrained variant preserves feasibility, i.e.,
\begin{equation}
\min_{\operatorname{unvec}_\triangle(f)\succeq 0}\
\frac{1}{2}\bigl\|U f - v\bigr\|_2^2
+ \frac{\mu}{2}\sum_{k=1}^{d}\bigl(f^{(k)} - \widehat{I}_{kk}\bigr)^2,
\qquad \mu > 0,
\label{eq:anchor-soft}
\end{equation}
which remains convex and often performs better when the $\widehat{I}_{kk}$ are uncertain.
\end{enumerate}
These refinements operate on the same directional KL measurements as the canonical estimator; a full asymptotic treatment of the generalized least-squares and regularized variants is deferred to future work.

\section{Consistency and Finite-Sample Analysis}
\label{sec:theory}

This section establishes the consistency of the proposed estimator and its finite-sample convergence rates. The analysis is organized around the pipeline of Algorithms~\ref{alg:ctw-kl} and~\ref{alg:fim-recovery}: Section~\ref{subsec:consistency} proves that the directional KL estimates converge almost surely and propagates the limits through the recovery (Theorem~\ref{thm:consistency}); Section~\ref{subsec:rate} quantifies the accuracy jointly in the trajectory length and the step size, for both differencing schemes (Theorem~\ref{thm:fim-rate}). As anticipated in Section~\ref{subsec:practical}, we analyze the canonical estimator that uses ordinary least squares followed by post-fit eigenvalue projection.

Throughout, $\theta_0$ denotes the fixed target parameter from
Assumption~\ref{ass:reg}. Fix a set of unit directions
$\{u_j\}_{j=1}^{m}\subset\mathbb{R}^d$, and let
$U\in\mathbb{R}^{m\times q}$, $q=d(d+1)/2$, be the feature matrix whose
rows are the vectors $U(u_j)^\top$ of~\eqref{eq:feature-row}. We write
$\widehat{\bar{D}}_n(P_{\theta_0}\,\|\,P_{\theta_0+\Delta})$
for the CTW-based plug-in KL estimator at a generic perturbation
$\Delta$ and
$\widehat{I}_{n,\varepsilon}(\theta_0)$ for the OLS-projection
estimator, and $\widehat I^{(2)}_{n,\varepsilon}(\theta_0)$
for its central-difference variant of
Section~\ref{subsec:ls-recovery}. For each fixed finite or countable collection of parameter
values used in the analysis, we place the corresponding independent
infinite simulator trajectories on their product probability space
and construct the length-$n$ estimators from their first $n$ symbols.
Thus, all estimators indexed by $n$ are defined on a common
probability space, as required for the almost-sure statements below.

\subsection{Consistency}
\label{subsec:consistency}

We first establish that, for a fixed step size $\varepsilon$, the
estimator converges almost surely as $n\to\infty$, and then show that
these limits approach the true Fisher information rate
as $\varepsilon\downarrow0$. The argument rests on the almost-sure
consistency of the directional KL estimates (Lemma~\ref{lem:ctw-kl}),
combined with the deterministic lemmas of Appendix~\ref{app:aux}.

\subsubsection{CTW-based KL estimator}
The CTW-based KL estimate is strongly consistent for each fixed
perturbation $\Delta$.

\begin{lemma}[Consistency of the CTW-based KL estimator]
\label{lem:ctw-kl}
Let $\Delta\in\mathbb R^d$ be such that
$\theta_0+\Delta\in\mathcal N(\theta_0)$ and
$\bar{D}(P_{\theta_0}\,\|\,P_{\theta_0+\Delta})<\infty$. Let the CTW
working depth satisfy $L\ge L^\star$. Then
\begin{equation}
\widehat{\bar{D}}_n\bigl(P_{\theta_0}\,\|\,P_{\theta_0+\Delta}\bigr)
\;\xrightarrow[n\to\infty]{\textnormal{a.s.}}\;
\bar{D}\bigl(P_{\theta_0}\,\|\,P_{\theta_0+\Delta}\bigr).
\label{eq:ctw-kl-as}
\end{equation}
\end{lemma}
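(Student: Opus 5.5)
The proof will split the estimator along~\eqref{eq:kl-est} and~\eqref{eq:kl-split}, and show separately that
\[
\widehat{\bar H}_n(P_{\theta_0})\to\bar H(P_{\theta_0})
\quad\text{and}\quad
\widehat{\bar H}_n(P_{\theta_0},P_{\theta_0+\Delta})\to\bar H(P_{\theta_0},P_{\theta_0+\Delta})
\]
almost surely. The two rates are finite, since $\bar D<\infty$ and every transition probability lies in $[\delta,1-\delta]$. Subtracting the two limits then gives~\eqref{eq:ctw-kl-as}.

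\emph{Entropy term.} Write $p^{(n)}:=p^{(n)}_{\theta_0}$. The sequential code satisfies the standard CTW redundancy bound for tree sources whose tree has depth at most $L$. Since $L\ge L^\star$, the true source is such a source, so
\[
-\log Q^{\mathrm{CTW}}_L(z_{1:n})+\log p^{(n)}(z_{1:n})\le \Gamma_L+\tfrac{|\mathcal A|^{L^\star}(|\mathcal A|-1)}{2}\log n+O(1)
\]
uniformly in $z_{1:n}$, with $\Gamma_L$ the model cost. Dividing by $n$, the upper deviation vanishes pathwise. For the lower side I would use Barron's lemma. Since $Q^{\mathrm{CTW}}_L$ is a pmf on $\mathcal A^n$, Markov's inequality gives $\Pr\{\log Q^{\mathrm{CTW}}_L(Z_{1:n})-\log p^{(n)}(Z_{1:n})\ge 2\log n\}\le n^{-2}$. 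By Borel--Cantelli, $-\tfrac1n\log Q^{\mathrm{CTW}}_L\ge-\tfrac1n\log p^{(n)}-o(1)$ eventually, almost surely. Finally, the Shannon--McMillan--Breiman theorem for the stationary ergodic chain of Assumption~\ref{ass:source} gives $-\tfrac1n\log p^{(n)}(Z_{1:n})\to\bar H(P_{\theta_0})$ almost surely.

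\emph{Cross-entropy term.} I would use the factorization~\eqref{eq:frozen-ctw-factorization}. The warm-up factor $C_{x,L}(z_{1:L})$ and every one-step probability are bounded below by~\eqref{eq:ctw-positivity-method} with $t\le n$. So $-\tfrac1n\log C_{x,L}=O(L\log n/n)\to0$, and the remaining term is
\[
-\frac1n\sum_{t=L+1}^n\log\widehat T^{\mathrm{CTW}}_{X_{1:n},L}(Z_t\mid Z_{t-L:t-1})=-\sum_{s\in\mathcal A^L,a\in\mathcal A}\hat\pi^Z_n(s,a)\log\widehat T^{\mathrm{CTW}}_{X_{1:n},L}(a\mid s),
\]
where $\hat\pi^Z_n$ is the empirical $(L+1)$-block frequency of $Z$. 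The ergodic theorem for the depth-$L$ state chain gives $\hat\pi^Z_n\to\pi_{\theta_0}$ almost surely, the true stationary $(L+1)$-block law. The key claim is then $\widehat T^{\mathrm{CTW}}_{X_{1:n},L}(a\mid s)\to p_{\theta_0+\Delta}(a\mid s)$ almost surely for every $s\in\mathcal A^L$, which I would justify as follows.
\begin{itemize}
\item Every $s$ has positive stationary probability under $\theta_0+\Delta$, by uniform regularity and irreducibility.
\item By the ergodic theorem for $X$, the leaf KT estimates converge to the true conditionals, which at depth $L\ge L^\star$ depend only on the suffix of length $L^\star$.
\item The internal KT estimates at depths $\ell\ge L^\star$ converge to the same limit.
\item The frozen CTW conditional is a convex combination over pruned subtrees of KT predictors along the path of $s$. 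The combination weights are posterior weights that concentrate, as counts grow, on trees that are at least as deep as the true tree along $s$.
\end{itemize}
Since all limits lie in $[\delta,1-\delta]$ and the logarithm is continuous there, the sum converges to $-\sum\pi_{\theta_0}(s,a)\log p_{\theta_0+\Delta}(a\mid s)$, which equals $\bar H(P_{\theta_0},P_{\theta_0+\Delta})$ by the Markov structure. The two trajectories live on the product space, so both almost-sure events hold jointly.

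\emph{Main obstacle.} The hard step is convergence of the frozen CTW one-step predictor. The weighted mixture is not a single KT estimate, and I must rule out that weight stuck on trees that are too shallow keeps the conditional biased. I would first argue that any mixture component uses a context that is a suffix of $s$. Components using a suffix of length at least $L^\star$ have the correct limit. Components using a shorter suffix that is not sufficient lose posterior mass at an exponential rate in the counts, a relative-entropy gap times $n$, against a polynomial KT penalty. So their total weight tends to zero almost surely. The ratio form $Q(\cdot\,a)/Q(\cdot)$ of the conditional is then handled node by node along the path of $s$.
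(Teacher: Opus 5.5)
Your entropy-term argument is the paper's own: the pointwise CTW redundancy bound for the upper side, Markov's inequality on $Q^{\mathrm{CTW}}_L/p^{(n)}_{\theta_0}$ with Borel--Cantelli for the lower side, then Shannon--McMillan--Breiman. The two proofs differ on the cross term, and your route there is also correct.

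The paper never analyses the frozen scorer at that point. It checks that both sources are of order at most $L^\star$, with conditional probabilities below one and finite divergence. It then identifies the frozen scorer with the cross estimator of \cite{cai2006universal}, up to the handling of the first $L$ construction positions, and invokes their Theorem~2. You prove that limit directly, using:
\begin{itemize}
\item the factorization \eqref{eq:frozen-ctw-factorization};
\item the ergodic theorem for the $(L+1)$-block frequencies of $Z$;
\item almost-sure convergence of $\widehat T^{\mathrm{CTW}}_{X_{1:n},L}(\cdot\mid s)$, obtained from the suffix-path mixture of KT predictors. The paper only introduces this representation later, as \eqref{eq:mixture-rep}, for the rate analysis.
\end{itemize}

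What each buys: the citation is shorter and needs only conditional probabilities strictly below one (Remark~\ref{rem:assumption-match}). Your argument uses the lower bound $\delta$ of Assumption~\ref{ass:reg}(i), for positive occupancy under $\theta_0+\Delta$ and for continuity of $\log$. That bound is available, so this costs nothing. In exchange your proof is self-contained and applies to the estimator exactly as defined here, with no matching of conventions against Cai et al. It also shows why consistency needs no separation gap.

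For that last point, write the decay step explicitly as a comparison with a sufficient refinement of the insufficient suffix $u$, not with a one-level split:
\begin{itemize}
\item use $\beta^u\le P_e^u/\prod_cP_w^{cu}$;
\item use $\prod_cP_w^{cu}\ge 2^{-K}\prod_vP_e^v$, with the product over the leaves of the complete depth-$L^\star$ subtree below $u$ and $K$ its number of nodes.
\end{itemize}
The log-ratio divided by $n$ then tends almost surely to $-\pi_{\theta_0+\Delta}(u)$ times the conditional mutual information between the next symbol and the depth-$L^\star$ context, given that the past ends in $u$. This quantity is strictly negative whenever $u$ is insufficient. A one-level comparison of $P_e^u$ with $\prod_cP_e^{cu}$ would instead need the one-step mutual information to be positive, which is Assumption~\ref{ass:gap} and is not assumed in this lemma.

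A minor point: the lower bound for the frozen one-step probabilities is \eqref{eq:frozen-positivity}, not \eqref{eq:ctw-positivity-method}. Your $O(L\log n/n)$ warm-up estimate is unchanged.
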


The proof verifies the conditions of
\cite[Thm.~2]{cai2006universal} for the cross-entropy
estimate and establishes the entropy limit from the
universality of the sequential CTW code; it is given in
Appendix~\ref{app:cons-proof}.

\begin{remark}
\label{rem:assumption-match}
The estimator of \cite{cai2006universal} requires only
$\Pr_\vartheta(Y_t=a\mid Y_{t-L^\star:t-1}=s)<1$ for every relevant
parameter value $\vartheta$, context $s$, and symbol $a$, whereas
Assumption~\ref{ass:reg}(i) imposes the two-sided bound
$\Pr_\vartheta(Y_t=a\mid Y_{t-L^\star:t-1}=s)
\in[\delta,1-\delta]$. The stronger lower bound is not needed for
Lemma~\ref{lem:ctw-kl}; it keeps the transition
probabilities uniformly bounded away from zero over
$\mathcal N(\theta_0)$, which the finite-sample analysis of
Section~\ref{subsec:rate} uses throughout. The Taylor expansion in
Lemma~\ref{lem:kl-fim} is controlled primarily by the smoothness and
bounded-derivative requirements in Assumption~\ref{ass:reg}(ii).
\end{remark}

\subsubsection{Main consistency result}
Lemma~\ref{lem:ctw-kl}, combined with the KL-FIM
expansion (Lemma~\ref{lem:kl-fim}) and the deterministic recovery
lemmas of Appendix~\ref{app:aux} (Lemmas~\ref{lem:ls-continuity}
and~\ref{lem:psd-proj}), yields strong consistency in an iterated
double-limit sense.

\begin{theorem}[Iterated strong consistency]
\label{thm:consistency}
Suppose Assumptions~\ref{ass:sim}--\ref{ass:reg} hold, let the
CTW working depth satisfy $L\ge L^\star$, and let
$\{u_j\}_{j=1}^m\subset\mathbb R^d$ be unit directions such that $U$
has full column rank $q$. Let $\varepsilon_0$ be as in
Lemma~\ref{lem:kl-fim}. Then:

\textnormal{(i)} for every fixed $\varepsilon\in(0,\varepsilon_0]$,
\begin{equation}
\widehat{I}_{n,\varepsilon}(\theta_0)
\;\xrightarrow[n\to\infty]{\textnormal{a.s.}}\;
\Pi_{\succeq0}\bigl(I(\theta_0)+E_U(\varepsilon)\bigr),
\label{eq:thm-step1}
\end{equation}
where $E_U(\varepsilon)\in\mathbb{R}^{d\times d}_{\mathrm{sym}}$ is
deterministic and
$\|E_U(\varepsilon)\|_F\le C\,\varepsilon$ for a
constant $C<\infty$ independent of $\varepsilon$;

\textnormal{(ii)} consequently, with the inner limit
understood in the almost-sure sense of \textnormal{(i)},
\begin{equation}
\lim_{\varepsilon\downarrow0}\;\lim_{n\to\infty}\;
\widehat{I}_{n,\varepsilon}(\theta_0) = I(\theta_0)
.
\label{eq:thm-step2}
\end{equation}
\end{theorem}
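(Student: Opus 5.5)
Part (i) follows by pushing the almost-sure limits of Lemma~\ref{lem:ctw-kl} through the two deterministic maps of the recovery step, namely the OLS map $v\mapsto(U^\top U)^{-1}U^\top v$ followed by $\operatorname{unvec}_\triangle$, and the projection $\Pi_{\succeq0}$. Fix $\varepsilon\in(0,\varepsilon_0]$; shrinking $\varepsilon_0$ if needed, we may assume $\theta_0+\varepsilon u_j\in\mathcal N(\theta_0)$ for all $j$. Lemma~\ref{lem:kl-fim} gives $\bar D(P_{\theta_0}\|P_{\theta_0+\varepsilon u_j})\le\tfrac12\varepsilon^2\|I(\theta_0)\|+C_3\varepsilon^3<\infty$, so the hypotheses of Lemma~\ref{lem:ctw-kl} hold for each $j$. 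Hence, for each $j=1,\dots,m$, $\widehat{\bar D}_j\to\bar D(P_{\theta_0}\|P_{\theta_0+\varepsilon u_j})$ almost surely. All trajectories live on the common product space fixed at the start of this section, and there are finitely many directions, so intersecting the $m$ probability-one events yields one full-measure event on which $v\to v^\infty(\varepsilon)$. Here $v^\infty_j:=2\varepsilon^{-2}\bar D(P_{\theta_0}\|P_{\theta_0+\varepsilon u_j})$.

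By Lemma~\ref{lem:kl-fim}, $v^\infty_j=u_j^\top I(\theta_0)u_j+\rho_j(\varepsilon)$ with $\rho_j(\varepsilon):=2\varepsilon^{-2}r(\varepsilon u_j)$, so $|\rho_j(\varepsilon)|\le 2C_3\varepsilon$. In vector form this reads $v^\infty=Uf+\rho(\varepsilon)$ with $\|\rho(\varepsilon)\|_2\le 2C_3\sqrt m\,\varepsilon$. Since $U$ has full column rank, $(U^\top U)^{-1}U^\top U=I_q$, and therefore
\[
(U^\top U)^{-1}U^\top v^\infty=f+(U^\top U)^{-1}U^\top\rho(\varepsilon).
\]
We define $E_U(\varepsilon):=\operatorname{unvec}_\triangle\bigl((U^\top U)^{-1}U^\top\rho(\varepsilon)\bigr)$. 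This is deterministic and symmetric. Using Lemma~\ref{lem:ls-continuity}, which bounds the OLS operator norm by $\kappa_U=1/\sigma_{\min}(U)$, together with $\|\operatorname{unvec}_\triangle(g)\|_F\le\sqrt2\|g\|_2$, we get $\|E_U(\varepsilon)\|_F\le 2\sqrt{2m}\,C_3\kappa_U\,\varepsilon=:C\varepsilon$, with $C$ independent of $\varepsilon$.

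The maps $v\mapsto\operatorname{unvec}_\triangle((U^\top U)^{-1}U^\top v)$ and $\Pi_{\succeq0}$ are both continuous. The first is linear; the second is the projection onto a closed convex set, hence $1$-Lipschitz in Frobenius norm, which is the content of Lemma~\ref{lem:psd-proj}. Continuity of the composition on the full-measure event gives $\widehat I_{n,\varepsilon}(\theta_0)\to\Pi_{\succeq0}(I(\theta_0)+E_U(\varepsilon))$ almost surely, which is \eqref{eq:thm-step1}.

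For (ii), $I(\theta_0)\succeq0$, being a limit of covariance matrices $J^{(n)}_{\theta_0}/n$, so $\Pi_{\succeq0}(I(\theta_0))=I(\theta_0)$. Nonexpansiveness then gives
\[
\|\Pi_{\succeq0}(I(\theta_0)+E_U(\varepsilon))-I(\theta_0)\|_F\le\|E_U(\varepsilon)\|_F\le C\varepsilon\to0 .
\]
Combined with (i), this yields the iterated limit \eqref{eq:thm-step2}, with the outer limit deterministic.

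The only delicate point is the bookkeeping behind the almost-sure statement. We must check that the finitely many directional KL estimates converge simultaneously; this is immediate, since a finite intersection of probability-one events has probability one. We must also check that the shared reference path, which correlates the $v_j$, causes no difficulty; it does not, because Lemma~\ref{lem:ctw-kl} is applied marginally for each $j$. The remaining work, namely the constant $C$ and the existence of $I(\theta_0)$, is supplied directly by Lemma~\ref{lem:kl-fim}.
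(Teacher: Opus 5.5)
Your proposal is correct and follows the paper's proof essentially step by step. It uses simultaneous almost-sure convergence of the $m$ directional estimates (Lemma~\ref{lem:ctw-kl} plus a finite intersection of events), the deterministic Taylor bias $2C_3\varepsilon$ from Lemma~\ref{lem:kl-fim}, the same $E_U(\varepsilon)$ with the same constant $2\sqrt{2m}\,C_3\kappa_U$ via Lemma~\ref{lem:ls-continuity}, and non-expansiveness of $\Pi_{\succeq0}$ for both parts. Your hedge about shrinking $\varepsilon_0$ is unnecessary, since the proof of Lemma~\ref{lem:kl-fim} already chooses $\varepsilon_0$ with $\overline{B(\theta_0,\varepsilon_0)}\subset\mathcal N(\theta_0)$.
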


The proof is given in
Appendix~\ref{app:cons-proof}. For fixed $\varepsilon$, the $m$
directional estimates converge almost surely (Lemma~\ref{lem:ctw-kl});
the limits pass through the least-squares map and the PSD projection
by the Lipschitz bounds of Lemmas~\ref{lem:ls-continuity}
and~\ref{lem:psd-proj}, and the residual matrix $E_U(\varepsilon)$
collects the Taylor bias of Lemma~\ref{lem:kl-fim}.
The outer limit is deterministic, since
$E_U(\varepsilon)$ is also deterministic. The same argument covers the
central-difference variant $\widehat I^{(2)}_{n,\varepsilon}(\theta_0)$:
each of its two directional KL estimates converges almost surely
(Lemma~\ref{lem:ctw-kl} applied at $\theta_0\pm\varepsilon u_j$), and
the symmetrized Taylor bias is again $O(\varepsilon)$, so
Theorem~\ref{thm:consistency} holds unchanged.

\begin{remark}[The iterated limit and the $\varepsilon$--$n$ coupling]
\label{rem:eps-n-coupling}
Theorem~\ref{thm:consistency} is an iterated limit: $n\to\infty$ at
each fixed $\varepsilon$ (using the almost-sure
consistency of Lemma~\ref{lem:ctw-kl}), followed by
$\varepsilon\downarrow0$ to remove the Taylor
bias. The order matters: the inner limit is
almost sure at each fixed $\varepsilon$, whereas the outer one is
deterministic, since the residual matrix $E_U(\varepsilon)$ of part
\textnormal{(i)} is deterministic.
\end{remark}

\subsection{Finite-Sample Rate}
\label{subsec:rate}

Theorem~\ref{thm:consistency} concerns two limits taken separately; in practice the step size must shrink with the sample size, $\varepsilon=\varepsilon_n\downarrow0$, and the accuracy of the recovered matrix is governed jointly by $n$ and $\varepsilon$. The coupling is delicate because the recovery step~\eqref{eq:rescale} divides each KL estimate by $\varepsilon^2$: an estimation error that does not shrink with $\varepsilon$ is amplified without control as $\varepsilon_n\downarrow0$, so a useful rate requires the error itself to \emph{scale with} $\varepsilon$. This subsection establishes that scaling (Proposition~\ref{prop:kl-rate}) and propagates it through the recovery, both for the one-sided measurements of Algorithm~\ref{alg:fim-recovery} and for the central-difference refinement of Section~\ref{subsec:ls-recovery}. The resulting mean-error rates, $O(n^{-1/4})$ for the former and $O(n^{-1/3})$ for the latter, are collected in Theorem~\ref{thm:fim-rate}.
Throughout this subsection and the corresponding
appendices, the working depth satisfies $L\ge1$ and the trajectory
length satisfies $n\ge2L$.

\subsubsection{Structural conditions on the source}
The rate is stated under three structural conditions on the source,
beyond the standing Assumptions~\ref{ass:source}--\ref{ass:reg}: a
stable tree topology, non-degenerate context occupancy, and a strictly
positive splitting margin. They are used only in the
finite-sample analysis; the consistency results of
Section~\ref{subsec:consistency} do not require them.
They are properties of the
model family alone, not of the estimator, and each is checkable by
inspecting the transition law.

The conditions are phrased in terms of the context
tree of Section~\ref{subsec:kl-est}, whose working depth $L$ guarantees
that it contains every context the source actually uses; typically the
source needs only part of it. Call a context $s$ \emph{sufficient} if
the conditional law of the next symbol, given any past ending in $s$,
depends on that past only through $s$: once $s$ is read, reading
further back changes nothing. By Assumption~\ref{ass:source} every
context of length $L^\star$ is sufficient; shorter contexts may or may
not be. A sufficient context is \emph{minimal} if removing its oldest
symbol leaves a context that is no longer sufficient. The
\emph{minimal context tree} $T^\star(\theta)$ is the subtree of the
working tree whose \emph{leaves} are the minimal sufficient contexts
and whose \emph{internal nodes} $\mathcal I(T^\star)$ are the
insufficient contexts: at a leaf the conditional law is fully
determined, while at an internal node some longer
suffix of the past still changes it. It is the exact memory
structure of $P_\theta$, of depth at most $L^\star$. Write
$T^\star:=T^\star(\theta_0)$.

\begin{assumption}[Fixed local topology]
\label{ass:topology}
The minimal context tree $T^\star$ is the same for every
$\theta\in\mathcal N(\theta_0)$.
\end{assumption}

\begin{assumption}[Uniform occupancy]
\label{ass:occupancy}
For a context $s$, let
$\pi_\theta(s):=\Pr_\theta(Y_{t-|s|:t-1}=s)$ be its stationary
\emph{occupancy}: the probability that the $|s|$ most recent symbols
equal $s$. The \emph{minimum occupancy} over full-depth contexts is
uniformly positive:
\[
\pi_{\min}
:=\inf_{\theta\in\mathcal N(\theta_0)}\min_{w\in\mathcal A^{L}}\pi_\theta(w)>0.
\]
\end{assumption}

\begin{assumption}[Uniform separation gap]
\label{ass:gap}
For an internal node $s\in\mathcal I(T^\star)$ of length $\ell=|s|$, let
$C_t:=Y_{t-\ell-1}$ be the past symbol that selects which depth-$(\ell+1)$
child of $s$ is active at time $t$, and let
$I_\theta(Y_t;C_t\mid S_t^{(\ell)}=s)$ be the mutual information between the
next symbol $Y_t$ and $C_t$ under the stationary law $P_\theta$,
conditioned on the event $\{S_t^{(\ell)}=s\}$, where
$S_t^{(\ell)}:=Y_{t-\ell:t-1}$ is the length-$\ell$ context at time
$t$. If $T^\star$ has no internal node, as it happens
for memoryless sources, the condition below is vacuous under the
convention $\inf\varnothing=+\infty$. The \emph{separation gap} is
uniformly positive:
\[
\kappa
:=\inf_{\theta\in\mathcal N(\theta_0)}\inf_{s\in\mathcal I(T^\star)}
\pi_\theta(s)\,I_\theta(Y_t;C_t\mid S_t^{(\ell)}=s)>0 .
\]
\end{assumption}

Each condition secures one ingredient of the analysis.
Assumption~\ref{ass:topology} fixes the qualitative memory structure:
\emph{which} parts of the past matter does not change over
$\mathcal N(\theta_0)$, so the perturbations probed by the estimator
move the transition probabilities but never the shape of the tree.
Assumption~\ref{ass:occupancy} guarantees data everywhere: every
length-$L$ context is visited at a linear rate, so empirical counts
accumulate in every state. Assumption~\ref{ass:gap} makes the
minimality of $T^\star$ quantitative and, at the same
time, strengthens it. At an internal node the deeper past does matter,
but this does not need to be visible one symbol at a time: the conditional
mutual information carried by $C_t$ alone can vanish while a longer
history remains informative. The gap excludes that case and requires,
in addition, that the information carried by $C_t$ and weighted by the
occupancy $\pi_\theta(s)$ to be definite, not merely nonzero, uniformly
over $\mathcal I(T^\star)$ and $\mathcal N(\theta_0)$.
Without this margin, \cite{cai2006universal} still give almost-sure
consistency, but no uniform finite-sample rate can be expected.

\begin{remark}[Interpretation and local enforceability
of the structural conditions]
\label{rem:mildness}
The conditions are local non-degeneracy requirements,
and two of them can be enforced by shrinking the neighborhood.
Assumption~\ref{ass:reg}(i) already forces $\pi_{\min}\ge\delta^{L}$,
so Assumption~\ref{ass:occupancy} holds automatically.
Assumption~\ref{ass:topology} is automatic when
$T^\star$ is the full tree $\mathcal A^{L^\star}$, since there are then
no equalities among conditional laws to be broken. When $T^\star$ is a
proper subtree, each of its leaves merges several deeper contexts whose
conditional laws coincide, and an arbitrary perturbation of the
parameters may separate them; fixed topology then asks the
parameterization itself to preserve these equalities, as it does when
the model is specified by a tree with free parameters at its leaves.
Given such a parameterization, shrinking $\mathcal N(\theta_0)$
suffices, provided that no split degenerates at $\theta_0$ itself. Given Assumption~\ref{ass:topology}, the content
of Assumption~\ref{ass:gap} is pointwise: if
$I_{\theta_0}(Y_t;C_t\mid s)>0$ for every internal node $s$, then,
since $\theta\mapsto\pi_\theta(s)\,I_\theta(Y_t;C_t\mid s)$ is
continuous by Assumption~\ref{ass:reg}(ii), a uniform $\kappa>0$ holds
by shrinking $\mathcal N(\theta_0)$ if necessary. Occupancy and gap
are nevertheless stated as separate assumptions because $\pi_{\min}$
and $\kappa$ are the quantities through which the constants of the rate
depend on the model.
\end{remark}

\subsubsection{Local KL rate}
Fix a unit direction $u$ and a step $\varepsilon>0$,
and run Algorithm~\ref{alg:ctw-kl} with the directional perturbation
$\Delta=\varepsilon u$.
The first result is the finite-sample analogue of Lemma~\ref{lem:ctw-kl};
it is the source of every subsequent rate, and its message is that the
KL estimation error \emph{scales with} $\varepsilon$.

\begin{proposition}[Local frozen-CTW KL rate]
\label{prop:kl-rate}
Suppose Assumptions~\ref{ass:source}--\ref{ass:reg} and
\ref{ass:topology}--\ref{ass:gap} hold with $L\ge L^\star$, and let
$\varepsilon_0$ be small enough that
$\theta_0+\varepsilon u\in\mathcal N(\theta_0)$ for all $\|u\|_2=1$,
$\varepsilon\in(0,\varepsilon_0]$. Then there is a constant
$C_{\mathrm{KL}}<\infty$ such that, uniformly over $\|u\|_2=1$ and
$0<\varepsilon\le\varepsilon_0$,
\begin{equation}
\mathbb E\bigl|\widehat{\bar D}_{n}(P_{\theta_0}\,\|\,P_{\theta_0+\varepsilon u})
-\bar D(P_{\theta_0}\,\|\,P_{\theta_0+\varepsilon u})\bigr|
\le C_{\mathrm{KL}}\!\left(\frac{\varepsilon}{\sqrt n}+\frac{\log n}{n}\right).
\label{eq:kl-rate}
\end{equation}
\end{proposition}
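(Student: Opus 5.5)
Write $p:=P_{\theta_0}$, $q_\varepsilon:=P_{\theta_0+\varepsilon u}$, and let $T_0$ and $T_\varepsilon$ denote the true depth-$L$ transition kernels $T_\vartheta(a\mid s)$, $s\in\mathcal A^L$. The plan is to decompose the error into a part that scales with $\varepsilon$ and parts of order $\log n/n$. Using the factorization \eqref{eq:frozen-ctw-factorization}, add and subtract the true log-likelihoods:
\begin{align*}
\widehat{\bar D}_n-\bar D
&=\underbrace{\Bigl[-\tfrac1n\log p^{(n)}(Z_{1:n})-\widehat{\bar H}_n(P_{\theta_0})\Bigr]}_{A_1}
+\underbrace{\tfrac1n\sum_{t=L+1}^n\log\frac{T_\varepsilon(Z_t\mid Z_{t-L:t-1})}{\widehat T^{\mathrm{CTW}}_{X,L}(Z_t\mid Z_{t-L:t-1})}}_{A_2}\\
&\quad+\underbrace{\Bigl[\tfrac1n\sum_{t=L+1}^n\log\frac{T_0}{T_\varepsilon}(Z_t\mid Z_{t-L:t-1})-\bar D(p\,\|\,q_\varepsilon)\Bigr]}_{A_3}
+A_4 .
\end{align*}
Here $A_4$ collects the warm-up terms $-\tfrac1n\log C_{x,L}(Z_{1:L})$ and $\tfrac1n\log p^{(L)}(Z_{1:L})$. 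By \eqref{eq:ctw-positivity-method} and Assumption~\ref{ass:reg}(i), $|A_4|=O(L\log n/n)$ deterministically.

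\textbf{Term $A_1$ (sequential CTW redundancy).} For $L\ge L^\star$, the standard CTW redundancy bound holds pointwise against the true tree source $T^\star$: $\log p^{(n)}(z)-\log Q^{\mathrm{CTW}}_L(z)\le \Gamma_L(T^\star)+\tfrac{|T^\star|(|\mathcal A|-1)}{2}\log n+O(1)$. Taking expectations bounds $\mathbb E A_1$ from above by $O(\log n/n)$. For the lower side and for the absolute value, Barron's competitive-optimality argument gives $\Pr\{\log Q-\log p>c\}\le e^{-c}$. Choosing $c=2\log n$ yields $\mathbb E|A_1|=O(\log n/n)$. This bound is uniform in $L\ge L^\star$ because $\Gamma_L(T^\star)$ depends on $T^\star$ only.

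\textbf{Term $A_3$ (Markov fluctuation).} The function $g_\varepsilon:=\log(T_0/T_\varepsilon)$ satisfies $\|g_\varepsilon\|_\infty\le C\varepsilon$ by Assumption~\ref{ass:reg}(i)--(ii). Its stationary mean equals $\bar D(p\|q_\varepsilon)$, the KL rate of a Markov source. The chain $S_t$ mixes geometrically under Assumption~\ref{ass:reg}(i): every $L^\star$-step transition has probability at least $\delta^{L^\star}$, which gives a Doeblin condition. A Poisson-equation/martingale decomposition then gives $\mathbb E|A_3|\le C\varepsilon/\sqrt n+C\varepsilon/n$, plus a nonstationary-start correction that is zero here because the path is stationary.

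\textbf{Term $A_2$ (frozen-scorer error).} This is the main obstacle. Condition on $X$, which is independent of $Z$. Then $A_2$ is an additive functional of $Z$ with mean $\sum_s\pi_0(s)\sum_a T_0(a\mid s)\log\frac{T_\varepsilon(a\mid s)}{\widehat T(a\mid s)}$. I would first show that, on a high-probability event $G$ for $X$, the frozen CTW predictor is essentially the KT predictor on the leaves of $T^\star$. Under $G$, every full-depth context has count at least $\pi_{\min}n/2$ by Assumption~\ref{ass:occupancy} and a Markov-chain Hoeffding bound. Moreover, the CTW weights at nodes of $\mathcal I(T^\star)$ favour splitting, with log-odds at least $c\kappa n-O(\log n)$ by Assumption~\ref{ass:gap}. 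At nodes of $T^\star$'s leaves and below, the weights favour stopping, by the $O(\log n)$ Bayesian-mixture penalty; this uses Assumption~\ref{ass:topology} so that the same tree applies at $\theta_0+\varepsilon u$. The complement of $G$ has probability $e^{-cn}$, and on it $|A_2|=O(\log n)$ by positivity, which contributes a negligible amount.

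On $G$, write $\widehat T=T_\varepsilon+\eta$ with $\eta$ the KT estimation error, whose size is $O(n^{-1/2})$ and whose mean is $O(1/n)$. A second-order expansion of $\log(T_\varepsilon/\widehat T)$ around $T_\varepsilon$ leaves a linear term $-\sum_s\pi_0(s)\sum_a T_0(a\mid s)\,\eta(a\mid s)/T_\varepsilon(a\mid s)$. Since $\sum_a\eta(a\mid s)=0$, this equals $-\sum_s\pi_0(s)\sum_a (T_0-T_\varepsilon)(a\mid s)\,\eta(a\mid s)/T_\varepsilon(a\mid s)$, which is $O(\varepsilon\cdot n^{-1/2})$ in $L^1$. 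The quadratic term is $O(\|\eta\|^2)=O(1/n)$ in expectation, and the mixture leakage from non-dominant CTW weights adds $O(\log n/n)$. This exact cancellation of the leading $n^{-1/2}$ term, leaving only the $\varepsilon$-weighted part, is the crux of the argument. Fluctuations of $A_2$ around its conditional mean are handled as in $A_3$ with the bounded function $\log(T_\varepsilon/\widehat T)=O(\varepsilon+n^{-1/2}\sqrt{\log n})$ on $G$. I would need to verify carefully that this product term is $O(\varepsilon/\sqrt n+\log n/n)$.

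Summing the bounds on $A_1$--$A_4$ gives \eqref{eq:kl-rate}. All constants depend only on $\delta$, $\pi_{\min}$, $\kappa$, $|T^\star|$, and the derivative bounds over $\mathcal N(\theta_0)$, so the bound is uniform in $\|u\|_2=1$ and $\varepsilon\le\varepsilon_0$.
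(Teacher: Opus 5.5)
Your decomposition matches the paper's. It splits the error into an oracle term, the self-redundancy of the sequential code, the frozen cross term and an $O(L\log n/n)$ warm-up remainder. The oracle term goes through a Doeblin mixing bound with an $O(\varepsilon)$ integrand. The self term goes through the pointwise CTW redundancy bound plus a bound on the opposite tail; your $\Pr\{\log Q-\log p>c\}\le e^{-c}$ does the work of the paper's $\mathbb E R_n=D(P^{(n)}_{\theta_0}\|Q^{\mathrm{CTW}}_L)\ge0$. The cross term is handled by conditioning on the independent construction path and using $\sum_a\eta(a\mid s)=0$, which is exactly the cancellation~\eqref{eq:crux-cancellation}. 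Truncating on a good event and paying the positivity bound $O(\log n)$ off it is a legitimate replacement for the paper's clipped scorer and its transfer step. You do not need $e^{-cn}$ there: a failure probability of $O(1/n)$, which second moments and Chebyshev give as in the paper, already costs only $O(\log n/n)$.

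The step that fails as written is putting into $G$ the property that the weights at and below the leaves of $T^\star$ favour stopping. Take a sufficient node of depth below $L$. On the count event, using $P_w^{cs}\ge\tfrac12P_e^{cs}$, the log-odds of splitting is at least $N\widehat J_s-\tfrac{(|\mathcal A|-1)^2}{2}\log n-O(1)$. Here $2N\widehat J_s$ is a likelihood-ratio statistic under a true null, asymptotically chi-square with $(|\mathcal A|-1)^2$ degrees of freedom. Stopping therefore loses with a probability that is only polynomially small, never $e^{-cn}$. For a binary alphabet that probability is of order $n^{-1/2}$ up to logarithms, so even the $O(\log n)$ charge on that event exceeds $\varepsilon/\sqrt n+\log n/n$ once $\varepsilon$ is small.

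The property is also unnecessary. By Assumption~\ref{ass:topology}, every suffix of the scoring context at least as long as its minimal leaf carries the same conditional law $T_\varepsilon$. The part of the mixture $\sum_i\gamma_i(w)P_e^{w(i)}$ on those nodes therefore has an error that is a convex combination of KT errors. Each of these is $O(n^{-1/2})$ in $L^1$ and $O(n^{-1})$ in mean square, however the weights are spread (Step~2 of the proof of Lemma~\ref{lem:pmc}). Only the underfit weights must be small, and Assumption~\ref{ass:gap} provides that.

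So define $G$ by large counts, the KT predictors at all sufficient contexts within $\delta/4$ of $T_\varepsilon$, and $\widehat J_s\ge\kappa/2$ at every internal node. Your separate ``mixture leakage'' is then part of $\eta$, and the cancellation and quadratic bounds go through. On this $G$ you only have $|\eta|\le\delta/2$, not $\sqrt{\log n/n}$. So bound the fluctuation of $A_2$ with the mean-square form of Lemma~\ref{lem:mixing}. The $L^2(\pi_0)$ norm of $\log(T_\varepsilon/\widehat T)$ is $O(n^{-1/2})$ in expectation and carries no factor $\varepsilon$, which makes the product term you flagged $O(1/n)$.
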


The content of the bound lies in the first term: the
statistical error carries the factor $\varepsilon$ inherited from the
closeness of the two laws, so it shrinks together with the perturbation
instead of remaining at the generic $1/\sqrt n$ level. The second term
$\log n/n$ is the learning overhead of the CTW machinery; it does not
scale with $\varepsilon$, but after the $\varepsilon^2$ division it is
of lower order at the step sizes chosen below. The proof is given in
Appendix~\ref{app:rate}, where the structural
Assumptions~\ref{ass:topology}--\ref{ass:gap} enter through a single
quantitative property of the frozen CTW predictor, the
prediction-moment lemma (Lemma~\ref{lem:pmc}).

\subsubsection{Main finite-sample theorem}
The main result propagates the KL rate of
Proposition~\ref{prop:kl-rate} through the recovery step, for both the
one-sided and the central-difference scheme. The bounds are mean
($L^1$) error rates; the proof is given in
Appendix~\ref{app:rate-fim}.

\begin{theorem}[Mean-error finite-sample FIM rate]
\label{thm:fim-rate}
Suppose Assumptions~\ref{ass:sim}--\ref{ass:reg} and the
structural
Assumptions~\ref{ass:topology}--\ref{ass:gap} hold, let the CTW depth
satisfy $L\ge L^\star$, and let $\{u_j\}_{j=1}^m$ be unit directions
such that $U$ has full column rank $q$. Then there is a constant
$C<\infty$ such that, for every
$\varepsilon\in(0,\varepsilon_0]$:
\begin{enumerate}
\item[(a)] the one-sided estimator
$\widehat I_{n,\varepsilon}(\theta_0)$ of
Algorithm~\ref{alg:fim-recovery} satisfies
\begin{equation}
\mathbb E\bigl\|\widehat I_{n,\varepsilon}(\theta_0)-I(\theta_0)\bigr\|_F
\le C\!\left(\frac{1}{\varepsilon\sqrt n}+\frac{\log n}{n\varepsilon^2}+\varepsilon\right),
\label{eq:fim-matrix-onesided}
\end{equation}
and the step size $\varepsilon_n\asymp n^{-1/4}$ gives
$\mathbb E\|\widehat I_{n,\varepsilon_n}(\theta_0)-I(\theta_0)\|_F
=O(n^{-1/4})$;
\item[(b)] if, in addition, the maps
$\theta\mapsto\Pr_\theta(Y_t=a\mid S_t=s)$ are four times
continuously differentiable on $\mathcal N(\theta_0)$ with derivatives
bounded uniformly in $(s,a)$, the central-difference estimator
$\widehat I^{(2)}_{n,\varepsilon}(\theta_0)$ of
Section~\ref{subsec:ls-recovery} satisfies
\begin{equation}
\mathbb E\bigl\|\widehat I^{(2)}_{n,\varepsilon}(\theta_0)-I(\theta_0)\bigr\|_F
\le C\!\left(\frac{1}{\varepsilon\sqrt n}+\frac{\log n}{n\varepsilon^2}+\varepsilon^2\right),
\label{eq:fim-matrix-rate}
\end{equation}
and the step size $\varepsilon_n\asymp n^{-1/6}$
gives
\begin{equation}
\mathbb E\bigl\|\widehat I^{(2)}_{n,\varepsilon_n}(\theta_0)-I(\theta_0)\bigr\|_F
=O\!\bigl(n^{-1/3}\bigr).
\label{eq:fim-rate-final}
\end{equation}
\end{enumerate}
\end{theorem}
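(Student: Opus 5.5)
The plan is to decompose the error of each rescaled measurement into a stochastic part and a Taylor-bias part, and then push both through the deterministic recovery map. For a unit direction $u_j$ in the one-sided scheme write
\[
v_j-u_j^\top I(\theta_0)u_j
=\frac{2}{\varepsilon^2}\Bigl(\widehat{\bar D}_j-\bar D(P_{\theta_0}\,\|\,P_{\theta_0+\varepsilon u_j})\Bigr)
+\frac{2}{\varepsilon^2}\,r(\varepsilon u_j),
\]
with $r$ the remainder of Lemma~\ref{lem:kl-fim}. Taking expectations and invoking Proposition~\ref{prop:kl-rate}, uniform in $u$ and $\varepsilon\le\varepsilon_0$, bounds the first term by $2C_{\mathrm{KL}}\bigl(1/(\varepsilon\sqrt n)+\log n/(n\varepsilon^2)\bigr)$. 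Lemma~\ref{lem:kl-fim} bounds the second deterministically by $2C_3\varepsilon$. Summing over the $m$ directions gives $\mathbb E\|v-Uf\|_2\le\sqrt m\,\max_j\mathbb E|v_j-U(u_j)^\top f|$. Here we use $\mathbb E\|\cdot\|_2\le\mathbb E\|\cdot\|_1$, so no independence among the correlated $v_j$ (shared reference path) is needed.

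Next comes propagation through the recovery. Since $U$ has full column rank, the least-squares map is linear with $\widehat f-f=(U^\top U)^{-1}U^\top(v-Uf)$. Lemma~\ref{lem:ls-continuity} then gives $\|\widehat f-f\|_2\le\kappa_U\|v-Uf\|_2$. The map $\operatorname{unvec}_\triangle$ inflates the Euclidean norm by at most $\sqrt2$ in Frobenius norm. Because $I(\theta_0)\succeq0$, Lemma~\ref{lem:psd-proj} shows the eigenvalue clipping is non-expansive, so $\|\widehat I_{n,\varepsilon}-I(\theta_0)\|_F\le\sqrt2\,\kappa_U\|v-Uf\|_2$. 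Taking expectations yields \eqref{eq:fim-matrix-onesided} with $C=2\sqrt{2m}\,\kappa_U\max(C_{\mathrm{KL}},C_3)$. Balancing $1/(\varepsilon\sqrt n)$ against $\varepsilon$ gives $\varepsilon_n\asymp n^{-1/4}$. At that step size the overhead term is $\log n/(n\varepsilon_n^2)=n^{-1/2}\log n=o(n^{-1/4})$, which completes part (a). We also need $\varepsilon_n\le\varepsilon_0$, which holds for $n$ large; small $n$ are absorbed into the constant.

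For part (b), the same chain applies to $v^{(2)}_j$. The stochastic part is $\varepsilon^{-2}$ times the sum of two KL errors, each controlled by Proposition~\ref{prop:kl-rate} at $\pm\varepsilon u_j$, so it is again $O\bigl(1/(\varepsilon\sqrt n)+\log n/(n\varepsilon^2)\bigr)$. The new ingredient, and the main obstacle, is a sharper deterministic bias bound:
\[
\bar D(P_{\theta_0}\,\|\,P_{\theta_0+\varepsilon u})+\bar D(P_{\theta_0}\,\|\,P_{\theta_0-\varepsilon u})=\varepsilon^2u^\top I(\theta_0)u+O(\varepsilon^4).
\]
Lemma~\ref{lem:kl-fim} only gives a cubic remainder, so I would redo the expansion at rate level. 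For a finite-memory source the rate has the explicit form
\[
\bar D(P_{\theta_0}\,\|\,P_\theta)=\sum_s\pi_{\theta_0}(s)\,D\bigl(p_{\theta_0}(\cdot\mid s)\,\|\,p_\theta(\cdot\mid s)\bigr),
\]
with stationary weights at $\theta_0$ only. This is a finite sum of functions of $\theta$ that are $C^4$ under the added hypothesis, and the transition probabilities are bounded below by $\delta$ from Assumption~\ref{ass:reg}(i), so the logarithm's derivatives are bounded. A fourth-order Taylor expansion in $\varepsilon$ then has zero constant and linear terms, quadratic term $\tfrac12\varepsilon^2u^\top I(\theta_0)u$, an odd cubic term that cancels in the symmetric sum, and a remainder uniformly $O(\varepsilon^4)$ in $u$. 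This uses the identification of $I(\theta_0)$ with the Hessian of this sum, which the proof of Lemma~\ref{lem:kl-fim} already provides. Dividing by $\varepsilon^2$ gives bias $O(\varepsilon^2)$, hence \eqref{eq:fim-matrix-rate}. Balancing $1/(\varepsilon\sqrt n)$ against $\varepsilon^2$ gives $\varepsilon_n\asymp n^{-1/6}$ and error $n^{-1/3}$. The overhead term is then $n^{-2/3}\log n$, which is negligible, and this yields \eqref{eq:fim-rate-final}.
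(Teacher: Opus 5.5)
Your proposal is correct and follows the paper's proof essentially step for step. Both use the same split of each rescaled measurement into a stochastic part controlled by Proposition~\ref{prop:kl-rate} and a deterministic Taylor bias: Lemma~\ref{lem:kl-fim} for the one-sided scheme, and for the central scheme a fourth-order expansion of the closed-form rate \eqref{eq:kl-rate-closed}, which is exactly the paper's Lemma~\ref{lem:sym-expansion}. The error then goes through the same chain: the OLS Lipschitz bound, the $\sqrt2$ factor from $\operatorname{unvec}_\triangle$, and non-expansiveness of the PSD projection.

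One minor slip: $\mathbb E\|x\|_2\le\mathbb E\|x\|_1$ gives $\mathbb E\|v-Uf\|_2\le m\max_j\mathbb E|v_j-U(u_j)^\top f|$, not $\sqrt m\max_j(\cdot)$. The $\sqrt m$ version can fail with only first-moment control, e.g.\ for $x=c\,e_J$ with $J$ uniform. Your constant should therefore carry $m$ rather than $\sqrt m$, which is harmless since the theorem only asserts that some $C$ exists.
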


The theorem prescribes the operating point of the method. With the step size set to $\varepsilon_n\asymp n^{-1/6}$, the central-difference scheme attains the matrix error $O(n^{-1/3})$ at the cost of $1+2m$ CTW scorers; the one-sided scheme, at its own optimum $\varepsilon_n\asymp n^{-1/4}$, attains $O(n^{-1/4})$ with roughly half the auxiliary simulations. The difference between the two rates stems from the Taylor bias of the measurements, which the symmetrization improves from $\varepsilon$ to $\varepsilon^2$. In particular, as the trajectory length grows, the step size should shrink at the prescribed rate.

\section{Numerical Illustration}
\label{sec:experiments}

This section reports five experiments. The first two verify that the
estimator recovers the Fisher information rate on sources for which it
is available in closed form. The first experiment uses a binary Markov chain with
a single logistic parameter, and checks the recovered value and the
convergence rate of Theorem~\ref{thm:fim-rate}. The second one uses a
multi-parameter Markov source, whose off-diagonal entries are recovered
from measurements of mixed curvature. The third and fourth experiments measure the cost of different working depths, on a source whose memory length is finite
and known to us and then on a hidden-state source that has no finite
memory length and lies outside Assumption~\ref{ass:source}. The fifth setup
uses the estimate to plan a sampling budget and compares the realized
precision with the promised one.

All five experiments run Algorithm~\ref{alg:fim-recovery} end to end. Unless stated
otherwise, the directional estimates use the central-difference
measurements~\eqref{eq:central-meas}, and every reported figure is an
average over $R$ independent replications of the whole pipeline, each
with freshly simulated trajectories.

\subsection{Binary Markov source: value and rate}
\label{subsec:exp1}

This experiment considers a binary Markov chain with a single logistic
parameter. With $p:=\sigma(\theta)$ and $\sigma(x)=1/(1+e^{-x})$, the
transition matrix is symmetric and equal to
\begin{equation}
T(\theta)=\begin{bmatrix} p & 1-p\\ 1-p & p\end{bmatrix},
\label{eq:exp1-model}
\end{equation}
so the stationary distribution of the chain is uniform and the Fisher information rate is
the scalar
\begin{equation}
I(\theta)=p(1-p)=\sigma(\theta)\bigl(1-\sigma(\theta)\bigr).
\label{eq:exp1-truth}
\end{equation}
The source has memory length $L^\star=1$, and the working depth is set
to $L=1$ throughout. The experiment first recovers the information as a
function of the parameter, and then measures the convergence rate
against the exponent stated in Theorem~\ref{thm:fim-rate}.

\subsubsection{Recovering the information landscape}
The estimator is run at 15 parameter values evenly spaced over
$\theta\in[-3,3]$, so that what is compared with the closed
form~\eqref{eq:exp1-truth} is a curve rather than a single number.
Three trajectory lengths are used, $n=10^{3}$, $3\times10^{3}$ and
$10^{4}$, each with $R=50$ repetitions and step size
$\varepsilon_n=6(\log n/n)^{1/3}$.

\begin{figure}[t]
    \centering
    \includegraphics[width=0.62\linewidth]{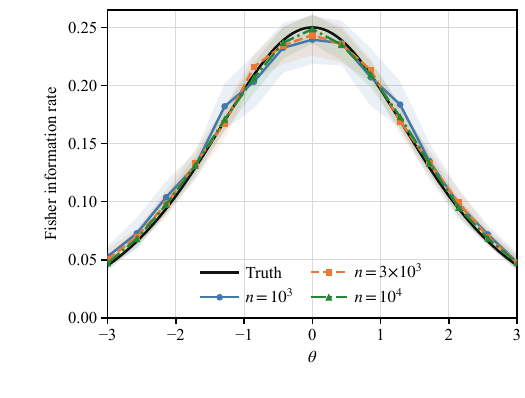}
    \caption{Recovery of the Fisher information rate on the
    source~\eqref{eq:exp1-model}. The solid black curve is the closed
    form~\eqref{eq:exp1-truth}, the coloured curves are the Monte Carlo
    means at the three trajectory lengths, and the shaded bands are one
    sample standard deviation.}
    \label{fig:exp1-landscape}
\end{figure}

Figure~\ref{fig:exp1-landscape} shows the outcome. The estimator
reproduces the whole ground truth reference, from the maximum at $\theta=0$, where
the chain is closest to a fair coin and the information is largest, to
the decay on both sides, where the transition probabilities approach
zero and one and each observation carries less information about the
parameter. The reference is never seen by the estimator, which
observes only simulated trajectories at the probed parameter values.
The three trajectory lengths differ in the way expected. At $n=10^3$
the Monte Carlo mean falls below the reference near the maximum, and
the one-standard-deviation band is at its widest there. Both the gap
and the band shrink monotonically as $n$ grows, and at $n=10^4$ the
mean is close to the reference over the whole range. 

\subsubsection{Convergence and the rate}
Theorem~\ref{thm:fim-rate} also gives a convergence rate, and a
different one for each differencing scheme, together with the step size
at which it is attained: $O(n^{-1/4})$ for the one-sided scheme at
$\varepsilon_n\asymp n^{-1/4}$, and $O(n^{-1/3})$ for the central
scheme at $\varepsilon_n\asymp n^{-1/6}$. Both schemes are run here on
the same trajectories, each with its own step-size schedule,
$\varepsilon_n=6\,n^{-1/4}$ and $\varepsilon_n=4.5\,n^{-1/6}$, and each
scheme is repeated $R=100$ times. The two fitted slopes are then
compared with the two exponents.

\begin{figure}[t]
    \centering
    \includegraphics[width=0.62\linewidth]{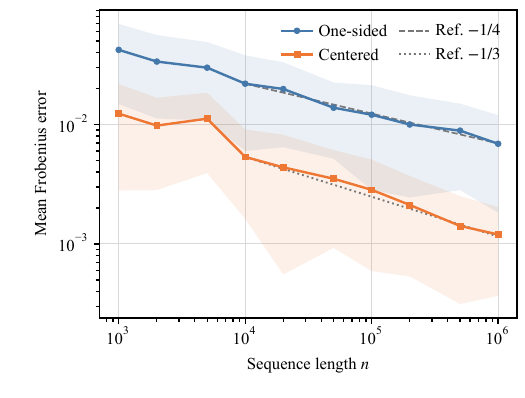}
    \caption{Empirical convergence of the one-sided and
    central-difference estimators on the source~\eqref{eq:exp1-model}.
    Each curve is the mean of
    $\|\widehat I-I(\theta_0)\|_F$ over repetitions, the unnormalized
    quantity bounded in Theorem~\ref{thm:fim-rate}, and the shaded band
    is one sample standard deviation. The grey lines show the exponents
    $-1/4$ and $-1/3$ of Theorem~\ref{thm:fim-rate}.}
    \label{fig:exp1-rate}
\end{figure}

Figure~\ref{fig:exp1-rate} reports the outcome. The fitted exponents
are $-0.250$ and $-0.335$, against the values $-1/4$ and $-1/3$ provided by
Theorem~\ref{thm:fim-rate}, respectively. The bound is an upper bound, and on this
source it is attained.
Below $n=10^{4}$ the decay is irregular and the fit is taken above that
point, since all three terms of~\eqref{eq:fim-matrix-rate} are still of
comparable size there and no single exponent should be expected. The
step size itself is confirmed independently in
Appendix~\ref{app:eps-sweep}, where the empirically optimal
$\varepsilon$ decays at the exponent the theorem prescribes
and the schedule used here sits on that optimum.

The central-difference scheme does not cancel statistical fluctuation,
since its two perturbed trajectories are independent. What it reduces
is the deterministic bias, and the smaller bias is what permits the
larger step size.

\subsection{Multi-parameter source: recovering the full
matrix}
\label{subsec:exp2}

This experiment estimates a matrix rather than a scalar, which puts
three steps of Algorithm~\ref{alg:fim-recovery} to the test: the
directional measurements must carry the mixed curvature, the
least-squares step must return the off-diagonal entries and not only
the diagonal, and the projection onto the semidefinite cone must act as
a safeguard rather than as a repair. A single sample size is used, so
the convergence rate is not tested again here.

The source is a four-state Markov chain on the states
$\{0,1,2,3\}$. From
state $i$ the chain moves to state $j$ with probability
\begin{equation}
\begin{aligned}
P_{ij}(\theta)&=\frac{e^{\theta^\top a_{ij}}}
{\sum_{k=0}^{3}e^{\theta^\top a_{ik}}},\\[2pt]
a_{ij}&=\bigl(\mathbf 1\{j=i\},\;c_{ij},\;
\mathbf 1\{j\le1\}\bigr)^\top,
\end{aligned}
\label{eq:exp2-model}
\end{equation}
where $c_{ij}$ equals
$+1$ if $j=i+1$, $-1$ if $j=i-1$, and $0$
otherwise, with indices taken modulo $4$. Each entry of
$\theta=(\theta_1,\theta_2,\theta_3)^\top\in[-1,1]^3$ weights one
feature. The parameter $\theta_1$ controls the tendency to stay put,
$\theta_2$ the preference for $j=i+1$ over $j=i-1$, and $\theta_3$ the
preference for the states $0$ and $1$. Writing
$s_{ij}(\theta)=a_{ij}-\sum_{k}P_{ik}(\theta)\,a_{ik}$ for the score of
a transition, the Fisher information rate is
\begin{equation}
I(\theta)=\sum_{i=0}^{3}\pi_\theta(i)\sum_{j=0}^{3}
P_{ij}(\theta)\,s_{ij}(\theta)\,s_{ij}(\theta)^\top ,
\label{eq:exp2-truth}
\end{equation}
which is the ground truth for this experiment.

Twelve parameter vectors are used, the eight vertices of $[-1,1]^3$ and
four interior points. Everywhere on the grid the transition
probabilities stay bounded away from zero and one, as
Assumption~\ref{ass:reg}(i) requires. The off-diagonal entries take
both signs there, so the recovery has to resolve the sign of the mixed
curvature and not only its magnitude.

Being symmetric, $I(\theta)$ has six free entries, and six directions
would determine it. Nine directions are used instead.
They combine the minimal design of
Remark~\ref{rem:directions} with the three mixed directions
of the opposite sign, which halves the conditioning constant of
Lemma~\ref{lem:ls-continuity}.

For each of the $R=100$ replications, we draw one reference trajectory of
length $n=10^{5}$ and one trajectory for each of the $18$ perturbed values
$\theta\pm\varepsilon u_j$, $19$ in total, where each trajectory is started from its own
stationary law. The working depth is $L=1$, and the measurements are
central with the step size $\varepsilon_n=2.4\,n^{-1/6}$ that
Theorem~\ref{thm:fim-rate}(b) prescribes for that scheme.

\begin{figure}[t]
    \centering
    \includegraphics[width=0.72\linewidth]{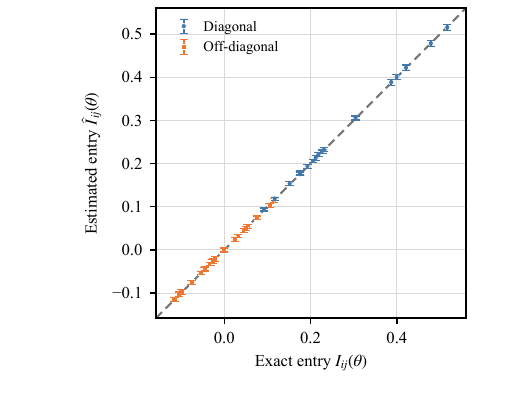}
    \caption{Recovery of the matrix-valued Fisher information rate for
    the source~\eqref{eq:exp2-model}. Each point compares one entry of
    the repetition-averaged estimate with its exact
    value~\eqref{eq:exp2-truth}. Circles are diagonal entries and
    squares are off-diagonal ones. Error bars are one sample standard
    deviation across repetitions, and not the standard error of the
    plotted mean. The dashed line is $y=x$.}
    \label{fig:exp2-entries}
\end{figure}

Figure~\ref{fig:exp2-entries} compares every entry with its exact
value. The points follow the identity line over the whole range, and
the off-diagonal group crosses the origin, so entries of both signs are
recovered. A single run recovers the matrix to within three per cent
in relative Frobenius error,
most of which is variation between runs rather than a systematic
offset. The off-diagonal block is about three times less accurate than
the matrix as a whole, partly because several off-diagonal entries pass
through zero on this grid, where any relative measure inflates. What
does not appear is a contraction towards zero or a sign error, which are the two
failure modes a directional design could plausibly produce.

The projection onto the semidefinite cone was never active: in all
$1200$ reconstructions the unconstrained least-squares estimate was
already positive semidefinite. The accuracy above is therefore that of
the raw fit. The experiment shows that the projection is not needed on
this source, and says nothing about how well it would repair an
indefinite fit.

\subsection{Choosing the working depth}
\label{subsec:exp-depth}

The working depth $L$ is the one choice the estimator leaves to its
user, and the theory asks only that it satisfy $L\ge L^\star$. This
experiment verifies that any depth meeting the condition performs as
well as $L=L^\star$ itself, and measures what violating it costs. The
memory length of the source is known to us but unknown to both
estimators.

The source is a binary chain of memory length $L^\star=4$. Writing
$z_k(s)=2s_k-1$ for the sign of the $k$th symbol of a context
$s\in\{0,1\}^{4}$, and $\phi(s)=z_1(s)z_2(s)z_3(s)z_4(s)$ for the
product of the four signs, the next symbol is drawn from
\begin{equation}
P_\theta(Y_t=1\mid s)=\sigma\bigl(\beta_s+\theta\,\phi(s)\bigr),
\qquad
\beta_s=0.2+0.25z_{4}(s)-0.15z_{3}(s)+0.35\phi(s).
\label{eq:depth-model}
\end{equation}
The parameter multiplies $\phi(s)$, which flips whenever any one of the
four symbols flips. No model of order three or less can represent it,
and such a model therefore reports no dependence on $\theta$ at all.
The state process of Assumption~\ref{ass:source} takes $16$ values
here, and the Fisher information rate is
\begin{equation}
I(\theta)=\sum_s\pi_\theta(s)\,p_\theta(s)\bigl(1-p_\theta(s)\bigr)\phi(s)^2 ,
\label{eq:depth-truth}
\end{equation}
which is the ground truth here. We take $\theta_0=-1.7$, where the
transition probabilities stay bounded away from zero and one as
Assumption~\ref{ass:reg}(i) requires.

A second estimator is needed for comparison, and it has to work under
the same constraints. According to Assumption~\ref{ass:sim}, only simulated
trajectories are available, so the likelihood cannot be evaluated and its parameter
gradient cannot be formed. Estimators built from the score are out of
reach for that reason, including the Monte Carlo constructions
of~\cite{spall2005monte}, which still need the model itself and not
only its output. A fair comparison is obtained instead by keeping
Algorithm~\ref{alg:fim-recovery} intact and replacing the CTW code and
scorers by fixed-order counting models. For a chosen depth $L$ this
baseline treats every string $s\in\mathcal A^{L}$ as a separate context
and estimates the distribution of the next symbol as
\begin{equation}
\widehat q_{L}(a\mid s)=\frac{N(s,a)+1/2}{N(s)+1}.
\label{eq:baseline-count}
\end{equation}
One such model is fitted to the reference trajectory and one to each
perturbed trajectory, and the reference trajectory is then scored with
all of them, exactly as for the frozen CTW scorers. The proposed
estimator and the baseline therefore run on the same data with the same
perturbations and the same recovery, and differ only in how the next
symbol is predicted.

Two trajectory lengths are used, $n=3\times10^{4}$ and $10^{5}$, each
with $R=100$ repetitions, and the working depth ranges over
$L=1,\dots,18$. Both estimators see exactly the same data at every
depth, namely, one reference trajectory and one at each of the two perturbed
parameter values, drawn once per repetition. The central-difference
step follows the prescribed schedule $\varepsilon_n=7.5\,n^{-1/6}$,
whose constant is fixed on an independent pilot run at $L=L^\star$ that
minimizes the error of the two jointly, so the step size favours
neither.

Accuracy is reported as the mean relative error
$\mathbb E\bigl[|\widehat I-I(\theta_0)|/I(\theta_0)\bigr]$. It is $0$
for an exact estimate and $1$ for an estimate that carries no
information about $\theta$, so smaller is better and $1$ is the natural
reference level.

\begin{figure}[t]
    \centering
    \includegraphics[width=0.62\linewidth]{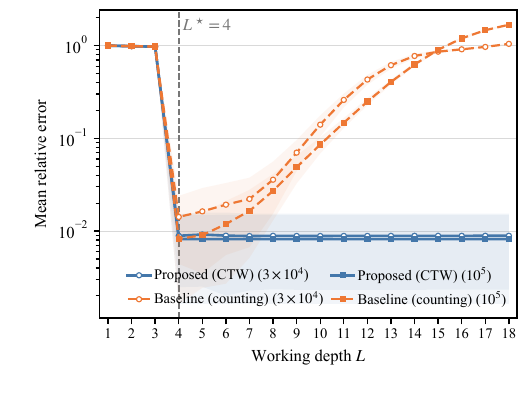}
    \caption{Robustness to the choice of memory depth on the
    source~\eqref{eq:depth-model}. The curves are the mean relative
    error against the depth $L$ chosen by the user, and the shaded bands
    are one sample standard deviation. In the legend, the
    number in parentheses is the trajectory length $n$, either
    $3\times10^{4}$ or $10^{5}$. The dashed line marks $L^\star$.}
    \label{fig:exp-depth}
\end{figure}

Figure~\ref{fig:exp-depth} falls into three ranges of the working
depth: below $L^\star$, at $L^\star$, and above it. Below $L^\star$ both estimators sit at $1$. The depth is too
small for the model class to see $\theta$ at all, and no amount of data
changes that. At $L=L^\star$ the two estimators coincide, the baseline being then
correctly specified. Above $L^\star$ they separate. The proposed
estimator is flat from $L=4$ to $L=18$, and its error at the largest
depth matches its error at $L^\star$. This is the behaviour
the CTW weighting is built for. Raising $L$ adds deeper candidate
trees, but the weight a tree receives depends on its own size and not
on $L$ once it exceeds the depth of the tree, so the shallow trees that fit the
data keep their weight and the estimate is unchanged; the extra depth
costs computation, but not accuracy. The baseline deteriorates steadily over the same range,
by about two orders of magnitude, and beyond $L\approx14$ its error
exceeds $1$.

The proposed estimator is flat above $L^\star$ at both trajectory
lengths, so the robustness is a property of the
estimator and not of the sample size. In
practice the condition $L\ge L^\star$ is therefore easy to satisfy. A
user who does not know the memory length can choose the depth
generously large, and pays for it only in computation,
which grows linearly in $L$. Below $L^\star$ neither
estimator can recover, and the cliff does not move with $n$, so a depth
that is too large is the only safe mistake. The baseline gives the user
no such freedom, and its loss above $L^\star$ can at least be reduced
by more data, which pushes the degradation to the right.

An order-selection rule would let the baseline recover
$L^\star$ from enough data. This is no objection to the comparison.
The CTW weighting already performs such a selection internally, and
Theorem~\ref{thm:fim-rate} holds uniformly over $L\ge L^\star$, so the
proposed estimator needs no selection at all, whereas for the two-stage
procedure that first selects an order and then counts at it, no
guarantee comparable to Theorem~\ref{thm:fim-rate} is known to us.

\subsection{Hidden-state source}
\label{subsec:exp3}

The previous experiments stayed within Assumption~\ref{ass:source}.
This experiment takes the estimator to a source that has no finite
memory length, so that assumption fails and
Theorem~\ref{thm:fim-rate} no longer applies, and it shows that the two
properties established so far, the recovery of the Fisher information
and the tolerance to a generous working depth, both survive. The source
is a two-state hidden Markov model. A latent chain switches state with
probability $q=0.1$ at each step, and the observed binary symbol is
emitted with a probability that depends on the current latent state
through
\begin{equation}
e(\theta)=0.05+0.4\,\sigma(3\theta),
\label{eq:exp3-model}
\end{equation}
which carries the scalar parameter. The estimation point is
$\theta_0=0$. The observation process is not a Markov chain of any
finite order, so every model of fixed order is misspecified by
construction. The reference value has no closed form here and is itself
estimated, from $8$ trajectories of length $10^{6}$, with a standard
error far below the errors compared.

Each of the $R=100$ repetitions uses $10^{6}$ symbols per scorer,
pooled from $N=20$ trajectories of length $5\times10^{4}$, and the
working depth ranges over $L=1,\dots,12$. Both estimators see exactly
the same data at every depth, namely, one reference
sample and one at each of the two perturbed parameter values, drawn once
per repetition. The
central-difference step is $\varepsilon=0.18$, fixed on an independent
pilot run of the proposed estimator at $L=8$.

\begin{figure}[t]
    \centering
    \includegraphics[width=0.62\linewidth]{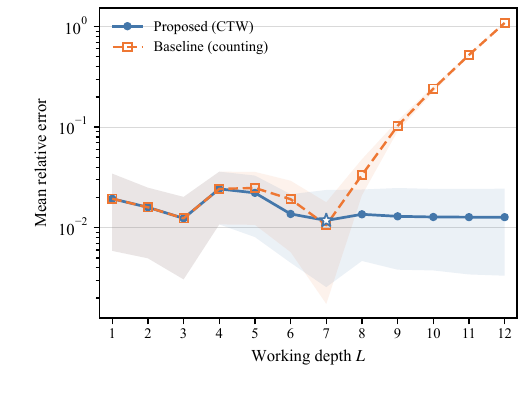}
    \caption{Robustness to the working depth on the hidden-state
    source~\eqref{eq:exp3-model}. Curves are the mean relative error
    against the depth chosen by the user, shaded regions are the
    $16$th to $84$th percentiles of the per-repetition error, and open
    squares remain visible where the baseline coincides with the
    proposed estimator. Note the logarithmic vertical scale.}
    \label{fig:exp3-depth}
\end{figure}

Figure~\ref{fig:exp3-depth} presents the results of this experiment, which fall into three ranges of the working
depth. Up to $L=4$ the two estimators coincide, since every context is
then visited often enough for the CTW weights to concentrate on the full
tree, which is the model the baseline commits to. The small rise at
$L=4$ appears in both curves, so it comes from the model class at that
depth and not from either estimator. The two curves begin to separate
at $L=5$, and the gap opens widely from $L=8$ on. The proposed
estimator holds a relative error near $1\%$ from $L=6$ to $L=12$, and
its error at $L=12$ matches its error at the best depth. The baseline
degrades monotonically over the same range, and at $L=12$ its error
exceeds $1$.

At $L=7$, where the baseline attains its own minimum, the two
estimators are statistically indistinguishable, the paired difference
over the $100$ repetitions being $0.12$ percentage points with a
standard error of $0.07$. That minimum is narrow, in the sense that one step in either
direction roughly doubles or triples the baseline's error, and nothing
in the data identifies $L=7$ in advance. The proposed estimator asks
for an upper bound on the depth instead of a value, and pays nothing
across the whole range in which that bound holds.

No rate is claimed from this experiment, which uses a single sample
size on a source the theory does not cover. What it shows is that the
estimator still returns the Fisher information of a source whose score
is available only through a forward recursion, and still absorbs an
over-generous working depth that the baseline does not.

\subsection{Planning a sampling budget under unknown
memory length}
\label{subsec:exp4}

This experiment uses the estimate for an application of sample-size
determination~\cite{adcock1997sample}. Given a model of the process
and a target precision, the task is to find the shortest trajectory
from which a quantity of the model can be estimated to that precision.
The Fisher information is estimated from simulator output, which
Assumption~\ref{ass:sim} declares available; the budget it yields is
then spent on data from the process itself, and the experiment reports
whether the prescribed precision is attained. The cost of the two kinds
of data is not compared.

The source is a binary Markov chain of order $3$. The probability
that the next symbol is $1$ depends on the past only through which of
four patterns it ends in,
\[
0,\qquad 01,\qquad 011,\qquad 111,
\]
written with the most recent symbol last, and equals
$\sigma(\theta_k)$ for pattern $k$, so that
$\theta=(\theta_1,\dots,\theta_4)^\top$ and
$\sigma(\theta_0)=(0.50,0.30,0.85,0.15)$. The quantity to be estimated
is $g^\star=\Pr(Y_t=1\mid\text{the past ends in }011)=\sigma(\theta_3)
=0.85$, which is a function of the third parameter. That pattern is visited with stationary
probability $1/12$, and $I_{33}=\pi(011)\,g^\star(1-g^\star)$ in closed
form.

For a tolerance $\tau$ on $g^\star$, the Cram\'er--Rao bound gives
$\operatorname{var}(\widehat\theta_3)\ge1/(nI_{33})$, and a first-order
expansion of $g=\sigma(\theta_3)$, the delta
method~\cite{van2000asymptotic}, carries it to
$\operatorname{var}(\widehat g)\approx
\bigl(g^\star(1-g^\star)\bigr)^2/(nI_{33})$, so that
\begin{equation}
n^\star
=\Bigl\lceil
\bigl(\widehat g(1-\widehat g)\bigr)^2\big/
\bigl(\tau^2\,\widehat I_{33}\bigr)
\Bigr\rceil
\label{eq:budget}
\end{equation}
is the smallest length for which the tolerance $\tau$ is attainable.
The budget $n^\star$ is computed at the true parameter value
$\theta_0$, the usual local assumption of experimental design, and this
affects every estimate of $I_{33}$ below equally. Throughout,
$g^\star$ is estimated from a trajectory by the empirical conditional
frequency $N(011,1)/N(011)$, i.e., the number of times the pattern $011$ is
followed by a $1$ divided by the number of times it occurs, in the
count notation of~\eqref{eq:baseline-count}. The pipeline has five
steps.
\begin{enumerate}
\item $I_{33}$ is estimated from simulator samples.
\item The budget $n^\star$ is formed from~\eqref{eq:budget}, with
$\widehat g$ the empirical conditional frequency on the simulator
sample at $\theta_0$.
\item A trajectory of length $n^\star$, the validation path, is
collected from the process.
\item $g^\star$ is estimated from the validation path by the empirical
conditional frequency, as in step~2.
\item The realized error is compared with the tolerance $\tau$.
\end{enumerate}

We compare five configurations, each a complete run of the pipeline
above, differing only in how $I_{33}$ is obtained in the first step.
They follow what a user who does not know $L^\star$ can do. With the proposed estimator the guidance of
Sections~\ref{subsec:exp-depth} and~\ref{subsec:exp3} is to choose the
depth generously, so it is run at one depth, $L=5$. With counting there
is no safe direction and the user must guess, so the baseline is run at
$L=2$, $L=3$ and $L=5$, which are respectively one
order too small, the correct order, and the
same generous depth as the proposed estimator. Below $L^\star$ the
proposed estimator fails as counting does, as
Figure~\ref{fig:exp-depth} showed, and that case is not repeated. The
fifth configuration is the analytic reference, which uses the
exact value of $I_{33}$. It shows how close the budget rule~\eqref{eq:budget} itself comes to
the tolerance $\tau$ when $I_{33}$ is known exactly, and the other four
configurations are compared with it.

Each of $K=1000$ repetitions draws $10$ simulator trajectories of
length $10^{4}$ per scorer, with central measurements at
$\varepsilon=0.5$, and one long enough validation path from the process, started from
the stationary law.
The five configurations use the same simulator samples and, at each
tolerance, the first $n^\star$ symbols of that validation path. The
tolerances are
$\tau\in\{0.10,0.05,0.02,0.01\}$, for which the analytic budgets are
$153$, $612$, $3825$ and $15300$ symbols.

\begin{figure*}[t]
    \centering
    \includegraphics[width=\linewidth]{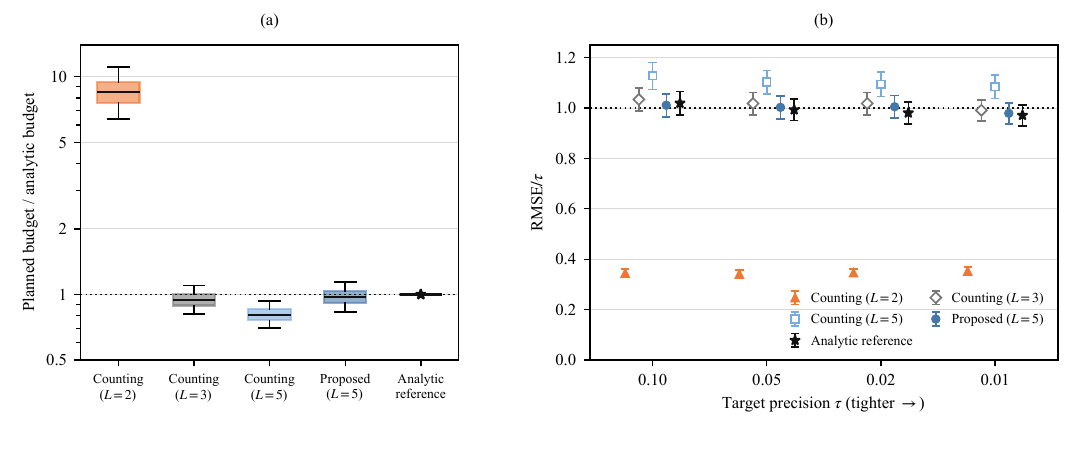}
    \caption{Sample-size determination from an estimated Fisher
    information rate. (a)~Planned budget relative to the analytic budget, over the
    repetitions; boxes span the quartiles, the line is the median, the
    whiskers are the $5$th and $95$th percentiles, outliers are not
    drawn, and the vertical axis is logarithmic. (b)~Realized root
    mean-square error of $\widehat g$ divided by the tolerance $\tau$,
    for each tolerance; error bars are $95\%$ bootstrap intervals. The
    dotted lines mark the analytic budget in~(a) and the tolerance
    in~(b).}
    \label{fig:exp4-planning}
\end{figure*}

In Figure~\ref{fig:exp4-planning}, the budget a
configuration plans in~(a) fixes the precision it attains in~(b),
so the two panels need to be read together. The analytic reference calibrates the rule itself.
With $I_{33}$ known exactly, the realized error stays within about
$3\%$ of the tolerance $\tau$ at every $\tau$, which is the accuracy of
the Cram\'er--Rao bound and the delta method at between $13$ and
$1300$ visits to the target context, and the other configurations are
judged at that resolution.

Counting at $L=2$ asks for more than $8$ times the analytic budget. At
order $2$ the memory is too short to separate the pattern $011$ from
$111$, most of the information about $\theta_3$ is lost, and the rule
compensates for information it cannot see. The error it attains is
about a third of the tolerance, but at a price far beyond what the task
requires. Counting at $L=3$, the correct order, plans about $6\%$ below
the analytic budget and meets the tolerance within the resolution of
the reference. Counting at $L=5$ plans about $20\%$ below the analytic
budget, because the over-specified model over-states $I_{33}$, and its
realized error runs about $10\%$ above the tolerance at every $\tau$,
with all four intervals above one. The proposed estimator, run at the
same depth $L=5$ on the same trajectories with the same step and
recovery, plans within $3\%$ of the analytic budget, closer to it than
counting at the correct order, and meets the tolerance at every $\tau$
within the resolution of the reference. The small amount by which its budget still falls short is a mild form
of the over-specification effect and shrinks with the simulator
sample.

The experiment thus shows that the proposed estimator, given a generous
depth rather than the memory length, plans the budget accurately,
within $3\%$ of the analytic one, and attains the prescribed precision
at every tolerance. A fixed-order estimator comes close to this only at
the correct order; at any other order it either asks for an order of
magnitude more data than needed or misses the tolerance, so it requires
the knowledge of the memory length that the proposed estimator does
without.

\section{Conclusion}
\label{sec:conclusion}

We estimated the Fisher information rate of a stationary process with
memory from simulator output alone. The estimator measures the local
curvature of the KL divergence with context-tree weighting and recovers
the matrix by least squares. It needs neither the likelihood nor the
score, and it needs no knowledge of the memory length. The user
supplies a working depth that is large enough, and the guarantees hold
uniformly above that threshold.

The estimator is strongly consistent in an iterated
limit, and with the step size
$\varepsilon_n\asymp n^{-1/6}$ its mean error is of order $n^{-1/3}$.
The same analysis gives a finite-sample rate for CTW-based estimation
of the KL divergence rate. Both exponents are confirmed numerically,
and the accuracy does not degrade as the working depth grows well
beyond the memory length. The sampling budgets planned from the
estimate deliver the requested precision, and the estimate stays
accurate on a hidden-state source that lies outside the assumptions.

The analysis is confined to finite-valued observations and to a finite
memory length. Continuous-valued sources are reached only through
quantization, and processes with infinite memory are supported by the
experiments alone. The weighted and regularized variants of the
recovery step in Section~\ref{subsec:practical} improve the
finite-sample behaviour but are not covered by the present rate
analysis. Each of these is a natural direction for further work.

\appendices
\section{Additional Experimental Results}
\label{app:extra-experiments}

\subsection{Step-size trade-off}
\label{app:eps-sweep}

This experiment separates the terms of the bound in
Theorem~\ref{thm:fim-rate} by fixing $n$ and sweeping $\varepsilon$,
and compares the step-size schedule used in Figure~\ref{fig:exp1-rate}
with the empirically optimal step. The bound has two terms that grow
as $\varepsilon$ shrinks, entering as $\varepsilon^{-1}$ and
$\varepsilon^{-2}$ and decaying with $n$, and one differencing bias of
order $\varepsilon^{2}$ that does not involve $n$. At fixed $n$ the
error is therefore $U$-shaped in $\varepsilon$, its right branch is
common to all $n$, and its minimizer decays as
$\varepsilon^\star_n\asymp n^{-1/6}$.

The source is the two-state chain~\eqref{eq:exp1-model} of
Section~\ref{subsec:exp1} at $\theta_0\in\{1,2\}$, with working depth
$L=1$. For each $(\theta_0,n,\varepsilon)$ one reference trajectory and
one trajectory at each of $\theta_0\pm\varepsilon$ are drawn, the
central measurement~\eqref{eq:central-meas} is formed and its negative
part is clipped, the scalar case of the projection onto the
semidefinite cone, and every combination is repeated $R=100$ times.
The reference trajectory and the random numbers of each perturbation
sign are shared across $\varepsilon$, so that the curves are paired
along the horizontal axis. The parameter is scalar, so the Frobenius
error is the absolute error.

\begin{figure}[t]
    \centering
    \includegraphics[width=\linewidth]{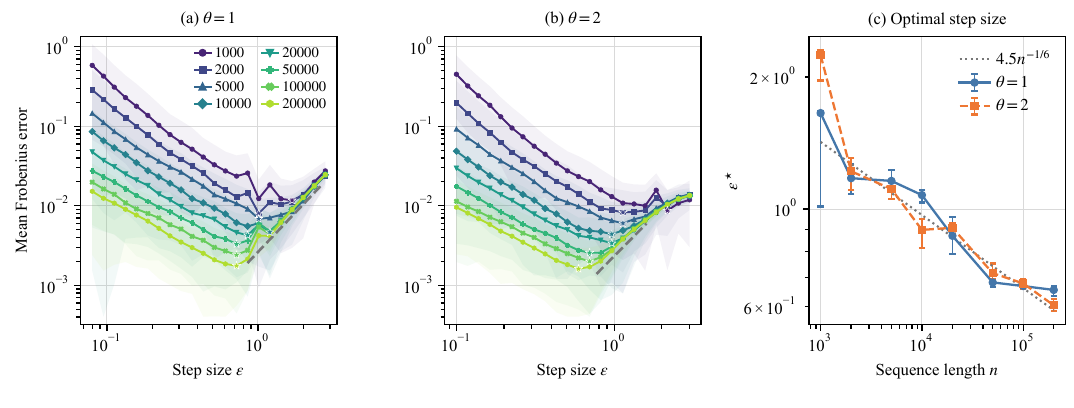}
    \caption{Mean Frobenius error of the central-difference estimator
    against the step size, on the source~\eqref{eq:exp1-model}, at
    (a)~$\theta_0=1$ and (b)~$\theta_0=2$; colours and markers give the
    trajectory lengths listed in~(a), and shaded bands are one sample
    standard deviation. Stars mark the grid minima, and the grey dashed
    segments have slope $+2$, which is the exponent of the central-difference
    bias. (c)~Interpolated empirical minimizers
    $\widehat\varepsilon^\star_n$ with paired-bootstrap $16$th to
    $84$th percentile intervals; the dotted line is the schedule
    $4.5\,n^{-1/6}$ of Figure~\ref{fig:exp1-rate}.}
    \label{fig:app-eps}
\end{figure}

Figures~\ref{fig:app-eps}(a) and \ref{fig:app-eps}(b) show the $U$-shape at both
parameter values, and the right branches of the different lengths fall
onto a common envelope, as they must if that branch is the
$n$-independent bias. The slope fitted to that branch on the large-$n$
curves lies between $1.99$ and $2.05$ at $\theta_0=1$. At $\theta_0=2$
it is close to $2$ for $\varepsilon$ up to about $1.6$ and departs
beyond, where the step leaves the range of the local expansion, and the
reference segment is drawn over that range only. On the left branch
the two $n$-dependent terms act together and no single slope is
expected.

Figure~\ref{fig:app-eps}(c) shows the minimizers, obtained by fitting
a parabola to $\log\mathrm{Err}$ against $\log\varepsilon$ on the grid
minimum and its two neighbours on each side, with uncertainty from
$2000$ paired bootstrap replicates. Fitting
$\log\widehat\varepsilon^\star_n=a+\gamma\log n$ over $n\ge10^{4}$
gives $\gamma=-0.168$ at $\theta_0=1$ and $-0.144$ at $\theta_0=2$,
with bootstrap $95\%$ intervals $[-0.202,-0.132]$ and
$[-0.178,-0.080]$; both contain the exponent $-1/6$ of
Theorem~\ref{thm:fim-rate}. The dotted line is not fitted to these
points. It is the schedule $\varepsilon_n=4.5\,n^{-1/6}$ of
Figure~\ref{fig:exp1-rate}, whose constant was fixed on a separate
pilot run, and it follows the empirical minimizers over the whole
range at both parameter values. On this source the step size the
theorem prescribes is therefore, to the resolution of the experiment,
also the best choice.

\section{The Context-Tree Weighting Algorithm}
\label{app:ctw}

This appendix details the CTW mechanism used to form the
frozen scorers in Section~\ref{subsec:kl-est}: the
Krichevsky--Trofimov leaf estimator, the weighted recursion over the
context tree, the sequential implementation of the CTW block
probability, and the positivity bound~\eqref{eq:ctw-positivity-method}
together with its counterpart for the frozen scorer.

Let the finite alphabet be $\mathcal{A}=\{0,\dots,|\mathcal{A}|-1\}$.
For a context-tree node $s$, let $T(s)$ collect the time indices at
which the past ends in $s$, so that node $s$ lies on the
CTW update path, let
$y^{(s)}=\{y_t\colon t\in T(s)\}$ be the subsequence emitted from that
context, $n_s=|T(s)|$ its length, and $b_{s,a}$ the number of times
symbol $a$ occurs in $y^{(s)}$.

\subsection{The Krichevsky-Trofimov Leaf Estimator}
\label{app:kt}

Each leaf of the context tree models the symbols emitted from its
context as an i.i.d.\ sequence, and assigns them a probability by
averaging over the unknown i.i.d.\ parameter. %
Placing a $\mathrm{Dir}(\tfrac12,\dots,\tfrac12)$ prior on that
parameter and integrating it out yields the Krichevsky-Trofimov (KT)
estimate~\cite{krichevsky1981performance}, which for the leaf
subsequence $y^{(s)}$ takes the closed form
\begin{equation}
P_e^s
= \frac{\Gamma\!\left(\tfrac{|\mathcal{A}|}{2}\right)}
       {\Gamma\!\left(n_s+\tfrac{|\mathcal{A}|}{2}\right)}
  \prod_{a\in\mathcal{A}}
  \frac{\Gamma\!\left(b_{s,a}+\tfrac12\right)}{\Gamma\!\left(\tfrac12\right)}.
\label{eq:kt-marginal}
\end{equation}%

Rather than evaluating~\eqref{eq:kt-marginal} in closed form, CTW
updates the KT estimate one symbol at a time. By the
Dirichlet-multinomial conjugacy~\cite{Bernardo-Smith-09}, observing a new symbol leaves the
posterior within the Dirichlet family, and the probability assigned to
the next symbol is the posterior mean
\begin{equation}
P_e^s\bigl(Y_{t+1}=a\mid y_{1:t}\bigr)
= \frac{b_{s,a}(t)+\tfrac12}{n_s(t)+\tfrac{|\mathcal{A}|}{2}},
\label{eq:kt-predictor}
\end{equation}%
where $b_{s,a}(t)$ and $n_s(t)$ denote the counts accumulated at node
$s$ up to time $t$. The leaf probability~\eqref{eq:kt-marginal} is
recovered as the product of the predictors~\eqref{eq:kt-predictor} over
the symbols emitted from $s$. The form of~\eqref{eq:kt-predictor}
exhibits the ``pseudocount'' of $\tfrac12$ assigned to every symbol,
ensuring that no symbol is ever predicted with zero probability.

\subsection{The Weighted Recursion}
\label{app:recursion}

CTW does not commit to a single context depth. Instead it forms, at
each node, a convex combination of two models: the KT estimate that
treats the node itself as a leaf, and the product of the probabilities
returned by the node's children, which corresponds to splitting the
context one symbol deeper. %
Writing $P_w^s$ for the weighted probability at node $s$, the recursion
sets $P_w^s=P_e^s$ at every leaf, and at every internal node
\begin{equation}
P_w^s
= \tfrac12\,P_e^s
+ \tfrac12\prod_{a\in\mathcal{A}}P_w^{as},
\label{eq:ctw-recursion}
\end{equation}%
the weight $\tfrac12$ being the one induced by the CTW prior over
context-tree topologies. Carrying~\eqref{eq:ctw-recursion} from the
leaves up to the root produces $P_w^{\mathrm{root}}$, which is the
CTW mixture probability assigned to the processed block. In the
notation of Definition~\ref{def:frozen-ctw-scorer}, the same CTW
recursion is used to build a frozen scorer from one block and then
evaluate its probability on a scoring block.
For such a scorer the node counts, the KT predictors
and the weights are computed once from the construction path; scoring
a new block then reads off the resulting predictive distribution at
each scoring context, without updating the counts.
Equation~\eqref{eq:ctw-recursion} is the computational counterpart of
the Bayesian mixture over all bounded-depth tree sources: the convex
weight at each node performs, in linear time, the averaging over tree
topologies that a direct enumeration could not afford.

\subsection{Sequential Form and Positivity}
\label{app:sequential}

The recursion~\eqref{eq:ctw-recursion} also yields a sequential
implementation of the CTW block probability; the root
mixture probability is the sequential code of the main text,
$Q^{\mathrm{CTW}}_{L}(y_{1:t})=P_w^{\mathrm{root}}(y_{1:t})$.
The one-step predictor at
the root is the ratio of consecutive mixture probabilities,
\begin{equation}
Q^{\mathrm{CTW}}_{L}\bigl(y_{t+1}\mid y_{1:t}\bigr)
= \frac{P_w^{\mathrm{root}}(y_{1:t+1})}{P_w^{\mathrm{root}}(y_{1:t})}.
\label{eq:ctw-ratio}
\end{equation}%
To express this ratio nodewise, associate with each node $s$ the
quantity
\begin{equation}
\beta_t^s := \frac{\tfrac12\,P_e^s(y_{1:t})}{P_w^s(y_{1:t})}\in(0,1),
\label{eq:beta}
\end{equation}%
which measures how much weight the recursion at $s$ places on the
node's own KT estimate relative to the contribution of its children.
Only the nodes lying on the path of the current context are affected
by a new symbol; along that path, the one-step predictor at $s$ is the
convex combination
\begin{equation}
P_w^s\bigl(Y_{t+1}=a\mid y_{1:t}\bigr)
= \beta_t^s\,P_e^s\bigl(Y_{t+1}=a\mid y_{1:t}\bigr)
+ \bigl(1-\beta_t^s\bigr)\,
  P_w^{c}\bigl(Y_{t+1}=a\mid y_{1:t}\bigr),
\label{eq:ctw-sequential}
\end{equation}%
where $c$ is the child of $s$ along the current context. Applying
\eqref{eq:ctw-sequential} from the root down to the active leaf
reproduces the root predictor~\eqref{eq:ctw-ratio}.
Since a new symbol affects only the $L+1$ nodes on its
context path, a block of length $n$ requires $O(nL)$ node updates, a
cost linear in $n$ for a fixed working depth.

The recursion~\eqref{eq:ctw-sequential} also yields the positivity
bound stated in the main text. %
The KT predictor~\eqref{eq:kt-predictor} satisfies
\begin{equation}
P_e^s\bigl(Y_{t+1}=a\mid y_{1:t}\bigr)
\;\ge\;\frac{\tfrac12}{t+\tfrac{|\mathcal{A}|}{2}}
\;=\;\frac{1}{2t+|\mathcal{A}|},
\label{eq:kt-floor}
\end{equation}%
since the count $b_{s,a}(t)$ is non-negative and $n_s(t)\le t$. Because
the root predictor is, by~\eqref{eq:ctw-sequential}, a convex
combination along the active path in which the KT term always appears,
this lower bound propagates to the root, \ie,
\begin{equation}
Q^{\mathrm{CTW}}_{L}\bigl(a\mid y_{1:t}\bigr)
\;\ge\;\frac{1}{2t+|\mathcal{A}|},
\label{eq:ctw-positivity-app}
\end{equation}%
establishing~\eqref{eq:ctw-positivity-method}.

The same argument bounds the frozen scorer of
Definition~\ref{def:frozen-ctw-scorer}. Its KT predictors are formed
from the counts of the completed construction pass over $x_{1:n}$, so $n_s\le n$
at every node and
\begin{equation}
P_e^s\bigl(a\bigr)\;\ge\;\frac{\tfrac12}{n+\tfrac{|\mathcal{A}|}{2}}
\;=\;\frac{1}{2n+|\mathcal{A}|}.
\label{eq:kt-floor-frozen}
\end{equation}
Since the frozen one-step predictor at a scoring context is again a
convex combination along the corresponding path in which the KT term
appears, the same lower bound propagates,
\begin{equation}
\widehat T^{\mathrm{CTW}}_{x_{1:n},L}\bigl(a\mid s\bigr)
\;\ge\;\frac{1}{2n+|\mathcal{A}|}.
\label{eq:frozen-positivity}
\end{equation}

The
bound~\eqref{eq:ctw-positivity-app} and its frozen
counterpart~\eqref{eq:frozen-positivity} guarantee that the negative log
scores entering the sequential entropy estimate
$-\log Q^{\mathrm{CTW}}_{L}(z_{1:n})$ and the frozen cross-entropy estimate
$-\log\widehat P^{\mathrm{CTW}}_{x_{1:n},L}(z_{1:n})$ are finite on every
sample path.

\section{Deterministic Auxiliary Lemmas}
\label{app:aux}

This appendix collects the deterministic
ingredients of the analysis: the proof of the local KL--FIM expansion
(Lemma~\ref{lem:kl-fim}, stated in
Section~\ref{subsec:kl-fim-equivalence}), and the two recovery
lemmas, on the continuity of the least-squares map and the
non-expansiveness of the PSD projection. They are invoked by the
consistency proof (Appendix~\ref{app:cons-proof}) and by the
finite-sample analysis (Appendix~\ref{app:rate-fim}).

\begin{proof}[Proof of Lemma~\ref{lem:kl-fim}]
We work first with the marginal laws
$P_{\theta_0}^{(n)}$ and $P_{\theta_0+\Delta}^{(n)}$, whose pmfs are
$p_{\theta_0}^{(n)}$ and $p_{\theta_0+\Delta}^{(n)}$, and normalize by $n$
at the end.

\emph{Step 1 (definition).}
By the definition of KL divergence,
\begin{equation}
D\bigl(P_{\theta_0}^{(n)}\,\|\,P_{\theta_0+\Delta}^{(n)}\bigr)
= \mathbb{E}_{Y_{1:n}\sim P_{\theta_0}^{(n)}}
\!\left[\log p_{\theta_0}^{(n)}(Y_{1:n})
       -\log p_{\theta_0+\Delta}^{(n)}(Y_{1:n})\right].
\label{eq:klfim-def}
\end{equation}

\emph{Step 2 (third-order Taylor expansion).}
Fix a realization $y_{1:n}$. By Assumption~\ref{ass:reg}(ii), the map
$\theta\mapsto\log p_\theta^{(n)}(y_{1:n})$ is three times continuously
differentiable on the open neighborhood $\mathcal N(\theta_0)$.
Since $\theta_0$ is an interior point of this neighborhood, choose
$\varepsilon_0>0$ such that the compact set
\[
\mathcal K
:=\overline{B(\theta_0,\varepsilon_0)}
\subset\mathcal N(\theta_0).
\]
For every $\|\Delta\|_2\le\varepsilon_0$, the segment
$\{\theta_0+t\Delta:t\in[0,1]\}$ is contained in $\mathcal K$.
Taylor's theorem with Lagrange remainder therefore gives
\begin{align}
\log p_{\theta_0+\Delta}^{(n)}(y_{1:n})
&= \log p_{\theta_0}^{(n)}(y_{1:n})
 + \Delta^\top g_n(y_{1:n};\theta_0) %
 + \tfrac12\,\Delta^\top H_n(y_{1:n};\theta_0)\,\Delta
 + R_3(y_{1:n};\theta_0,\Delta),
\label{eq:klfim-taylor}
\end{align}
where $g_n(\cdot;\theta_0)
:=\nabla_\theta\log p_\theta^{(n)}(\cdot)|_{\theta=\theta_0}$ is the score,
$H_n(\cdot;\theta_0)
:=\nabla_\theta^2\log p_\theta^{(n)}(\cdot)|_{\theta=\theta_0}$ is the Hessian, and the
remainder satisfies
\begin{equation}
|R_3(y_{1:n};\theta_0,\Delta)|
\le \frac{1}{6}
\sup_{t\in[0,1]}
\bigl\|\nabla_\theta^{\otimes3}
\log p_{\theta_0+t\Delta}^{(n)}(y_{1:n})\bigr\|
\;\|\Delta\|_2^3 ,
\label{eq:klfim-remainder}
\end{equation}
with $\nabla_\theta^{\otimes3}$ the third-derivative tensor and $\|\cdot\|$ its operator norm.

\emph{Step 3 (take expectations; score and Hessian terms).}
Substitute~\eqref{eq:klfim-taylor} into~\eqref{eq:klfim-def}. The
$\log p_{\theta_0}^{(n)}(y_{1:n})$ terms cancel, leaving
\begin{align}
D\bigl(P_{\theta_0}^{(n)}\,\|\,P_{\theta_0+\Delta}^{(n)}\bigr)
&= -\,\Delta^\top \mathbb{E}_{\theta_0}[g_n]
 - \tfrac12\,\Delta^\top \mathbb{E}_{\theta_0}[H_n]\,\Delta
 - \mathbb{E}_{\theta_0}[R_3].
\label{eq:klfim-three}
\end{align}
Since the alphabet is finite, differentiation commutes
with the finite-sum expectations; the score and Fisher information
identities of Assumption~\ref{ass:reg}(ii) give
\[
\mathbb E_{\theta_0}[g_n]=0,
\qquad
-\mathbb E_{\theta_0}[H_n]=J_{\theta_0}^{(n)}.
\]
Consequently,~\eqref{eq:klfim-three} becomes
\begin{equation}
D\bigl(P_{\theta_0}^{(n)}\,\|\,P_{\theta_0+\Delta}^{(n)}\bigr)
= \tfrac12\,\Delta^\top J_{\theta_0}^{(n)}\Delta
 - \mathbb{E}_{\theta_0}[R_3].
\label{eq:klfim-quad}
\end{equation}

\emph{Step 4 (finite-block remainder bound).}
We first control the Taylor remainder before taking the per-symbol
limit. For $n\ge L^\star$, Assumption~\ref{ass:source} gives the
additive log-pmf decomposition
\begin{equation}
\log p_\theta^{(n)}(y_{1:n})
= \log\pi_{y_{1:L^\star}}(\theta)
 + \sum_{t=L^\star+1}^{n}\log T_{s_t,y_t}(\theta),
\qquad s_t:=y_{t-L^\star:t-1},
\label{eq:logli-decom}
\end{equation}
where
\[
\pi_s(\theta):=p_\theta^{(L^\star)}(s),
\qquad
T_{s,a}(\theta)
:=\Pr_\theta(Y_t=a\mid Y_{t-L^\star:t-1}=s).
\]
By stationarity and the Markov property,
\[
T_{s,a}(\theta)
=\frac{p_\theta^{(L^\star+1)}(s,a)}
       {p_\theta^{(L^\star)}(s)},
\qquad
\log T_{s,a}(\theta)
=\log p_\theta^{(L^\star+1)}(s,a)
 -\log p_\theta^{(L^\star)}(s).
\]
For $L^\star=0$, we use the conventions
$p_\theta^{(0)}(\emptyset)=\pi_{\emptyset}(\theta)=1$.
Assumption~\ref{ass:reg}(ii) therefore implies that
$\log\pi_s(\theta)$ and $\log T_{s,a}(\theta)$ are $C^3$ on
$\mathcal K$. Since $\mathcal K$ is compact and the sets of contexts
and symbols are finite, the constants
\[
M_\pi
:=\max_{s\in\mathcal A^{L^\star}}
\sup_{\theta\in\mathcal K}
\bigl\|\nabla_\theta^{\otimes3}\log\pi_s(\theta)\bigr\|,
\qquad
M_T
:=\max_{\substack{s\in\mathcal A^{L^\star}\\a\in\mathcal A}}
\sup_{\theta\in\mathcal K}
\bigl\|\nabla_\theta^{\otimes3}\log T_{s,a}(\theta)\bigr\|
\]
are finite. Differentiating~\eqref{eq:logli-decom} three times gives,
uniformly over $y_{1:n}$ and $\theta\in\mathcal K$,
\[
\bigl\|\nabla_\theta^{\otimes3}
\log p_\theta^{(n)}(y_{1:n})\bigr\|
\le M_\pi+(n-L^\star)M_T
\le M_\pi+nM_T.
\]
Combining this bound with~\eqref{eq:klfim-remainder} yields
\[
\frac1n\bigl|\mathbb E_{\theta_0}[R_3]\bigr|
\le
\frac16\left(M_T+\frac{M_\pi}{n}\right)\|\Delta\|_2^3.
\]

\emph{Step 5 (per-symbol limit).}
Divide~\eqref{eq:klfim-quad} by $n$. By the definitions of the
normalized block KL divergence, the KL divergence rate, and the
Fisher information rate, whose limits exist for the stationary
finite-state Markov family under
Assumptions~\ref{ass:source}--\ref{ass:reg},
\[
\bar D_n\bigl(P_{\theta_0}^{(n)}\,\|\,P_{\theta_0+\Delta}^{(n)}\bigr)
\longrightarrow
\bar D\bigl(P_{\theta_0}\,\|\,P_{\theta_0+\Delta}\bigr),
\qquad
\frac1nJ_{\theta_0}^{(n)}
\longrightarrow I(\theta_0).
\]
Indeed, differentiating the additive
decomposition~\eqref{eq:logli-decom} gives
\begin{equation}
\nabla_\theta\log p_\theta^{(n)}(Y_{1:n})
=\nabla_\theta\log\pi_{Y_{1:L^\star}}(\theta)
+\sum_{t=L^\star+1}^{n}\nabla_\theta\log T_\theta(Y_t\mid S_t),
\label{eq:score-decomp}
\end{equation}
and each summand is a martingale difference: since
$\sum_a T_\theta(a\mid s)=1$ for every $s$, differentiation of this
identity gives
$\mathbb E_\theta[\nabla_\theta\log T_\theta(Y_t\mid S_t)\mid\mathcal F_{t-1}]
=\sum_a\nabla_\theta T_\theta(a\mid S_t)=0$. The cross terms in the
second moment of~\eqref{eq:score-decomp} therefore vanish, and only the
diagonal terms and the boundary term remain. Dividing by $n$ and using
stationarity,
\begin{equation}
\frac1n J_{\theta_0}^{(n)}
\longrightarrow
I(\theta_0)
=\sum_{s\in\mathcal A^{L^\star}}\pi_{\theta_0}(s)
\sum_{a\in\mathcal A}T_{\theta_0}(a\mid s)\,
\nabla_\theta\log T_{\theta_0}(a\mid s)\,
\nabla_\theta\log T_{\theta_0}(a\mid s)^\top,
\label{eq:fim-stationary}
\end{equation}
the boundary contribution being $O(1/n)$. The same decomposition
applied to the log-likelihood ratio gives the first limit, with
\begin{equation}
\bar D\bigl(P_{\theta_0}\,\|\,P_{\theta}\bigr)
=\sum_{s\in\mathcal A^{L^\star}}\pi_{\theta_0}(s)
\sum_{a\in\mathcal A}T_{\theta_0}(a\mid s)\,
\log\frac{T_{\theta_0}(a\mid s)}{T_{\theta}(a\mid s)}
\label{eq:kl-rate-closed}
\end{equation}
for every $\theta\in\mathcal N(\theta_0)$.
It follows from~\eqref{eq:klfim-quad} that
$-n^{-1}\mathbb E_{\theta_0}[R_3]$ also has a limit. Define
\[
r(\Delta)
:=
-\lim_{n\to\infty}
\frac1n\mathbb E_{\theta_0}[R_3].
\]
Taking $n\to\infty$ in the normalized form of
\eqref{eq:klfim-quad} and using the bound from Step 4 gives
\begin{equation}
\bar D\bigl(P_{\theta_0}\,\|\,P_{\theta_0+\Delta}\bigr)
=\tfrac12\,\Delta^\top I(\theta_0)\,\Delta+r(\Delta),
\qquad
|r(\Delta)|
\le \frac{M_T}{6}\,\|\Delta\|_2^3.
\label{eq:klfim-persymbol}
\end{equation}
Thus~\eqref{eq:kl-fim-expansion} holds with $C_3:=M_T/6$.
\end{proof}

The central-difference measurements of
Section~\ref{subsec:ls-recovery} average the two perturbations
$\pm\varepsilon u$, which cancels the odd-order terms of the expansion
and improves the bias by one order. The next lemma makes this precise.
Recall from~\eqref{eq:kl-rate-closed} that the divergence rate is a
finite sum in which only the scoring law depends on the perturbation.

\begin{lemma}[Symmetric KL expansion]
\label{lem:sym-expansion}
Suppose Assumptions~\ref{ass:source}--\ref{ass:reg}
hold and that the maps $\theta\mapsto T_\theta(a\mid s)$ are four
times continuously differentiable on $\mathcal N(\theta_0)$ with
derivatives bounded uniformly in $(s,a)$. For a unit vector $u$ set
$F_u(\varepsilon):=\bar D(P_{\theta_0}\,\|\,P_{\theta_0+\varepsilon u})$.
Then $F_u$ is four times continuously differentiable on
$[-\varepsilon_0,\varepsilon_0]$, with
\begin{equation}
F_u(0)=F_u'(0)=0,
\qquad
F_u''(0)=u^\top I(\theta_0)\,u,
\label{eq:Fu-derivatives}
\end{equation}
and $\sup_{\|u\|_2=1}\sup_{|\varepsilon|\le\varepsilon_0}|F_u^{(4)}(\varepsilon)|<\infty$.
Consequently there is a constant $C_4<\infty$, independent of $u$ and
$\varepsilon$, such that
\begin{equation}
\bigl|F_u(\varepsilon)+F_u(-\varepsilon)
-\varepsilon^2\,u^\top I(\theta_0)\,u\bigr|
\le C_4\,\varepsilon^4 .
\label{eq:sym-expansion}
\end{equation}
\end{lemma}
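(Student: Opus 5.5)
The plan is to work directly from the closed form~\eqref{eq:kl-rate-closed}. Writing $\theta_\varepsilon:=\theta_0+\varepsilon u$, we have
\[
F_u(\varepsilon)=\sum_{s}\pi_{\theta_0}(s)\sum_a T_{\theta_0}(a\mid s)\bigl[\log T_{\theta_0}(a\mid s)-\log T_{\theta_\varepsilon}(a\mid s)\bigr].
\]
This is a finite sum in which only $\log T_{\theta_\varepsilon}(a\mid s)$ depends on $\varepsilon$, and the weights $\pi_{\theta_0}(s)T_{\theta_0}(a\mid s)$ are fixed.

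\textbf{Regularity.} By Assumption~\ref{ass:reg}(i), $T_\theta(a\mid s)\ge\delta$ on $\mathcal N(\theta_0)$. The map $x\mapsto\log x$ is smooth on $[\delta,1]$, with $k$th derivative bounded by $(k-1)!\,\delta^{-k}$. Composing it with the $C^4$ maps $\theta\mapsto T_\theta(a\mid s)$ and with the affine map $\varepsilon\mapsto\theta_\varepsilon$ therefore gives $C^4$ functions of $\varepsilon$ on $[-\varepsilon_0,\varepsilon_0]$. Here $\varepsilon_0$ is taken as in Lemma~\ref{lem:kl-fim}, so the segment stays in $\overline{B(\theta_0,\varepsilon_0)}\subset\mathcal N(\theta_0)$.

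\textbf{Uniform bound on the fourth derivative.} By Fa\`a di Bruno's formula, $\frac{d^4}{d\varepsilon^4}\log T_{\theta_\varepsilon}(a\mid s)$ is a finite combination of products of the derivatives $\log^{(k)}$ evaluated at $T_{\theta_\varepsilon}$ (each bounded by a power of $\delta^{-1}$) and directional derivatives $D^jT_{\theta_\varepsilon}[u,\dots,u]$ with $j\le4$. Because $\|u\|_2=1$, each directional derivative is bounded by the assumed uniform bound on the $j$th derivative tensor. Summing against the probability weights, which total one, gives $\sup_{u}\sup_{|\varepsilon|\le\varepsilon_0}|F_u^{(4)}|<\infty$. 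This bound is independent of $u$ and $\varepsilon$.

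\textbf{Derivatives at $0$.} Clearly $F_u(0)=0$. Differentiating once gives
\[
F_u'(0)=-\sum_s\pi(s)\sum_a\partial_\varepsilon T_{\theta_\varepsilon}(a\mid s)\big|_0=0,
\]
because $\sum_aT_\theta(a\mid s)=1$ for every $\theta$; this is the same martingale-difference identity used in the proof of Lemma~\ref{lem:kl-fim}. For the second derivative,
\[
\partial^2_\varepsilon(-\log T)=-\frac{\partial^2_\varepsilon T}{T}+\frac{(\partial_\varepsilon T)^2}{T^2}.
\]
After weighting by $T_{\theta_0}$, the first term sums to zero over $a$ for the same reason. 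The second term gives
\[
\sum_s\pi(s)\sum_a T\,\bigl(u^\top\nabla\log T\bigr)^2=u^\top I(\theta_0)u
\]
by~\eqref{eq:fim-stationary}.

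\textbf{Symmetric expansion.} Taylor's theorem with Lagrange remainder to order four at $\pm\varepsilon$ gives
\[
F_u(\pm\varepsilon)=F_u(0)\pm\varepsilon F_u'(0)+\tfrac{\varepsilon^2}{2}F_u''(0)\pm\tfrac{\varepsilon^3}{6}F_u'''(0)+\tfrac{\varepsilon^4}{24}F_u^{(4)}(\xi_\pm).
\]
Adding the two expansions cancels the odd terms. Using $F_u(0)=F_u'(0)=0$ then yields~\eqref{eq:sym-expansion} with $C_4:=\tfrac1{12}\sup_u\sup_{|\varepsilon|\le\varepsilon_0}|F_u^{(4)}|$.

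\textbf{Main obstacle.} The only delicate point is uniformity in $u$ and $\varepsilon$ of the fourth-derivative bound. This rests on the floor $\delta$, which controls the derivatives of $\log$, and on the assumed uniform derivative bounds. Continuity of $F_u^{(4)}$, and hence its boundedness on the compact set, would also follow from compactness of $\{\|u\|_2=1\}\times[-\varepsilon_0,\varepsilon_0]$, but the explicit Fa\`a di Bruno bound is cleaner.
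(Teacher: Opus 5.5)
Your proposal is correct and follows the paper's proof essentially step for step: the closed form \eqref{eq:kl-rate-closed}, $C^4$ regularity from the floor $\delta$ and the assumed derivative bounds, the sum-to-one identity that removes $F_u'(0)$ and the second-derivative-of-$T$ part of $F_u''(0)$, comparison with \eqref{eq:fim-stationary}, and the order-four Lagrange expansion giving $C_4=\tfrac{1}{12}\sup_u\sup_{|\varepsilon|\le\varepsilon_0}|F_u^{(4)}|$. Your Fa\`a di Bruno bound only makes explicit the uniformity in $u$ and $\varepsilon$ that the paper states in one line.
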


\begin{proof}
By~\eqref{eq:kl-rate-closed},
\begin{equation}
F_u(\varepsilon)
=\sum_{s}\pi_{\theta_0}(s)\sum_a T_{\theta_0}(a\mid s)\,
\log\frac{T_{\theta_0}(a\mid s)}{T_{\theta_0+\varepsilon u}(a\mid s)},
\label{eq:Fu-closed}
\end{equation}
a finite sum in which the weights $\pi_{\theta_0}(s)T_{\theta_0}(a\mid s)$
do not depend on $\varepsilon$. Each summand is a composition of
$\log$ with $\varepsilon\mapsto T_{\theta_0+\varepsilon u}(a\mid s)$,
which is four times continuously differentiable by assumption and
bounded below by $\delta$ (Assumption~\ref{ass:reg}(i)); the alphabet
and the context set being finite, $F_u$ is four times continuously
differentiable and its fourth derivative is bounded uniformly in $u$
and $\varepsilon$ by the assumed bounds on the derivatives and $\delta$.

For the values at $\varepsilon=0$, write
$\dot T(a\mid s):=u^\top\nabla_\theta T_{\theta_0}(a\mid s)$ and
$\ddot T(a\mid s):=u^\top\nabla_\theta^2T_{\theta_0}(a\mid s)\,u$.
Differentiating~\eqref{eq:Fu-closed} once gives
\[
F_u'(\varepsilon)
=-\sum_{s}\pi_{\theta_0}(s)\sum_a T_{\theta_0}(a\mid s)\,
\frac{\partial_\varepsilon T_{\theta_0+\varepsilon u}(a\mid s)}
{T_{\theta_0+\varepsilon u}(a\mid s)} .
\]
At $\varepsilon=0$ the ratio reduces to
$\dot T(a\mid s)/T_{\theta_0}(a\mid s)$, the factor
$T_{\theta_0}(a\mid s)$ cancels, and
$\sum_a\dot T(a\mid s)=u^\top\nabla_\theta\sum_a T_{\theta_0}(a\mid s)=0$
because the transition probabilities sum to one. Hence
$F_u'(0)=0$, and $F_u(0)=0$ is immediate. Differentiating once more
and evaluating at $\varepsilon=0$,
\[
F_u''(0)
=\sum_{s}\pi_{\theta_0}(s)\sum_a T_{\theta_0}(a\mid s)
\left[
\left(\frac{\dot T(a\mid s)}{T_{\theta_0}(a\mid s)}\right)^{\!2}
-\frac{\ddot T(a\mid s)}{T_{\theta_0}(a\mid s)}
\right],
\]
and the second group of terms vanishes for the same reason,
$\sum_a\ddot T(a\mid s)=0$. Since
$\dot T(a\mid s)/T_{\theta_0}(a\mid s)=u^\top\nabla_\theta\log T_{\theta_0}(a\mid s)$,
comparison with~\eqref{eq:fim-stationary} gives
$F_u''(0)=u^\top I(\theta_0)u$, which
proves~\eqref{eq:Fu-derivatives}.

Finally, Taylor's theorem with Lagrange remainder at
order four gives
\[
F_u(\pm\varepsilon)
=\pm F_u'(0)\,\varepsilon+\tfrac12F_u''(0)\,\varepsilon^2
\pm\tfrac16F_u'''(0)\,\varepsilon^3
+\tfrac1{24}F_u^{(4)}(\xi_\pm)\,\varepsilon^4
\]
for some $\xi_\pm$ between $0$ and $\pm\varepsilon$. Adding the two
expansions cancels the odd-order terms and
leaves~\eqref{eq:sym-expansion} with
$C_4:=\tfrac1{12}\sup_{\|u\|_2=1}\sup_{|\varepsilon|\le\varepsilon_0}|F_u^{(4)}(\varepsilon)|$.
\end{proof}

The next lemma is deterministic, and it states that the OLS map is
Lipschitz in its inputs.

\begin{lemma}[Continuity of OLS recovery]
\label{lem:ls-continuity}
Let $U\in\mathbb{R}^{m\times q}$ have full column rank $q$. For any
$v,\widetilde v\in\mathbb{R}^m$, define the OLS solutions
\[
\widehat f:=(U^\top U)^{-1}U^\top v,
\qquad
\widetilde f:=(U^\top U)^{-1}U^\top\widetilde v.
\]
Then
\begin{equation}
\bigl\|\widehat f-\widetilde f\bigr\|_2
\le \kappa_U\,\bigl\|v-\widetilde v\bigr\|_2,
\qquad
\kappa_U
:=\bigl\|(U^\top U)^{-1}U^\top\bigr\|_{\mathrm{op}}
=\frac{1}{\sigma_{\min}(U)},
\label{eq:ls-lipschitz}
\end{equation}
where $\|\cdot\|_{\mathrm{op}}$ is the operator norm induced by the
$2$-norm and $\sigma_{\min}(U)$ is the smallest singular value of
$U$. Moreover,
\begin{equation}
\left\|
\operatorname{unvec}_\triangle(\widehat f)
-\operatorname{unvec}_\triangle(\widetilde f)
\right\|_F
\le
\sqrt{2}\,\kappa_U\,
\|v-\widetilde v\|_2.
\label{eq:ls-matrix-lipschitz}
\end{equation}
\end{lemma}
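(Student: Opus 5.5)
The plan is to prove both bounds directly from linearity of the OLS map and elementary properties of the half-vectorization.

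For the first bound, $\widehat f-\widetilde f=(U^\top U)^{-1}U^\top(v-\widetilde v)$, because $U^\top U$ is invertible when $U$ has full column rank. By the definition of the induced operator norm this immediately gives $\|\widehat f-\widetilde f\|_2\le\|(U^\top U)^{-1}U^\top\|_{\mathrm{op}}\|v-\widetilde v\|_2$.

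It remains to identify this norm with $1/\sigma_{\min}(U)$. I will use a thin singular value decomposition $U=W\Sigma V^\top$, with $W\in\mathbb R^{m\times q}$ having orthonormal columns, $\Sigma=\operatorname{diag}(\sigma_1,\dots,\sigma_q)$ with all $\sigma_i>0$, and $V$ orthogonal. Then
\[
(U^\top U)^{-1}U^\top=V\Sigma^{-2}V^\top V\Sigma W^\top=V\Sigma^{-1}W^\top,
\]
which is the pseudoinverse of $U$. Its largest singular value is $1/\sigma_{\min}(U)$, so the operator-norm identity in~\eqref{eq:ls-lipschitz} follows.

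For the matrix bound~\eqref{eq:ls-matrix-lipschitz}, I will use that $\operatorname{unvec}_\triangle$ is linear. So it suffices to show $\|\operatorname{unvec}_\triangle(g)\|_F\le\sqrt2\|g\|_2$ for every $g\in\mathbb R^q$, and then apply this with $g=\widehat f-\widetilde f$ together with the first bound.

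The diagonal coordinates $g_1,\dots,g_d$ each appear once in the matrix. Each off-diagonal coordinate appears twice, in positions $(k,\ell)$ and $(\ell,k)$. Therefore
\[
\|\operatorname{unvec}_\triangle(g)\|_F^2=\sum_{k}g_k^2+2\sum_{k<\ell}g_{k\ell}^2\le2\|g\|_2^2.
\]
Taking square roots gives the factor $\sqrt2$.

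No step here is a real obstacle. The only point needing care is the operator-norm identity, which the thin SVD settles. One should also note that $\sqrt2$ is attained only when all the mass lies in the off-diagonal coordinates, so the constant is sharp but harmless.
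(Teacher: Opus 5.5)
Your proposal is correct and follows essentially the same route as the paper: linearity of the OLS map gives the operator-norm bound, the norm of the pseudo-inverse equals $1/\sigma_{\min}(U)$, and counting each off-diagonal coordinate twice gives the factor $\sqrt2$. The only difference is that you verify the reciprocal-singular-value fact explicitly through the thin SVD $(U^\top U)^{-1}U^\top=V\Sigma^{-1}W^\top$, whereas the paper cites it from Horn and Johnson.
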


\begin{proof}
The OLS map is linear with matrix
$U^\dagger=(U^\top U)^{-1}U^\top$, the Moore--Penrose
pseudo-inverse of $U$ in the full-column-rank case. Hence
\[
\|\widehat f-\widetilde f\|_2
=\bigl\|U^\dagger(v-\widetilde v)\bigr\|_2
\le\|U^\dagger\|_{\mathrm{op}}\|v-\widetilde v\|_2
=\kappa_U\|v-\widetilde v\|_2,
\]
which proves~\eqref{eq:ls-lipschitz}. Since $U$ has full column rank,
the singular values of $U^\dagger$ are the reciprocals of those of
$U$~\cite{horn2012matrix}, so
$\kappa_U=1/\sigma_{\min}(U)<\infty$.

Under the vectorization convention~\eqref{eq:f-vector}, every
off-diagonal coordinate appears once in $f$ but twice in the symmetric
matrix $\operatorname{unvec}_\triangle(f)$. Therefore, for every
$h\in\mathbb R^q$, writing
$H:=\operatorname{unvec}_\triangle(h)$,
\[
\|\operatorname{unvec}_\triangle(h)\|_F^2
=\sum_{k=1}^d H_{kk}^2+2\sum_{k<\ell}H_{k\ell}^2
\le 2\|h\|_2^2.
\]
Applying this inequality to
$h=\widehat f-\widetilde f$ and then using
\eqref{eq:ls-lipschitz} proves~\eqref{eq:ls-matrix-lipschitz}.
\end{proof}

The final lemma controls the post-fit projection.

\begin{lemma}[Non-expansiveness of the PSD projection]
\label{lem:psd-proj}
Let
$\mathcal{S}_+^d:=\{X\in\mathbb{R}^{d\times d}_{\mathrm{sym}}\colon
X\succeq0\}$, and let $\Pi_{\succeq0}$ denote the orthogonal
projection onto $\mathcal{S}_+^d$ with respect to the Frobenius inner
product, computed by the eigenvalue clipping~\eqref{eq:eig-clip}. For
any $A,C\in\mathbb{R}^{d\times d}_{\mathrm{sym}}$,
\begin{equation}
\bigl\|\Pi_{\succeq0}(A)-\Pi_{\succeq0}(C)\bigr\|_F
\le \|A-C\|_F.
\label{eq:psd-projection-lipschitz}
\end{equation}
Consequently, for any
$B\in\mathbb{R}^{d\times d}_{\mathrm{sym}}$ with $B\succeq0$,
\begin{equation}
\bigl\|\Pi_{\succeq0}(A)-B\bigr\|_F \le \bigl\|A-B\bigr\|_F .
\label{eq:psd-nonexpansive}
\end{equation}
\end{lemma}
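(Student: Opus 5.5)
The plan is to use the fact that $\mathcal S_+^d$ is a nonempty, closed, convex subset of the Hilbert space $(\mathbb R^{d\times d}_{\mathrm{sym}},\langle X,Y\rangle_F:=\operatorname{tr}(XY))$. Orthogonal projections onto such sets are firmly non-expansive, so both claims follow once the clipping formula~\eqref{eq:eig-clip} is identified with that projection. Convexity and closedness of $\mathcal S_+^d$ are immediate, since $X\succeq0$ iff $x^\top Xx\ge0$ for all $x$, an intersection of closed half-spaces. Existence and uniqueness of the minimizer in~\eqref{eq:psd-proj} then follow from strict convexity of the squared Frobenius norm.

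The first step is to characterize the projection by the variational inequality. $P=\Pi_{\succeq0}(A)$ holds iff $P\in\mathcal S_+^d$ and $\langle A-P,\,X-P\rangle_F\le0$ for all $X\in\mathcal S_+^d$. This is derived by expanding $\|A-(P+t(X-P))\|_F^2$ for $t\in[0,1]$ and requiring the derivative at $t=0^+$ to be nonnegative. To confirm that eigenvalue clipping produces this $P$, write $A=V\Lambda V^\top$ and $P=V\Lambda_+V^\top$, so that $A-P=-V\Lambda_-V^\top$ with $\Lambda_-:=\max(-\Lambda,0)$. Then $\langle A-P,P\rangle_F=-\operatorname{tr}(\Lambda_-\Lambda_+)=0$, because the two diagonal matrices have disjoint supports. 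Also $\langle A-P,X\rangle_F=-\operatorname{tr}(V\Lambda_-V^\top X)\le0$ for $X\succeq0$, since the trace of a product of two PSD matrices is nonnegative. The variational inequality follows, and~\eqref{eq:eig-clip} is the orthogonal projection.

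The second step proves~\eqref{eq:psd-projection-lipschitz}. Set $P_A=\Pi_{\succeq0}(A)$ and $P_C=\Pi_{\succeq0}(C)$. Apply the variational inequality for $A$ with $X=P_C$, and for $C$ with $X=P_A$, and add the two. This gives $\langle (A-C)-(P_A-P_C),\,P_A-P_C\rangle_F\ge0$, that is, $\|P_A-P_C\|_F^2\le\langle A-C,P_A-P_C\rangle_F$. Cauchy--Schwarz then yields $\|P_A-P_C\|_F\le\|A-C\|_F$.

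The final step obtains~\eqref{eq:psd-nonexpansive} by taking $C=B\succeq0$, using that $\Pi_{\succeq0}(B)=B$: the minimum distance from $B$ to $\mathcal S_+^d$ is $0$ and is attained only at $B$. The only point needing care is the trace-nonnegativity fact $\operatorname{tr}(MX)\ge0$ for $M,X\succeq0$, which follows from $\operatorname{tr}(MX)=\operatorname{tr}(M^{1/2}XM^{1/2})\ge0$. No step is a genuine obstacle.
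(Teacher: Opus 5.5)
Your proof is correct and follows the same route as the paper: identify eigenvalue clipping with the Frobenius-metric projection onto the closed convex cone $\mathcal S_+^d$, use firm non-expansiveness of that projection, and specialize to $C=B$ using $\Pi_{\succeq0}(B)=B$. The only difference is that you prove both ingredients, checking the variational inequality for the clipped matrix and deriving firm non-expansiveness from it. The paper instead asserts the identification and cites Bauschke--Combettes for the non-expansiveness.
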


\begin{proof}
For a symmetric matrix $A=V\Lambda V^\top$, minimizing
$\|X-A\|_F$ over $X\succeq0$ amounts to replacing each negative
eigenvalue of $A$ by zero, so the eigenvalue
clipping~\eqref{eq:eig-clip} is precisely the metric projection onto
$\mathcal S_+^d$.
The cone $\mathcal{S}_+^d$ is a closed convex subset of the Hilbert
space
$(\mathbb{R}^{d\times d}_{\mathrm{sym}},
\langle\cdot,\cdot\rangle_F)$. The metric projection onto a closed
convex set in a Hilbert space is firmly non-expansive, and therefore
$1$-Lipschitz~\cite[Prop.~4.16]{bauschke2017convex}. This gives
\eqref{eq:psd-projection-lipschitz}. If $B\succeq0$, then
$B\in\mathcal{S}_+^d$ and $\Pi_{\succeq0}(B)=B$. Taking $C=B$ in
\eqref{eq:psd-projection-lipschitz} gives
\eqref{eq:psd-nonexpansive}.
\end{proof}

\section{Proof of the Consistency Theorem}
\label{app:cons-proof}

This appendix proves Lemma~\ref{lem:ctw-kl} and
Theorem~\ref{thm:consistency}.

\begin{proof}[Proof of Lemma~\ref{lem:ctw-kl}]
By Assumption~\ref{ass:source}, both $P_{\theta_0}$ and
$P_{\theta_0+\Delta}$ are stationary finite-alphabet Markov sources of
order at most $L^\star$ over $\mathcal{A}$, and their context chains
are irreducible and aperiodic. Assumption~\ref{ass:reg}(i) gives
$\Pr_\vartheta
\bigl(
Y_t=a\mid Y_{t-L^\star:t-1}=s
\bigr)
\in[\delta,1-\delta]$
for every
$\vartheta\in\{\theta_0,\theta_0+\Delta\}$,
$s\in\mathcal A^{L^\star}$, and $a\in\mathcal A$.
Hence, in particular every conditional probability is strictly less
than $1$; together with $\bar{D}<\infty$, this verifies the assumptions
of \cite[Theorem~2]{cai2006universal} for both sources with order
$L^\star$ and CTW working depth $L\ge L^\star$.

We expand the estimator to make the application explicit. Let
$Z_{1:n}\sim P_{\theta_0}^{(n)}$ be the reference path and
$X^{(\Delta)}_{1:n}\sim P_{\theta_0+\Delta}^{(n)}$ be the independent
auxiliary path for the perturbed source. By construction~\eqref{eq:kl-est},

\[
\widehat{\bar{D}}_n
= \widehat{\bar{H}}_n(P_{\theta_0},P_{\theta_0+\Delta})
- \widehat{\bar{H}}_n(P_{\theta_0})
= -\tfrac1n\log
\widehat P^{\mathrm{CTW}}_{X^{(\Delta)}_{1:n},L}(Z_{1:n})
  + \tfrac1n\log
Q^{\mathrm{CTW}}_{L}(Z_{1:n}),
\]

where the entropy term uses the sequential CTW code
$Q^{\mathrm{CTW}}_{L}$ of the reference path, whereas the cross term uses a
frozen CTW scorer built from the independent auxiliary sequence.
The two terms are treated separately.

\emph{Cross term.} The frozen scorer built from an
independent auxiliary sequence and applied to $Z_{1:n}$ is the
cross-term estimator of \cite[Theorem~2]{cai2006universal}
up to the convention on the first $L$ construction
positions, which does not affect the almost-sure limits of the node
frequencies, with
$Z_{1:n}$ playing the role of the sequence to be scored and
$X^{(\Delta)}_{1:n}$ that of the independent auxiliary sequence. That
theorem gives
\[
-\tfrac1n\log
\widehat P^{\mathrm{CTW}}_{X^{(\Delta)}_{1:n},L}(Z_{1:n})
\;\xrightarrow[n\to\infty]{\textnormal{a.s.}}\;
\bar{H}(P_{\theta_0},P_{\theta_0+\Delta}).
\]

\emph{Entropy term.} Here the estimator departs from
\cite{cai2006universal}, which freezes a model built from $Z_{1:n}$ and
then scores $Z_{1:n}$ with it; we use instead the sequential code
$Q^{\mathrm{CTW}}_{L}$, whose universality gives the limit directly.
Write
$R_n:=\log\bigl(p_{\theta_0}^{(n)}(Z_{1:n})/Q^{\mathrm{CTW}}_{L}(Z_{1:n})\bigr)$,
so that
\[
-\tfrac1n\log Q^{\mathrm{CTW}}_{L}(Z_{1:n})
=-\tfrac1n\log p_{\theta_0}^{(n)}(Z_{1:n})+\tfrac1nR_n .
\]
The deterministic pointwise redundancy bound
$R_n\le\gamma\log n+c_0$ of Lemma~\ref{lem:self-red} gives
$\limsup_n\tfrac1nR_n\le0$. In the other direction,
$Q^{\mathrm{CTW}}_{L}$ is a pmf on
$\mathcal A^n$, so
$\mathbb E_{\theta_0}[Q^{\mathrm{CTW}}_{L}(Z_{1:n})/p_{\theta_0}^{(n)}(Z_{1:n})]\le1$
and Markov's inequality gives
$\mathbb P(R_n\le-2\log n)\le n^{-2}$; these probabilities are
summable, so $\liminf_n\tfrac1nR_n\ge0$ almost surely by the
Borel--Cantelli lemma. Hence $\tfrac1nR_n\to0$ almost surely, and the
Shannon--McMillan--Breiman theorem
\cite[Thm.~16.8.1]{cover1999elements} applied to the stationary
ergodic source $P_{\theta_0}$ gives
$-\tfrac1n\log p_{\theta_0}^{(n)}(Z_{1:n})\to\bar H(P_{\theta_0})$
almost surely. Therefore
\[
-\tfrac1n\log
Q^{\mathrm{CTW}}_{L}(Z_{1:n})
\;\xrightarrow[n\to\infty]{\textnormal{a.s.}}\;
\bar{H}(P_{\theta_0}).
\]

Subtracting the two almost-sure limits and using
\eqref{eq:kl-rate-entropy} gives~\eqref{eq:ctw-kl-as}.
\end{proof}

\begin{proof}[Proof of Theorem~\ref{thm:consistency}]
Fix $\varepsilon\in(0,\varepsilon_0]$. For each direction $u_j$, define three scalars:
\[
v_j^\star := u_j^\top I(\theta_0)\,u_j,
\quad
v_j^{(\varepsilon)} := \tfrac{2}{\varepsilon^2}\,
   \bar{D}\bigl(P_{\theta_0}\,\|\,P_{\theta_0+\varepsilon u_j}\bigr),
\quad
v_{n,j}^{(\varepsilon)} := \tfrac{2}{\varepsilon^2}\,
   \widehat{\bar{D}}_n\bigl(P_{\theta_0}\,\|\,P_{\theta_0+\varepsilon u_j}\bigr),
\]
which represent the ground-truth target, its population (Taylor-biased) value, and its sample estimate, respectively. We stack them into vectors $v^\star,v^{(\varepsilon)},v_n^{(\varepsilon)}\in\mathbb{R}^m$.

\emph{Step 1 (Taylor bias is deterministic and $O(\varepsilon)$).}
By Lemma~\ref{lem:kl-fim} with $\Delta=\varepsilon u_j$, $\bar{D}(P_{\theta_0}\,\|\,P_{\theta_0+\varepsilon u_j}) =\tfrac12\varepsilon^2 v_j^\star + r(\varepsilon u_j)$ with $|r(\varepsilon u_j)|\le C_3\varepsilon^3$. Hence, the deterministic truncation bias satisfies
\begin{equation}
\bigl|v_j^{(\varepsilon)}-v_j^\star\bigr|
= \tfrac{2}{\varepsilon^2}\,|r(\varepsilon u_j)|
\le 2C_3\,\varepsilon,
\qquad j=1,\dots,m.
\label{eq:cons-bias}
\end{equation}

\emph{Step 2 (estimation error vanishes a.s.).}
By Lemma~\ref{lem:ctw-kl} applied to the pair $(P_{\theta_0},P_{\theta_0+\varepsilon u_j})$, $\widehat{\bar{D}}_n\to\bar{D}$ a.s., so rescaling by the fixed factor $2/\varepsilon^2$ yields
\begin{equation}
v_{n,j}^{(\varepsilon)}\;\xrightarrow[n\to\infty]{\textnormal{a.s.}}\;v_j^{(\varepsilon)},
\qquad j=1,\dots,m.
\label{eq:cons-est}
\end{equation}
Since the direction index set $\{1,\dots,m\}$ is finite, the intersection of these $m$ probability-one events on which~\eqref{eq:cons-est} holds still has probability one. Thus, we have vector-valued convergence $v_n^{(\varepsilon)}\to v^{(\varepsilon)}$ a.s.\ in $(\mathbb{R}^m,\|\cdot\|_2)$.

\emph{Step 3 (propagate through OLS).}
By the construction of $U$ and the coordinate mapping~\eqref{eq:f-vector}, the target matrix vectorizes exactly to $f^\star$, which satisfies $U f^\star = v^\star$ (and thus $f^\star = (U^\top U)^{-1}U^\top v^\star$). Let $\widehat{f}_n^{(\varepsilon)}:=(U^\top U)^{-1}U^\top v_n^{(\varepsilon)}$ and $f^{(\varepsilon)}:=(U^\top U)^{-1}U^\top v^{(\varepsilon)}$ be the OLS solutions. By the linearity highlighted in Lemma~\ref{lem:ls-continuity} and the a.s. convergence in~\eqref{eq:cons-est}, it follows that
\begin{equation}
\widehat{f}_n^{(\varepsilon)}\;\xrightarrow[n\to\infty]{\textnormal{a.s.}}\;f^{(\varepsilon)}.
\label{eq:cons-f}
\end{equation}
Applying the Lipschitz bound~\eqref{eq:ls-lipschitz} of the OLS recovery from Lemma~\ref{lem:ls-continuity} together with the bias bound~\eqref{eq:cons-bias}, we obtain
\begin{equation}
\|f^{(\varepsilon)}-f^\star\|_2
\le \kappa_U\|v^{(\varepsilon)}-v^\star\|_2
\le \kappa_U\sqrt{m}\,\max_j|v_j^{(\varepsilon)}-v_j^\star|
\le 2\kappa_U\sqrt{m}\,C_3\,\varepsilon.
\label{eq:cons-fbias}
\end{equation}

\emph{Step 4 (reshape and define the bias matrix).}
We now map the vectors back to the symmetric matrix space. Let $\widehat{I}_{n,\varepsilon}^{\mathrm{LS}}:=\operatorname{unvec}_\triangle(\widehat{f}_n^{(\varepsilon)})$, $I^{(\varepsilon)}:=\operatorname{unvec}_\triangle(f^{(\varepsilon)})$, and note that $\operatorname{unvec}_\triangle(f^\star)=I(\theta_0)$. Define the deterministic error matrix as $E_U(\varepsilon):=I^{(\varepsilon)}-I(\theta_0)$, where the subscript records its dependence on the chosen direction design. Applying the matrix Lipschitz bound~\eqref{eq:ls-matrix-lipschitz} and the directional bias bound~\eqref{eq:cons-bias} yields
\[
\|E_U(\varepsilon)\|_F
\le \sqrt{2}\,\kappa_U\|v^{(\varepsilon)}-v^\star\|_2
\le 2\sqrt{2}\kappa_U\sqrt{m}\,C_3\,\varepsilon.
\]
Furthermore, applying $\operatorname{unvec}_\triangle$ to the a.s. convergence~\eqref{eq:cons-f} results in $\widehat{I}_{n,\varepsilon}^{\mathrm{LS}}\to I^{(\varepsilon)}=I(\theta_0)+E_U(\varepsilon)$ a.s.

\emph{Step 5 (apply the projection; prove (i)).}
The final estimator is defined as $\widehat{I}_{n,\varepsilon}(\theta_0) =\Pi_{\succeq0}(\widehat{I}_{n,\varepsilon}^{\mathrm{LS}})$. By the non-expansiveness~\eqref{eq:psd-projection-lipschitz} of Lemma~\ref{lem:psd-proj},
\[
\bigl\|\Pi_{\succeq0}(\widehat{I}_{n,\varepsilon}^{\mathrm{LS}})
-\Pi_{\succeq0}\bigl(I(\theta_0)+E_U(\varepsilon)\bigr)\bigr\|_F
\le\bigl\|\widehat{I}_{n,\varepsilon}^{\mathrm{LS}}
-I(\theta_0)-E_U(\varepsilon)\bigr\|_F
\;\xrightarrow[n\to\infty]{\textnormal{a.s.}}\;0,
\]
which gives
\[
\widehat{I}_{n,\varepsilon}(\theta_0)
\;\xrightarrow[n\to\infty]{\textnormal{a.s.}}\;
\Pi_{\succeq0}\bigl(I(\theta_0)+E_U(\varepsilon)\bigr),
\]
i.e.,~\eqref{eq:thm-step1}.

\emph{Step 6 (take the outer limit; prove (ii)).}
By part (i), the inner limit equals the deterministic
matrix $\Pi_{\succeq0}(I(\theta_0)+E_U(\varepsilon))$ for every
$\varepsilon\in(0,\varepsilon_0]$, so the outer limit is a
deterministic one. The Fisher information rate satisfies
$I(\theta_0)\succeq0$, so $I(\theta_0)\in\mathcal S_+^d$. Applying
Lemma~\ref{lem:psd-proj} with
$A=I(\theta_0)+E_U(\varepsilon)$ and $B=I(\theta_0)$ gives
\[
\bigl\|\Pi_{\succeq0}(I(\theta_0)+E_U(\varepsilon))
-I(\theta_0)\bigr\|_F
\le \|E_U(\varepsilon)\|_F
\le 2\sqrt{2}\kappa_U\sqrt{m}\,C_3\,\varepsilon
\;\xrightarrow[\varepsilon\downarrow0]{}\;0,
\]
which is~\eqref{eq:thm-step2}.
\end{proof}

\section{Proof of the Local KL Rate}
\label{app:rate}

This appendix proves Proposition~\ref{prop:kl-rate} together with all
the auxiliary lemmas it rests on. Throughout, $\theta_0$ is the target
parameter, $Z_{1:n}\sim P_{\theta_0}^{(n)}$ the reference path,
$S_t=Z_{t-L:t-1}$, and constants may depend on
$(|\mathcal A|,L,\delta,\pi_{\min},\kappa)$ and on
the first-derivative bound
$C_T:=\sup_{\|\theta-\theta_0\|_2\le\varepsilon_0}\max_{s,a}
\|\nabla_\theta T_\theta(a\mid s)\|_2$, finite by
Assumption~\ref{ass:reg}(ii) and compactness, but never on $n$ or
$\varepsilon$. The analysis is carried out for a \emph{clipped}
modification of the cross scorer, introduced below as a purely technical
device; the final subsection transfers the bound to the unclipped
estimator of Algorithm~\ref{alg:ctw-kl}, so that no clipping appears in
the statement of Proposition~\ref{prop:kl-rate}. Fix the clipping level
$\alpha:=\delta/4$ once and for all; any fixed value in $(0,\delta/2)$
would do.

\subsection{Estimator objects and error anatomy}
\label{app:rate-anatomy}

Fix a unit direction $u$ and abbreviate the output of
Algorithm~\ref{alg:ctw-kl} run with the perturbation
$\Delta=\varepsilon u$ by
\begin{equation}
\widehat{\bar D}_{n,\varepsilon}(u)
:=\widehat{\bar D}_n\bigl(P_{\theta_0}\,\|\,P_{\theta_0+\varepsilon u}\bigr).
\label{eq:dir-shorthand}
\end{equation}
Let $Z_{1:n}\sim P_{\theta_0}^{(n)}$
be the reference path, $S_t:=Z_{t-L:t-1}\in\mathcal A^L$ the length-$L$
scoring context, and, for $s\in\mathcal A^L$, write
$T_\theta(a\mid s):=T_{s,a}(\theta)$ and $\pi_\theta(s)$ for the
transition probability and the stationary context law of
Section~\ref{sec:problem} (the conditional form is used because the
analysis treats $T_\theta(\cdot\mid s)$ as a distribution over $a$).
The estimator is the difference of two code lengths of \emph{different
nature}, and the distinction drives the analysis: the entropy term uses
the sequential CTW code $Q^{\mathrm{CTW}}_{L}$, an online pmf on
$\mathcal A^n$, while the cross-entropy term uses a
frozen one-step predictor
$\widehat T^{\mathrm{CTW},\alpha}_{\theta_0+\varepsilon u,n}$ built from
an independent auxiliary path
$X_{1:n}\sim P_{\theta_0+\varepsilon u}^{(n)}$ and clipped away from zero
(the clipping and warm-up conventions are collected at the end of this
subsection; they are bookkeeping and play no conceptual role). Write
the prediction error of the clipped cross predictor built from a path
$\sim P_\theta^{(n)}$ as
\begin{equation}
e_{\theta,n}(a\mid s)
:=\widehat T^{\mathrm{CTW},\alpha}_{\theta,n}(a\mid s)-T_\theta(a\mid s),
\qquad
\sum_{a\in\mathcal A}e_{\theta,n}(a\mid s)=0,
\label{eq:pred-error-sumzero}
\end{equation}
the normalization identity holding because both are probability vectors
on $\mathcal A$.

Let $\widehat{\bar D}^{\,\alpha}_{n,\varepsilon}(u)$ denote the clipped
variant of the estimator, made precise in~\eqref{eq:dir-obj} below.
Inserting the true log-pmfs
$\pm\tfrac1n\log p_{\theta_0}^{(n)}(Z_{1:n})$ and
$\pm\tfrac1n\log p_{\theta_0+\varepsilon u}^{(n)}(Z_{1:n})$ into the
estimator and expanding them by the Markov chain rule ($L\ge L^\star$)
splits the estimation error into three terms of entirely different
character,
\begin{align}
\widehat{\bar D}^{\,\alpha}_{n,\varepsilon}(u)
-\bar D(P_{\theta_0}\,\|\,P_{\theta_0+\varepsilon u})
&=\underbrace{\frac{1}{n}\sum_{t=L+1}^{n}
\log\frac{T_{\theta_0}(Z_t\mid S_t)}{T_{\theta_0+\varepsilon u}(Z_t\mid S_t)}
-\bar D}_{A_n\ \text{(oracle)}}
\nonumber\\
&\quad+\underbrace{\tfrac1n\log Q^{\mathrm{CTW}}_{L}(Z_{1:n})-\tfrac1n\log p_{\theta_0}^{(n)}(Z_{1:n})}_{B^{\mathrm{self}}_n\ \text{(self)}}
\nonumber\\
&\quad-\underbrace{\frac{1}{n}\sum_{t=L+1}^{n}
\log\frac{\widehat T^{\mathrm{CTW},\alpha}_{\theta_0+\varepsilon u,n}(Z_t\mid S_t)}{T_{\theta_0+\varepsilon u}(Z_t\mid S_t)}}_{B^{\mathrm{cross}}_n\ \text{(cross)}}
+R_{\mathrm{bd}},
\label{eq:master-decomp}
\end{align}
where $R_{\mathrm{bd}}$ is a boundary remainder obeying
a pathwise bound of size $O(\log n/n)$, quantified below. Each term is small for a different
reason:
\begin{itemize}
\item the \emph{oracle} term $A_n$ is the empirical average of
the log-ratio of two \emph{true} models that differ by $O(\varepsilon)$.
By stationarity each summand has mean $\bar D$, so the
average over the $n-L$ scoring positions is exactly centred. The term
$A_n$ divides that sum by $n$ rather than by $n-L$, which shifts it by a
deterministic $O(L\varepsilon/n)$.
Its integrand, and hence (by mixing) its fluctuation, carries an
$\varepsilon$ factor: $\mathbb E|A_n|=O(\varepsilon/\sqrt n)$
(Lemma~\ref{lem:oracle}).
\item the \emph{self} term $B^{\mathrm{self}}_n$ is the coding
\emph{redundancy} of the sequential CTW code, controlled by a
deterministic pointwise bound together with probability conservation,
not by the central limit theorem:
$\mathbb E|B^{\mathrm{self}}_n|=O(\log n/n)$
(Lemma~\ref{lem:self-red}).
\item the \emph{cross} term $B^{\mathrm{cross}}_n$ is the log-loss of the
frozen predictor; the normalization
identity~\eqref{eq:pred-error-sumzero} kills its first-order part up to
the model gap $O(\varepsilon)$, leaving
$\mathbb E|B^{\mathrm{cross}}_n|=O(\varepsilon/\sqrt n+\log n/n)$
(Lemma~\ref{lem:cross-pred}).
\end{itemize}
The sum is $O(\varepsilon/\sqrt n+\log n/n)$
(Proposition~\ref{prop:kl-rate}): an estimation error that \emph{scales
with} $\varepsilon$, which is exactly what the $\varepsilon^{-2}$
division requires.

\paragraph{Technical conventions.}
The following conventions make the objects in~\eqref{eq:master-decomp}
precise; they are bookkeeping and contain no ideas. For a probability
vector $q$ on $\mathcal A$ and a level $\alpha\in(0,1/|\mathcal A|)$
define the renormalized clip
\begin{equation}
\operatorname{clip}_\alpha(q)(a)
:=\frac{q(a)\vee\alpha}{\sum_{b\in\mathcal A}(q(b)\vee\alpha)},
\qquad
\operatorname{clip}_\alpha(q)(a)\ge\alpha':=\frac{\alpha}{|\mathcal A|}>0,
\label{eq:clip}
\end{equation}
the lower bound following from $q(a)\vee\alpha\ge\alpha$ in the numerator
and $\sum_b(q(b)\vee\alpha)\le|\mathcal A|$ (each summand is $\le1$) in the
denominator. The clipped frozen one-step predictor built from an
independent path $\sim P_\theta^{(n)}$ is
$\widehat T^{\mathrm{CTW},\alpha}_{\theta,n}(\cdot\mid s)
:=\operatorname{clip}_\alpha\bigl(\widehat T^{\mathrm{CTW}}_{\theta,n}(\cdot\mid s)\bigr)$,
with block scorer
$\widehat P^{\mathrm{CTW},\alpha}_{\theta,n}(z_{1:n})
=C^\alpha(z_{1:L})\prod_{t=L+1}^n\widehat T^{\mathrm{CTW},\alpha}_{\theta,n}(z_t\mid z_{t-L:t-1})$,
where $C^\alpha(z_{1:L})$ denotes the clipped CTW warm-up probability
assigned to the first $L$ scoring symbols under the standard
shorter-context initialization (for $t\le L$ the active context is the
available prefix $z_{1:t-1}$, whose length is below the working depth).
The analyzed (clipped) directional estimator is
\begin{equation}
\widehat{\bar D}^{\,\alpha}_{n,\varepsilon}(u)
=-\tfrac1n\log\widehat P^{\mathrm{CTW},\alpha}_{\theta_0+\varepsilon u,n}(Z_{1:n})
+\tfrac1n\log Q^{\mathrm{CTW}}_{L}(Z_{1:n}),
\label{eq:dir-obj}
\end{equation}
i.e.\ the output of Algorithm~\ref{alg:ctw-kl} with the cross scorer
clipped; the modification is inactive except on an event of probability
$O(n^{-1})$, on which the unclipped frozen predictor is inaccurate
(Step~0 of the proof of Lemma~\ref{lem:pmc} below), and the final
subsection transfers the bound back to the unclipped
Algorithm~\ref{alg:ctw-kl}. Finally, the
remainder in~\eqref{eq:master-decomp} is
\begin{equation}
R_{\mathrm{bd}}
:=\frac1n\log
\frac{p^{(L)}_{\theta_0+\varepsilon u}(Z_{1:L})}{C^\alpha(Z_{1:L})}
+\frac1n\log
\frac{p^{(L)}_{\theta_0}(Z_{1:L})}{p^{(L)}_{\theta_0+\varepsilon u}(Z_{1:L})},
\label{eq:rbd}
\end{equation}
the first term being the warm-up mismatch between the cross scorer and
the perturbed law, and the second the boundary term left by the
chain-rule expansion of the two true log-likelihoods. The self code
contributes no warm-up term, because $B^{\mathrm{self}}_n$ compares
the two block quantities directly. Since every factor is bounded from below,
by $\delta^{L}$ for the true laws and by $(\alpha')^{L}$ for the
clipped scorer, $|R_{\mathrm{bd}}|\le CL/n$ pathwise. The
$\tfrac1n$-versus-$\tfrac1{n-L}$ normalization discrepancies of the
three sums are treated separately in
Appendix~\ref{app:kl-assembly}.

\subsection{Mixing, oracle, and self terms}

Three of the four error sources are quantified by the following lemmas.
The first is the engine of the analysis: a variance bound for additive
functionals of the geometrically mixing context chain, uniform in the
underlying parameter.

\begin{lemma}[Uniform mixing bound]
\label{lem:mixing}
Under Assumptions~\ref{ass:source}--\ref{ass:reg}, there is a constant
$C_{\mathrm{add}}<\infty$, uniform in $\theta\in\mathcal N(\theta_0)$,
such that for every $\theta\in\mathcal N(\theta_0)$ and every
deterministic $f\colon\mathcal A^L\times\mathcal A\to\mathbb R$,
\begin{equation}
\mathbb E_\theta\!\left|
\frac{1}{n-L}\sum_{t=L+1}^n
\bigl(f(S_t,Y_t)-\mathbb E_\theta f(S_t,Y_t)\bigr)
\right|
\le
\frac{C_{\mathrm{add}}}{\sqrt n}
\bigl(\mathbb E_\theta f(S_t,Y_t)^2\bigr)^{1/2},
\label{eq:mixing-bound}
\end{equation}
where $(Y_t)$ is a trajectory of $P_\theta$ and
$S_t=Y_{t-L:t-1}$. The pair $(S_t,Y_t)$ determines $S_{t+1}$ and is
determined by $(S_t,S_{t+1})$, so $f(S_t,Y_t)$ is a functional of two
consecutive states; the bound is applied to the chain of consecutive
pairs, which is again finite-state and inherits a uniform Doeblin
minorization, with $L+1$ steps in place of $L$.
The right-hand factor $(\mathbb E_\theta f^2)^{1/2}$ is what later lets a
small integrand produce a small fluctuation.
\end{lemma}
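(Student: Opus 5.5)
The plan is to prove \eqref{eq:mixing-bound} with the Poisson-equation martingale decomposition for the finite chain of consecutive pairs, using a Doeblin minorization that holds uniformly in $\theta$. First we build that minorization. Let $W_t:=(S_t,Y_t)$. This is a Markov chain on the finite set $\mathcal A^L\times\mathcal A$, and its transitions are given by $T_\theta$, because $L\ge L^\star$. Assumption~\ref{ass:reg}(i) bounds every one-step transition probability below by $\delta$. After $L+1$ steps the pair $W_{t+L+1}$ is completely determined by the $L+1$ fresh symbols, so
\[
\Pr_\theta(W_{t+L+1}=w\mid W_t)\ge\delta^{L+1}\quad\text{for every } w.
\]
This is a Doeblin condition with constant $\beta:=|\mathcal A|^{L+1}\delta^{L+1}$ relative to the uniform law, and $\beta$ does not depend on $\theta\in\mathcal N(\theta_0)$. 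The standard coupling argument then gives
\[
\|P^k_\theta(w,\cdot)-\mu_\theta\|_{TV}\le(1-\beta)^{\lfloor k/(L+1)\rfloor},
\]
where $\mu_\theta$ is the stationary pair law. The rate and the prefactor are uniform in $\theta$.

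Next, write $g:=f-\mathbb E_\theta f$ under $\mu_\theta$. We solve the Poisson equation $h-P_\theta h=g$ with $h:=\sum_{k\ge0}P_\theta^kg$. The sum converges geometrically. To obtain a bound in $L^2(\mu_\theta)$ rather than in sup norm, we would use the fact that the Doeblin condition gives an $L^2(\mu_\theta)$ spectral gap for $P_\theta^{L+1}$. Alternatively, since $\mu_\theta(w)\ge\pi_{\min}\delta>0$ by Assumption~\ref{ass:occupancy}, one can simply compare norms: $\|g\|_\infty\le(\pi_{\min}\delta)^{-1/2}\|g\|_{L^2(\mu_\theta)}$. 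The finite state space makes this the cheap route. Either way we get $\|h\|_{L^2(\mu_\theta)}\le K\|g\|_{L^2(\mu_\theta)}\le K(\mathbb E_\theta f^2)^{1/2}$ with $K$ uniform in $\theta$.

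With $h$ in hand, the sum splits as
\[
\sum_t g(W_t)=\sum_t\bigl(h(W_{t+1})-P_\theta h(W_t)\bigr)+h(W_{L+1})-h(W_{n+1}).
\]
The first part is a martingale whose increments are orthogonal, so its second moment is at most $(n-L)\,4\|h\|^2_{L^2(\mu_\theta)}$, using stationarity. Each boundary term has $L^2$ norm $O(\|h\|_{L^2})$. Cauchy--Schwarz and division by $n-L\ge n/2$ (since $n\ge2L$) then give \eqref{eq:mixing-bound} with a constant $C_{\mathrm{add}}$ depending only on $(|\mathcal A|,L,\delta,\pi_{\min})$.

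The main obstacle is keeping the constant uniform in $\theta$. This matters most in the step that bounds the Poisson solution by $(\mathbb E_\theta f^2)^{1/2}$ rather than by $\|f\|_\infty$. I would first try the finite-state norm-equivalence route, since it trivially converts the sup-norm geometric bound into the required $L^2$ form. If that is too crude, I would fall back on the uniform $L^2$ spectral-gap bound that follows from the minorization.
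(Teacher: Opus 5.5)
Your proof is correct, but it takes a different route from the paper's.

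\textbf{The paper's route.} The paper proves the Doeblin minorization for the context chain $S_t$ itself, $P_\theta^L(s,s')\ge\delta^L$, with coefficient $(|\mathcal A|\delta)^L$, and deduces uniform geometric mixing. It then asserts in one line that the covariance series of the additive functional is summable with bound $C\,\mathbb E_\theta f^2$. This gives $\operatorname{Var}\le (C/n)\,\mathbb E_\theta f^2$, and Jensen finishes.

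\textbf{Your route.} You work directly on the pair chain $W_t=(S_t,Y_t)$ with the $(L+1)$-step minorization. You solve the Poisson equation and split the centred sum into a martingale with orthogonal increments plus two boundary terms.

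\textbf{The real difference} is how each argument gets the factor $\|f\|_{L^2(\mu_\theta)}$ instead of $\|f\|_\infty$. The paper's covariance step silently needs a correlation-decay bound in $L^2(\mu_\theta)$, for instance Ibragimov's covariance inequality for $\phi$-mixing sequences. Your finite-state comparison $\|g\|_\infty\le(\min_w\mu_\theta(w))^{-1/2}\|g\|_{L^2(\mu_\theta)}$ makes that step explicit and elementary. The price is a constant exponentially large in $L$, which is harmless here because only uniformity in $\theta$ matters. The martingale form would also carry over unchanged to non-stationary initial laws and to Azuma-type concentration, if those were ever needed.

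\textbf{One small repair.} You invoke Assumption~\ref{ass:occupancy} to get $\min_w\mu_\theta(w)>0$, but the lemma is stated under Assumptions~\ref{ass:source}--\ref{ass:reg} only. You do not need it. Stationarity together with your own $(L+1)$-step minorization gives $\mu_\theta(w)=\sum_{w'}\mu_\theta(w')P_\theta^{L+1}(w',w)\ge\delta^{L+1}$. Hence $C_{\mathrm{add}}$ depends only on $(|\mathcal A|,L,\delta)$, as in the paper.
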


\begin{proof}
For the context chain $S_t=Y_{t-L:t-1}$ on $\mathcal A^L$, after $L$ steps
the state equals the $L$ freshly emitted symbols, whose conditional
probabilities are each $\ge\delta$ (Assumption~\ref{ass:reg}); hence
$P_\theta^{L}(s,s')\ge\delta^{L}$ for every $s,s'\in\mathcal A^L$,
uniformly in $\theta\in\mathcal N(\theta_0)$. Writing $\nu$ for the
uniform law on $\mathcal A^L$, this is a Doeblin
minorization~\cite{meyn2009markov}
$P_\theta^{L}(s,\cdot)\ge\eta\,\nu(\cdot)$ with a coefficient
$\eta=\eta(\delta,|\mathcal A|,L)=(|\mathcal A|\delta)^{L}\in(0,1]$
(using $\delta\le|\mathcal A|^{-1}$, which holds since $|\mathcal A|$
conditional probabilities that are each $\ge\delta$ sum to one). The chain
is therefore uniformly ergodic with a contraction coefficient $1-\eta$
independent of $\theta$, and mixes geometrically \cite[Sec.~3]{Pau15}.
Consequently the covariance series of any bounded additive functional is
absolutely summable with a constant $C=C(\delta,|\mathcal A|,L)$, and
\[
\operatorname{Var}_\theta\!\Bigl(\tfrac{1}{n-L}\textstyle\sum_{t=L+1}^n f(S_t,Y_t)\Bigr)
\le\frac{C}{n}\,\mathbb E_\theta f(S_t,Y_t)^2 ,
\]
uniformly in $\theta\in\mathcal N(\theta_0)$. Jensen's inequality
$\mathbb E|X|\le(\operatorname{Var}X)^{1/2}$ applied to the centred
average gives~\eqref{eq:mixing-bound}.
\end{proof}

\begin{lemma}[Oracle fluctuation]
\label{lem:oracle}
Under Assumptions~\ref{ass:source}--\ref{ass:reg} there is a constant
$C_{\mathrm{or}}<\infty$ such that
\begin{equation}
\mathbb E\!\left|
\frac{1}{n-L}\sum_{t=L+1}^n
\log\frac{T_{\theta_0}(Z_t\mid S_t)}{T_{\theta_0+\varepsilon u}(Z_t\mid S_t)}
-\bar D\bigl(P_{\theta_0}\,\|\,P_{\theta_0+\varepsilon u}\bigr)
\right|
\le C_{\mathrm{or}}\,\frac{\varepsilon}{\sqrt n}.
\label{eq:oracle-rate}
\end{equation}
\end{lemma}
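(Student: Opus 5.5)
The plan is to apply Lemma~\ref{lem:mixing} to the single deterministic function
\[
f_\varepsilon(s,a):=\log\frac{T_{\theta_0}(a\mid s)}{T_{\theta_0+\varepsilon u}(a\mid s)},
\qquad s\in\mathcal A^L,\ a\in\mathcal A,
\]
evaluated along the reference chain under $P_{\theta_0}$. Since $L\ge L^\star$, the transition law given the length-$L$ context coincides with the order-$L^\star$ transition law, so $f_\varepsilon$ is well defined on $\mathcal A^L\times\mathcal A$.

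The first step is to identify the centring constant, i.e.\ to show $\mathbb E_{\theta_0} f_\varepsilon(S_t,Z_t)=\bar D(P_{\theta_0}\,\|\,P_{\theta_0+\varepsilon u})$. By stationarity, $S_t$ has law $\pi_{\theta_0}$ on $\mathcal A^L$, and conditionally on $S_t=s$ the symbol $Z_t$ has law $T_{\theta_0}(\cdot\mid s)$. Summing over the $L-L^\star$ oldest symbols of $s$ collapses the expectation exactly to the closed form~\eqref{eq:kl-rate-closed}. Hence every summand has mean $\bar D$, and the quantity inside the absolute value in~\eqref{eq:oracle-rate} is the centred average appearing in~\eqref{eq:mixing-bound}, with $\theta=\theta_0\in\mathcal N(\theta_0)$.

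The second step, which is where the factor $\varepsilon$ comes from, is a pointwise bound on $f_\varepsilon$. For $\varepsilon\le\varepsilon_0$ the whole segment $\{\theta_0+\tau u:\tau\in[0,\varepsilon]\}$ lies in $\mathcal N(\theta_0)$. By Assumption~\ref{ass:reg}(i), $T_\vartheta(a\mid s)\ge\delta$ along this segment, and the first-derivative bound $C_T$ holds there. The mean value theorem applied to $\tau\mapsto\log T_{\theta_0+\tau u}(a\mid s)$ then gives
\[
|f_\varepsilon(s,a)|\le \frac{C_T}{\delta}\,\varepsilon
\qquad\text{uniformly in } (s,a,u).
\]
Consequently $\bigl(\mathbb E_{\theta_0} f_\varepsilon^2\bigr)^{1/2}\le C_T\varepsilon/\delta$.

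Plugging this into~\eqref{eq:mixing-bound} yields~\eqref{eq:oracle-rate} with $C_{\mathrm{or}}=C_{\mathrm{add}}C_T/\delta$, independent of $n$, $\varepsilon$ and $u$.

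There is no serious obstacle. The only point needing care is the centring identity when $L>L^\star$, that is, checking that marginalizing the stationary depth-$L$ context law onto its last $L^\star$ symbols recovers $\pi_{\theta_0}$ on $\mathcal A^{L^\star}$; this follows from stationarity of the sliding window. The rest of the argument is just Lemma~\ref{lem:mixing} combined with an $O(\varepsilon)$ sup bound on the integrand.
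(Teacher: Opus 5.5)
Your proposal is correct and matches the paper's proof: the same log-ratio integrand, the same pointwise bound $C_T\varepsilon/\delta$ from the transition-probability floor $\delta$ and the first-derivative bound $C_T$, the same centring identity, and one application of Lemma~\ref{lem:mixing} at $\theta_0$, which gives $C_{\mathrm{or}}=C_{\mathrm{add}}C_T/\delta$. Your remark on marginalizing the stationary depth-$L$ context law onto its last $L^\star$ symbols when $L>L^\star$ spells out a step the paper leaves implicit.
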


\begin{proof}
Set $\ell_\varepsilon(s,a):=\log\frac{T_{\theta_0}(a\mid s)}{T_{\theta_0+\varepsilon u}(a\mid s)}$.
Since $T_\theta(a\mid s)\ge\delta$ and, by
Assumption~\ref{ass:reg},
$|T_{\theta_0+\varepsilon u}(a\mid s)-T_{\theta_0}(a\mid s)|\le C_T\varepsilon$,
the mean-value form of $\log$ gives
$|\ell_\varepsilon(s,a)|\le\delta^{-1}C_T\varepsilon$, hence
$(\mathbb E_{\theta_0}\ell_\varepsilon^2)^{1/2}\le C\varepsilon$. Because
$\mathbb E_{\theta_0}\ell_\varepsilon(S_t,Z_t)
=\sum_s\pi_{\theta_0}(s)\sum_a T_{\theta_0}(a\mid s)\ell_\varepsilon(s,a)
=\bar D(P_{\theta_0}\|P_{\theta_0+\varepsilon u})$, the left-hand side
of~\eqref{eq:oracle-rate} is the centred average of
$f=\ell_\varepsilon$; Lemma~\ref{lem:mixing} at $\theta=\theta_0$ bounds
it by $C_{\mathrm{add}}\varepsilon/\sqrt n$.
\end{proof}

The self term uses the online entropy code $Q^{\mathrm{CTW}}_{L}$. Since
$Q^{\mathrm{CTW}}_{L}$ is built from the reference path itself, it is
\emph{not} independent of that path, so the conditioning argument used for
the cross term does not apply; instead the term is the deterministic
coding \emph{redundancy} of $Q^{\mathrm{CTW}}_{L}$ relative to the true
law, controlled by the CTW redundancy bound.

\begin{lemma}[In-sample self-redundancy]
\label{lem:self-red}
Under Assumptions~\ref{ass:source}--\ref{ass:reg} with $L\ge L^\star$,
\begin{equation}
\mathbb E\!\left|
\tfrac1n\log Q^{\mathrm{CTW}}_{L}(Z_{1:n})
-\tfrac1n\log p_{\theta_0}^{(n)}(Z_{1:n})
\right|
\le C_{\mathrm{self}}\,\frac{\log n}{n}.
\label{eq:self-red-rate}
\end{equation}
\end{lemma}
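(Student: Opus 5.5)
The plan is to split the signed redundancy $R_n:=\log\bigl(p_{\theta_0}^{(n)}(Z_{1:n})/Q^{\mathrm{CTW}}_{L}(Z_{1:n})\bigr)$ into its positive and negative parts, and to bound each in expectation. Note that the left-hand side of~\eqref{eq:self-red-rate} is $\tfrac1n\mathbb E|R_n|=\tfrac1n\bigl(\mathbb E R_n^+ + \mathbb E R_n^-\bigr)$.

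\emph{Positive part.} We use the standard CTW redundancy bound~\cite{willems2002contextB,tjalkens1993sequential}. This bound holds for every individual sequence, deterministically, against every tree source of depth at most $L$. Take the comparison source to be the true law. Since $L\ge L^\star$, its minimal tree $T^\star$ (or the full tree $\mathcal A^{L^\star}$) is a subtree of the working tree. The CTW mixture satisfies $Q^{\mathrm{CTW}}_{L}\ge 2^{-\Gamma_L(T^\star)}\prod_{s}P_e^s$, where $\Gamma_L(T^\star)\le 2|T^\star|-1$ is the tree-description cost. Each leaf KT estimate satisfies the Krichevsky--Trofimov bound
\[
-\log P_e^s\le -\log\prod_a \bigl(T_{\theta_0}(a\mid s)\bigr)^{b_{s,a}}+\tfrac{|\mathcal A|-1}{2}\log n_s+c_{|\mathcal A|}.
\]
The true block log-likelihood is the initial term $\log p^{(L^\star)}_{\theta_0}(Z_{1:L^\star})\ge L^\star\log\delta$ plus the product of transition factors. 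The first $L$ symbols are coded by CTW with shorter contexts, and by the positivity bound~\eqref{eq:ctw-positivity-method} they cost at most $L\log(2L+|\mathcal A|)$. Combining these pieces gives the pathwise bound $R_n\le\gamma\log n+c_0$. Here $\gamma=|T^\star|(|\mathcal A|-1)/2$, and $c_0$ depends only on $(|\mathcal A|,L,\delta)$. In particular $c_0$ is independent of $L$ beyond $L^\star$, apart from the warm-up term. This is exactly the deterministic bound invoked earlier in the proof of Lemma~\ref{lem:ctw-kl}. Hence $\mathbb E R_n^+\le\gamma\log n+c_0$.

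\emph{Negative part.} This side needs only probability conservation. Since $Q^{\mathrm{CTW}}_{L}$ is a pmf on $\mathcal A^n$, we have $\mathbb E_{\theta_0}[e^{-R_n}]=\sum_{z}Q^{\mathrm{CTW}}_{L}(z)\le 1$. Hence $\Pr(R_n^->x)=\Pr(e^{-R_n}>e^x)\le e^{-x}$ for every $x\ge0$. Integrating the tail gives $\mathbb E R_n^-\le\int_0^\infty e^{-x}\,dx=1$. A pathwise alternative also works: $R_n\ge n\log(1/(1-\delta))\cdot 0-\dots$ would be cruder, so the tail argument is preferable. It gives a constant that does not depend on $n$.

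Adding the two parts gives $\mathbb E|R_n|\le\gamma\log n+c_0+1$. Dividing by $n$ yields~\eqref{eq:self-red-rate} with $C_{\mathrm{self}}=\gamma+c_0+1$ for $n\ge 2L\ge 2$, since $\log n\ge\log 2$ absorbs the constants.

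The main obstacle is the bookkeeping in the positive part, specifically making the pathwise CTW bound precise in the paper's conventions. The issues are the warm-up positions $t\le L$, where CTW uses shorter contexts; the comparison with the stationary initial law $p^{(L^\star)}_{\theta_0}$; and the multi-alphabet form of the KT and tree-cost bounds. I would first handle the warm-up by bounding both the CTW and true probabilities of $Z_{1:L}$ from below. Then, for $t>L$, I would apply the standard bound with the subtree $T^\star$ embedded in the depth-$L$ tree, citing the per-leaf KT regret $\tfrac{|\mathcal A|-1}{2}\log n+O(1)$.
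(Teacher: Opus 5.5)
Your proof is correct. Its positive half is the paper's argument: the deterministic bound $R_n\le\gamma\log n+c_0$ comes from the mixture domination $Q^{\mathrm{CTW}}_{L}\ge 2^{-\Gamma_L(T^\star)}\prod_{s\in T^\star}P_e^s$ and the $|\mathcal A|$-ary KT regret at each leaf, with the warm-up and stationary-initial-law terms absorbed into $c_0$. Comparing each leaf with $\prod_a T_{\theta_0}(a\mid s)^{b_{s,a}}$, rather than with the empirical maximum likelihood as the paper does, changes nothing.

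The difference is in the negative half. The paper uses only the sign of the divergence, $\mathbb E R_n=D\bigl(P^{(n)}_{\theta_0}\,\|\,Q^{\mathrm{CTW}}_{L}\bigr)\ge0$. This gives $\mathbb E R_n^-\le\mathbb E R_n^+$ and hence $\mathbb E|R_n|\le2(\gamma\log n+c_0)$. You instead use the exponential moment $\mathbb E_{\theta_0}e^{-R_n}\le1$ with Markov's inequality, and integrate the tail to get $\mathbb E R_n^-\le1$. This is the same no-hypercompression inequality the paper applies through Borel--Cantelli in the proof of Lemma~\ref{lem:ctw-kl}.

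The paper's step is a one-liner, but it ties the lower tail to the upper redundancy bound. Yours decouples the two and shows that the negative part is $O(1)$. So the $\log n$ in the lemma is due entirely to the positive CTW redundancy. The final rate is the same either way.

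Minor cleanups:
\begin{itemize}
\item Delete the garbled aside about a ``pathwise alternative''. Pathwise one only has $R_n\ge\log p^{(n)}_{\theta_0}(Z_{1:n})\ge n\log\delta$, which is useless, so the probabilistic argument is genuinely needed on that side.
\item Write $\log(n_s+1)$ in the KT bound so that it covers $n_s=0$.
\item With only $n\ge2$, the constant should be $\gamma+(c_0+1)/\log2$ rather than $\gamma+c_0+1$.
\end{itemize}
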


\begin{proof}
The self term is
\[
B^{\mathrm{self}}_n
=\tfrac1n\log Q^{\mathrm{CTW}}_{L}(Z_{1:n})-\tfrac1n\log p_{\theta_0}^{(n)}(Z_{1:n})
=-\tfrac1n R_n,
\qquad
R_n:=\log\frac{p_{\theta_0}^{(n)}(Z_{1:n})}{Q^{\mathrm{CTW}}_{L}(Z_{1:n})},
\]
where $R_n$ is the individual redundancy of the online CTW code. Two facts
bound $\mathbb E|R_n|$.
\emph{(i)} For a source of order $\le L^\star\le L$ the online CTW code
satisfies the deterministic pointwise redundancy bound
$R_n\le\gamma\log n+c_0$, with $\gamma,c_0$ depending only on
$|\mathcal A|$ and $L$. Indeed, the CTW code is a
Bayesian mixture over the tree models of depth at most $L$, so it
dominates any single term of that mixture,
$Q^{\mathrm{CTW}}_{L}(z_{1:n})\ge\pi(T^\star)\prod_{u\in T^\star}P_e^u$,
where $\pi(T^\star)$ is the prior weight of the minimal tree and
$-\log\pi(T^\star)$ is a constant for fixed $L$. Bounding each leaf
factor by the $|\mathcal A|$-ary Krichevsky--Trofimov redundancy bound
$-\log P_e^u\le n_u\widehat H(Y\mid u)+\tfrac{|\mathcal A|-1}{2}\log(n_u+1)+O(1)$
\cite[Lemma~4.2]{Kon24}, and summing over the at most
$|\mathcal A|^{L}$ leaves, bounds $-\log Q^{\mathrm{CTW}}_{L}(z_{1:n})$
by $\sum_{u\in T^\star}n_u\widehat H(Y\mid u)+\gamma\log n+c_0'$. In
the opposite direction, the true log-likelihood factorizes over the
same leaves and the empirical frequencies maximize it, so
$\log p_{\theta_0}^{(n)}(z_{1:n})\le-\sum_{u\in T^\star}n_u\widehat H(Y\mid u)$
up to the boundary term of the first $L^\star$ symbols, which is at
most $L^\star\log(1/\delta)$ by Assumption~\ref{ass:reg}(i). Adding the
two bounds gives the stated redundancy bound.
\emph{(ii)} The online CTW code is a genuine pmf on
$\mathcal A^n$ ($\sum_{z}Q^{\mathrm{CTW}}_{L}(z)=1$), so
$\mathbb E R_n=D\bigl(P_{\theta_0}^{(n)}\,\|\,Q^{\mathrm{CTW}}_{L}\bigr)\ge0$.
From (i), $\mathbb E R_n^+\le\gamma\log n+c_0$; combined with
$\mathbb E R_n\ge0$ this gives $\mathbb E R_n^-\le\mathbb E R_n^+$, hence
$\mathbb E|R_n|=\mathbb E R_n^++\mathbb E R_n^-\le2(\gamma\log n+c_0)$.
Therefore $\mathbb E|B^{\mathrm{self}}_n|\le 2(\gamma\log n+c_0)/n=O(\log n/n)$.
(All warm-up and normalization bookkeeping sits in the remainder
$R_{\mathrm{bd}}$ of~\eqref{eq:master-decomp}, not in this lemma:
$B^{\mathrm{self}}_n$ there is exactly the block quantity bounded here.)
\end{proof}

\subsection{The prediction-moment lemma and the cross term}
\label{app:rate-pmc}

The cross term is controlled through a single quantitative property of
the frozen CTW predictor, stated once and reused: the first and second
moments of its prediction error, averaged over the scoring contexts,
decay at the parametric rate. This is the only place where the
structural
Assumptions~\ref{ass:topology}--\ref{ass:gap} enter.

\begin{lemma}[Prediction-moment lemma]
\label{lem:pmc}
Suppose Assumptions~\ref{ass:source}--\ref{ass:reg} and
\ref{ass:topology}--\ref{ass:gap} hold with $L\ge L^\star$. The clipped
frozen CTW predictors
$\{\widehat T^{\mathrm{CTW},\alpha}_{\theta,n}\}_{\theta\in\mathcal N(\theta_0)}$
of~\eqref{eq:clip}, each built from an independent path
$\sim P_\theta^{(n)}$ (so that
$\widehat T^{\mathrm{CTW},\alpha}_{\theta,n}(a\mid s)\ge\alpha'$),
satisfy, with $e_{\theta,n}$ as in~\eqref{eq:pred-error-sumzero},
uniformly in $\theta\in\mathcal N(\theta_0)$,
\begin{align}
\mathbb E\sum_{s\in\mathcal A^L}\pi_{\theta_0}(s)\,
\bigl\|e_{\theta,n}(\cdot\mid s)\bigr\|_1
&\le C_1\,n^{-1/2},
\label{eq:pmc-l1}\\
\mathbb E\sum_{s\in\mathcal A^L}\pi_{\theta_0}(s)\,
\bigl\|e_{\theta,n}(\cdot\mid s)\bigr\|_2^2
&\le C_2\,\frac{\log n}{n},
\label{eq:pmc-l2}
\end{align}
with $C_1,C_2$ depending only on
$(|\mathcal A|,L,\delta,\alpha,\pi_{\min},\kappa,C_{\mathrm{add}})$. In
fact~\eqref{eq:pmc-l2} holds in the sharper form
$\mathbb E\sum_s\pi_{\theta_0}(s)\|e_{\theta,n}(\cdot\mid s)\|_2^2\le C_2\,n^{-1}$.
\end{lemma}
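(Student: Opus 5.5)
The plan is to show that, on a high-probability event, the frozen CTW predictor at every full-depth context $s$ reduces to a convex combination of KT estimators. These are built on the count vectors of the suffixes of $s$ along the path from the root to the true leaf of $T^\star$. The combination puts all but an exponentially small weight on nodes at or below that leaf. Off this event, the clipping keeps the error bounded. Since $\pi_{\theta_0}(s)\le 1$ and $\mathcal A^L$ is finite, it suffices to bound $\mathbb E\|e_{\theta,n}(\cdot\mid s)\|_2^2$ and $\mathbb E\|e_{\theta,n}(\cdot\mid s)\|_1$ for each fixed $s$, uniformly in $\theta$. The $\ell_1$ bound \eqref{eq:pmc-l1} then follows from the sharper $\ell_2^2$ bound by Cauchy--Schwarz: $\mathbb E\|e\|_1\le\sqrt{|\mathcal A|}\,(\mathbb E\|e\|_2^2)^{1/2}$. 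So the real work is the sharp form of \eqref{eq:pmc-l2}.

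\emph{Step 0 (good event).} Let $N_n(w)$ be the count of a context $w$ in the construction path, for all $w$ of length at most $L$. By Lemma~\ref{lem:mixing} and Assumption~\ref{ass:occupancy}, combined with a Bernstein/Hoeffding-type concentration for the uniformly Doeblin chain, the event $G_n:=\{N_n(w)\ge \pi_{\min}n/2\ \text{for all } w\}$ fails with probability $e^{-cn}$. On $G_n$, each KT estimate at a node $w$ of length at least that of the leaf is an unbiased frequency up to $O(1/n)$. By Assumption~\ref{ass:topology}, every such node below the leaf $\lambda(s)\in T^\star$ has the same true conditional law $T_\theta(\cdot\mid s)$. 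Its squared error therefore has expectation $O(1/N_n(w))=O(1/n)$; this follows by bounding the variance of the empirical frequency via Lemma~\ref{lem:mixing} with $f$ an indicator. Consequently, any convex combination of these node predictors also has $\mathbb E\|\cdot\|_2^2=O(1/n)$ by convexity. The clipping at $\alpha=\delta/4$ is inactive whenever the error is below $\delta/2$, and the probability of the contrary is again exponentially small.

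\emph{Step 1 (weights on shallow nodes).} Via \eqref{eq:ctw-sequential}, the frozen predictor at $s$ is $\sum_{w} \omega_w P_e^w$ over the suffixes $w$ of $s$. The weights $\omega_w$ are products of the factors $\beta^{w'}$ and $1-\beta^{w'}$ along the path. It remains to show that the total weight on strict ancestors of $\lambda(s)$, which are internal nodes of $T^\star$, is small. For an internal node $w'$, the ratio $\beta/(1-\beta)$ equals $P_e^{w'}/\prod_a P_w^{aw'}$. The log of this ratio is at most $-N_n(w')\,\widehat I(Y;C\mid w')+O(\log n)$, by the KT redundancy bound already used in Lemma~\ref{lem:self-red}, applied to the children's subtree mixtures. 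By Assumption~\ref{ass:gap}, $N_n(w')\widehat I\ge \kappa n/2$ except on an event of probability $e^{-cn}$; this uses concentration of the plug-in mutual information, which is a Lipschitz function of frequencies bounded away from zero. Hence the ancestor weight is at most $n^{C}e^{-\kappa n/2}$. Since every node predictor is bounded, the ancestor contribution to $\mathbb E\|e\|_2^2$ is $o(1/n)$.

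\emph{Step 2 (assembly).} Split $\mathbb E\|e\|_2^2$ over $G_n$ and the gap event and their complements. On the good events, Steps 0--1 give $C/n$. On the complements, use $\|e\|_2^2\le 2$ times the exponentially small probability. All constants depend only on $(|\mathcal A|,L,\delta,\alpha,\pi_{\min},\kappa,C_{\mathrm{add}})$, because the Doeblin coefficient and the bounds $\pi_{\min}$, $\kappa$ are uniform over $\mathcal N(\theta_0)$.

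The main obstacle is Step 1. It requires the exponential concentration of the empirical conditional mutual information at internal nodes, uniformly in $\theta$, together with careful control of the subtree-mixture probabilities $P_w^{aw'}$. Merely citing Lemma~\ref{lem:mixing}, a second-moment bound, would give only polynomial tails. I would therefore first try a Hoeffding inequality for uniformly ergodic chains, e.g.\ \cite{Pau15}, applied to the count vectors. If that fails, Chebyshev-type tails of order $n^{-1}$ already suffice for the weaker $\log n/n$ form of \eqref{eq:pmc-l2}, provided the weight bound is only needed off an event of probability $O(\log n/n)$.
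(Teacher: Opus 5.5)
Your proposal is correct and follows the paper's proof essentially step for step: the frozen predictor is written as a convex combination of KT estimators along the suffix path, and the correct/overfit block is handled by $O(1/n)$ squared error at the sufficient nodes plus convexity. The underfit weight is killed by the log-odds bound $\log\bigl(P_e^{w'}/\prod_a P_w^{aw'}\bigr)\le -N\widehat J_{w'}+O(\log n)$ together with the gap $\kappa$, clipping is inactive with high probability, and your Cauchy--Schwarz route from $\ell_2^2$ to the $\ell_1$ bound is an equivalent shortcut. The one misjudgement is your ``main obstacle'': exponential Markov-chain Hoeffding bounds are not needed, and your worry that Chebyshev tails only give the weaker $\log n/n$ form is unfounded. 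Every error term is bounded (the ancestor contribution is at most $2\rho$ in $\ell_1$ and $4\rho$ in $\ell_2^2$, with $\rho\le1$ the ancestor weight), so failure probabilities of order $n^{-1}$ from the second-moment bound of Lemma~\ref{lem:mixing} and Chebyshev already give $\mathbb E\rho=O(n^{-1})$. That yields the \emph{sharp} $O(n^{-1})$ form of \eqref{eq:pmc-l2}, which is exactly how the paper argues.
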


The proof is given in the next subsection. The cross-term lemma below
uses only the two displayed moment bounds, never the CTW structure
itself: any frozen one-step predictor
satisfying~\eqref{eq:pmc-l1}--\eqref{eq:pmc-l2} supports the remainder
of the argument verbatim.

The cross term is the one that carries the $\varepsilon$ factor,
obtained from the cancellation~\eqref{eq:pred-error-sumzero}.

\begin{lemma}[Frozen cross-predictor log-loss]
\label{lem:cross-pred}
Let $\theta\in\mathcal N(\theta_0)$ and let the clipped frozen predictor
$\widehat T^{\mathrm{CTW},\alpha}_{\theta,n}$ satisfy the prediction-moment
bounds~\eqref{eq:pmc-l1}--\eqref{eq:pmc-l2}. With
$\Delta_\theta:=\max_{s,a}\lvert T_{\theta_0}(a\mid s)-T_\theta(a\mid s)\rvert$,
\begin{equation}
\mathbb E\!\left|
\frac{1}{n-L}\sum_{t=L+1}^n
\log\frac{\widehat T^{\mathrm{CTW},\alpha}_{\theta,n}(Z_t\mid S_t)}{T_\theta(Z_t\mid S_t)}
\right|
\le C\!\left(\Delta_\theta\,n^{-1/2}+\frac{\log n}{n}\right).
\label{eq:cross-pred-rate}
\end{equation}
In particular, for $\theta=\theta_0+\varepsilon u$ one has
$\Delta_\theta\le C_T\varepsilon$ (Assumption~\ref{ass:reg}), giving the
bound $C(\varepsilon/\sqrt n+\log n/n)$.
\end{lemma}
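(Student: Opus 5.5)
We plan to exploit independence of the frozen predictor from the reference path $Z_{1:n}$: condition on the auxiliary path $X_{1:n}$, so that $\widehat T:=\widehat T^{\mathrm{CTW},\alpha}_{\theta,n}$ becomes a fixed deterministic one-step predictor, and then split the log-loss sum into a centred fluctuation and a conditional mean. Write $g(s,a):=\log\bigl(\widehat T(a\mid s)/T_\theta(a\mid s)\bigr)$. Both $\widehat T\ge\alpha'$ and $T_\theta\ge\delta$, so $|g|$ is bounded by a constant depending only on $(\alpha',\delta)$. Conditionally on $X_{1:n}$, Lemma~\ref{lem:mixing} at $\theta_0$ (applied to the deterministic $f=g$) gives
\[
\mathbb E\Bigl[\Bigl|\tfrac{1}{n-L}\textstyle\sum_t\bigl(g(S_t,Z_t)-\mathbb E_{\theta_0}g\bigr)\Bigr|\,\Big|\,X\Bigr]\le \tfrac{C_{\mathrm{add}}}{\sqrt n}\bigl(\mathbb E_{\theta_0}g^2\bigr)^{1/2}.
\]
The remaining piece is the conditional mean $\mathbb E_{\theta_0}g=\sum_s\pi_{\theta_0}(s)\sum_aT_{\theta_0}(a\mid s)\,g(s,a)$.

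For the fluctuation, we use $|g(s,a)|\le C|e_{\theta,n}(a\mid s)|$ by the mean-value form of $\log$ on $[\min(\alpha',\delta),1]$. This gives $\mathbb E_{\theta_0}g^2\le C\sum_s\pi_{\theta_0}(s)\|e_{\theta,n}(\cdot\mid s)\|_2^2$. Taking outer expectation, Jensen, and the sharper form of \eqref{eq:pmc-l2} then give a contribution $C n^{-1/2}(\log n/n)^{1/2}$, or $C/n$ with the sharp form. Either is $O(\log n/n)$ after a short comparison; with only the $\log n/n$ version, $\sqrt{\log n}/n\le\log n/n$ for $n\ge3$. So no $\varepsilon$ factor is needed here: the smallness comes from $e$ itself.

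For the mean, we expand $\log(1+x)=x+O(x^2)$ with $x=e_{\theta,n}(a\mid s)/T_\theta(a\mid s)$, which is bounded. The linear term is
\[
\sum_s\pi_{\theta_0}(s)\sum_a \frac{T_{\theta_0}(a\mid s)}{T_\theta(a\mid s)}\,e_{\theta,n}(a\mid s).
\]
Writing $T_{\theta_0}/T_\theta=1+(T_{\theta_0}-T_\theta)/T_\theta$, the "$1$" part vanishes by the normalization identity \eqref{eq:pred-error-sumzero}. The rest is bounded by $\delta^{-1}\Delta_\theta\sum_s\pi_{\theta_0}(s)\|e_{\theta,n}(\cdot\mid s)\|_1$. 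Its expectation is at most $C\Delta_\theta n^{-1/2}$ by \eqref{eq:pmc-l1}. The quadratic remainder is bounded by $C\sum_s\pi_{\theta_0}(s)\|e\|_2^2$, whose expectation is $O(\log n/n)$ by \eqref{eq:pmc-l2}. Summing the three contributions yields \eqref{eq:cross-pred-rate}. The particular case follows from the mean-value bound $\Delta_\theta\le C_T\varepsilon$ for $\theta=\theta_0+\varepsilon u$.

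The delicate step is the cancellation in the linear term. The pmc bounds control $e$ under the weights $\pi_{\theta_0}$, but the scoring distribution is $T_{\theta_0}$ while the predictor targets $T_\theta$. Only the mismatch $T_{\theta_0}-T_\theta$ survives, and this is what produces the $\varepsilon$ factor. Care is needed that the remainder in the Taylor expansion of $\log$ is uniform, which holds because clipping and Assumption~\ref{ass:reg}(i) keep both ratios in a fixed compact subinterval of $(0,\infty)$. A minor point is the measurability and conditioning step: the mixing bound is stated for deterministic $f$, and we apply it conditionally using the product-space independence of $X$ and $Z$ fixed in Section~\ref{sec:theory}.
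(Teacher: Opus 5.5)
Your proposal is correct and follows essentially the same route as the paper's proof. Both condition on the construction path and apply Lemma~\ref{lem:mixing} at $\theta_0$ to split the loss into a conditional mean and a fluctuation. Both cancel the first-order part of the mean with the normalization identity~\eqref{eq:pred-error-sumzero}, so that only the $\Delta_\theta\|e_{\theta,n}\|_1$ mismatch survives, and bound the quadratic Taylor remainder and the Jensen-controlled fluctuation through~\eqref{eq:pmc-l2}.
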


The $\Delta_\theta$ factor is produced by a two-line cancellation that is
the crux of the whole rate. Conditionally on the construction path, the
first-order part of the mean cross log-loss is
$\sum_s\pi_{\theta_0}(s)\sum_a T_{\theta_0}(a\mid s)\,
e_{\theta,n}(a\mid s)/T_\theta(a\mid s)$; inserting
$T_{\theta_0}=T_\theta+(T_{\theta_0}-T_\theta)$ in the numerator,
\begin{equation}
\sum_a T_{\theta_0}(a\mid s)\,\frac{e_{\theta,n}(a\mid s)}{T_\theta(a\mid s)}
=\underbrace{\sum_a e_{\theta,n}(a\mid s)}_{=\,0
\ \text{by}~\eqref{eq:pred-error-sumzero}}
+\sum_a\bigl(T_{\theta_0}(a\mid s)-T_\theta(a\mid s)\bigr)
\frac{e_{\theta,n}(a\mid s)}{T_\theta(a\mid s)}
\;\le\;\frac{\Delta_\theta}{\delta}\,
\bigl\|e_{\theta,n}(\cdot\mid s)\bigr\|_1 .
\label{eq:crux-cancellation}
\end{equation}
Probability conservation thus makes the first-order log-loss sensitive
only to the \emph{gap between the scoring and construction laws}, not to
the prediction error itself; the full proof, including the second-order
term, follows.

\begin{proof}[Proof of Lemma~\ref{lem:cross-pred}]
The clipped frozen predictor $\widehat T^{\mathrm{CTW},\alpha}_{\theta,n}$
is a function of its independent construction path and is therefore
independent of $Z_{1:n}$. Condition on the construction path; with
$g_{\theta,n}(s,a):=\log\frac{\widehat T^{\mathrm{CTW},\alpha}_{\theta,n}(a\mid s)}{T_\theta(a\mid s)}$,
Lemma~\ref{lem:mixing} at $\theta_0$ gives
\begin{equation}
\mathbb E\!\left[\Bigl|\tfrac1{n-L}\textstyle\sum_t g_{\theta,n}(S_t,Z_t)\Bigr|
\,\Big|\,\widehat T^{\mathrm{CTW},\alpha}_{\theta,n}\right]
\le|m_{\theta,n}|
+\frac{C_{\mathrm{add}}}{\sqrt n}\bigl(\mathbb E_{\theta_0}g_{\theta,n}^2\bigr)^{1/2},
\label{eq:cross-cond}
\end{equation}
where $m_{\theta,n}=\sum_s\pi_{\theta_0}(s)\sum_a T_{\theta_0}(a\mid s)\,g_{\theta,n}(s,a)$.
By $T_\theta\ge\delta$ and $\widehat T^{\mathrm{CTW},\alpha}_{\theta,n}\ge\alpha'$,
Taylor's bound for $\log(1+x)$ gives
$|g_{\theta,n}(s,a)-e_{\theta,n}(a\mid s)/T_\theta(a\mid s)|\le C\,e_{\theta,n}(a\mid s)^2$,
so
\[
|m_{\theta,n}|\le\Bigl|\sum_s\pi_{\theta_0}(s)\sum_a T_{\theta_0}(a\mid s)\tfrac{e_{\theta,n}(a\mid s)}{T_\theta(a\mid s)}\Bigr|
+C\sum_s\pi_{\theta_0}(s)\|e_{\theta,n}(\cdot\mid s)\|_2^2 .
\]
In the first term insert $T_{\theta_0}=T_\theta+(T_{\theta_0}-T_\theta)$;
the part weighted by $T_\theta$ vanishes by the normalization
$\sum_a e_{\theta,n}(a\mid s)=0$ of~\eqref{eq:pred-error-sumzero}, leaving
$|\sum_a T_{\theta_0}(a\mid s)e_{\theta,n}(a\mid s)/T_\theta(a\mid s)|
\le C\Delta_\theta\|e_{\theta,n}(\cdot\mid s)\|_1$. Taking expectations
and using the prediction-moment bounds~\eqref{eq:pmc-l1}--\eqref{eq:pmc-l2},
$\mathbb E|m_{\theta,n}|\le C(\Delta_\theta n^{-1/2}+\log n/n)$. For the
fluctuation term, $\mathbb E_{\theta_0}g_{\theta,n}^2\le C\sum_s\pi_{\theta_0}(s)\|e_{\theta,n}(\cdot\mid s)\|_2^2$,
so by Jensen and~\eqref{eq:pmc-l2},
$\mathbb E[(\mathbb E_{\theta_0}g_{\theta,n}^2)^{1/2}]\le C(\log n/n)^{1/2}$;
dividing by $\sqrt n$ contributes $\sqrt{\log n}/n\le\log n/n$. Taking the
outer expectation in~\eqref{eq:cross-cond} and combining the two bounds
proves~\eqref{eq:cross-pred-rate}.
\end{proof}

\subsection{Proof of the prediction-moment lemma}
\label{app:pmc-proof}

The proof decomposes the predictor error via the BCT
posterior-predictive representation of \cite{Kon24} into a
correct/overfit block, controlled by finite-state concentration
(Lemma~\ref{lem:mixing}), and an underfit block, whose posterior weight
decays exponentially by the separation gap~$\kappa$; clipping is shown
to be inactive on a high-probability event and to cost only $O(n^{-1})$.
The underfit step rests on the concentration of an empirical conditional
mutual information, which we isolate as a lemma. For an
internal node $s$ of length $\ell$ with children $cs$ ($c\in\mathcal A$
the child-selecting symbol), let $\widehat H(Y\mid u):=-\sum_a\tfrac{b_{u,a}}{n_u}\log\tfrac{b_{u,a}}{n_u}$
be the empirical conditional entropy at context $u$,
with the conventions $0\log0=0$ and $\widehat H(Y\mid u):=0$ when
$n_u=0$, and
\begin{equation}
\widehat J_s
:=\frac{n_s}{N}\Bigl(\widehat H(Y\mid s)-\sum_{c\in\mathcal A}\frac{n_{cs}}{n_s}\widehat H(Y\mid cs)\Bigr),
\label{eq:emp-cmi}
\end{equation}
set to zero when $n_s=0$, where $N=n-L$ is the number
of scoring positions of the proof of Lemma~\ref{lem:pmc},
the empirical version of $\pi_\theta(s)\,I_\theta(Y_t;C_t\mid S^{(\ell)}_t=s)$.

\begin{lemma}[Empirical conditional MI concentration]
\label{lem:emp-cmi}
Under Assumptions~\ref{ass:source}--\ref{ass:reg} and
\ref{ass:occupancy}, there is a constant $C<\infty$
such that, uniformly in $s\in\mathcal I(T^\star)$ and
$\theta\in\mathcal N(\theta_0)$,
\begin{equation}
\mathbb E\bigl|\widehat J_s-\pi_\theta(s)\,I_\theta(Y_t;C_t\mid s)\bigr|^2
\le\frac{C}{n}.
\label{eq:emp-cmi-mse}
\end{equation}
\end{lemma}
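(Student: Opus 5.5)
We rewrite $\widehat J_s$ as a smooth function of a finite vector of empirical frequencies and then combine a delta-method Lipschitz bound with Lemma~\ref{lem:mixing}. Let $N=n-L$. For $u\in\{s\}\cup\{cs\}_{c\in\mathcal A}$ and $a\in\mathcal A$, set $\widehat p(u,a):=b_{u,a}/N$ and $\widehat p(u):=n_u/N=\sum_a\widehat p(u,a)$. Then
\[
\widehat J_s=\sum_{c,a}\widehat p(cs,a)\log\frac{\widehat p(cs,a)\,\widehat p(s)}{\widehat p(cs)\,\widehat p(s,a)},
\]
using $\widehat p(s,a)=\sum_c\widehat p(cs,a)$. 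This is the plug-in mutual information $G(\widehat p)$ of the joint frequency array $\widehat p=(\widehat p(cs,a))_{c,a}$. Its population counterpart is
\[
p^\theta(cs,a)=\pi_\theta(cs)\,T_\theta(a\mid cs),
\]
where $T_\theta(\cdot\mid cs)$ is well defined because $|cs|\le L^\star\le L$ after conditioning on longer contexts, or is obtained by stationary averaging. With this, $G(p^\theta)=\pi_\theta(s)I_\theta(Y_t;C_t\mid s)$.

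\emph{Step 1: the frequencies concentrate.} Each $\widehat p(cs,a)$ is the centred-plus-mean average $\frac1N\sum_{t}\mathbb 1\{Y_{t-\ell-1:t-1}=cs,\,Y_t=a\}$. Since $\ell+1\le L$, this indicator is a deterministic function $f(S_t,Y_t)$ of the pair. Its mean is exactly $p^\theta(cs,a)$ by stationarity. Lemma~\ref{lem:mixing} controls the first moment, but here we need second moments. Its proof, however, already delivers the variance bound $\operatorname{Var}_\theta(\frac1N\sum f)\le C\,\mathbb E f^2/n$ uniformly in $\theta$. Summing over the finitely many cells then gives
\[
\mathbb E\|\widehat p-p^\theta\|_2^2\le C/n
\]
uniformly over $\theta\in\mathcal N(\theta_0)$ and $s$.

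\emph{Step 2: the map is Lipschitz on a good set.} By Assumption~\ref{ass:occupancy} and Assumption~\ref{ass:reg}(i), every population cell satisfies $p^\theta(cs,a)\ge\pi_{\min}\delta>0$, since $\pi_\theta(cs)\ge\pi_\theta$ of any full-depth extension $\ge\pi_{\min}$. Let $\mathcal G$ be the event that all $|\widehat p(cs,a)-p^\theta(cs,a)|\le\pi_{\min}\delta/2$. On $\mathcal G$ all cells lie in the compact box $[\pi_{\min}\delta/2,1]$, where $G$ is $C^1$ with gradient bounded by a constant depending only on $(|\mathcal A|,\pi_{\min},\delta)$. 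Hence
\[
|G(\widehat p)-G(p^\theta)|\le C\|\widehat p-p^\theta\|_2 \quad\text{on } \mathcal G,
\]
and squaring and taking expectations gives $C/n$ on $\mathcal G$.

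\emph{Step 3: the bad event.} This is the main obstacle, because on $\mathcal G^c$ the function $G$ has unbounded gradient near empty cells. The fix is that $G$ itself is bounded: $0\le\widehat J_s\le\log|\mathcal A|$ (an empirical conditional MI, weighted by $n_s/N\le1$), and the same holds for the target. It therefore suffices to show $\Pr(\mathcal G^c)\le C/n$. Chebyshev together with Step 1 gives exactly $\Pr(\mathcal G^c)\le \mathbb E\|\widehat p-p^\theta\|_2^2/(\pi_{\min}\delta/2)^2\le C/n$, so the contribution is $(\log|\mathcal A|)^2\,C/n$. All constants depend only on $(|\mathcal A|,L,\delta,\pi_{\min})$ and the uniform mixing constant, which yields uniformity in $s$ and $\theta$. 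The remaining bookkeeping is the conventions $0\log0=0$ and the $n_u=0$ cases: these only occur on $\mathcal G^c$, where boundedness already handles them.
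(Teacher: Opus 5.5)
Your proof is correct and follows essentially the same route as the paper. Both use the variance bound from the proof of Lemma~\ref{lem:mixing} and Chebyshev to build a good event of probability $1-O(n^{-1})$, a Lipschitz (delta-method) bound for $\widehat J_s$ on that event, and the deterministic range $[0,\log|\mathcal A|]$ on its complement. The only difference is the choice of coordinates: you write $\widehat J_s$ as a degree-one homogeneous function of the joint cell frequencies $b_{cs,a}/N$, whose population values are at least $\pi_{\min}\delta$, whereas the paper works with $(\widehat T(a\mid u),\,n_{cs}/n_s,\,n_s/N)$ and so needs separate count-occupancy events and a ratio-Lipschitz step, which your parametrization avoids.
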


\begin{proof}
Write $J_s:=\pi_\theta(s)\,I_\theta(Y_t;C_t\mid s)$ and
let
$G_s':=\{n_s\wedge\min_c n_{cs}\ge\tfrac12 N\pi_{\min}\}\cap\{\min_{u\in\{s\}\cup\{cs\}_c,\,a}\widehat T(a\mid u)\ge\tfrac\delta2\}$,
where $\widehat T(a\mid u)=b_{u,a}/n_u$.
Every empirical input of $\widehat J_s$ is built from
occupation counts of the finite-state chain. With the counts of the
proof of Lemma~\ref{lem:pmc} below, $n_u/N$ and $b_{u,a}/N$ are
empirical averages of the deterministic functions
$\mathbf 1\{u\preceq s\}$ and
$\mathbf 1\{u\preceq s\}\mathbf 1\{a'=a\}$ of the pair $(s,a')$, so
Lemma~\ref{lem:mixing} applies to each of them and gives
\[
\operatorname{Var}\bigl(n_u/N\bigr)=O(n^{-1}),
\qquad
\operatorname{Var}\bigl(b_{u,a}/N\bigr)=O(n^{-1}),
\]
with means $\pi_\theta(u)$ and $\pi_\theta(u)T_\theta(a\mid u)$, where
$T_\theta(a\mid u):=\Pr_\theta(X_t=a\mid u\preceq S_t)$ is the
stationary conditional probability given $u$. We work with occupation
measures rather than with martingale increments because, for an
insufficient context $u$, the one-step conditional law of the chain
given the whole past is $T_\theta(\cdot\mid S_t)$ and not
$T_\theta(\cdot\mid u)$.

By Chebyshev's inequality, each of the events
$\{n_u\ge\tfrac12N\pi_\theta(u)\}$, for the finitely many contexts
$u\in\{s\}\cup\{cs\}_c$, fails with probability $O(n^{-1})$; on their
intersection every count is at least $\tfrac12N\pi_{\min}$. There
$\widehat T(a\mid u)=(b_{u,a}/N)/(n_u/N)$ is a ratio of two empirical
averages whose denominator is bounded below by $\pi_{\min}/2$, and on
that region the map $(p,q)\mapsto p/q$ is Lipschitz, so
$\mathbb E[(\widehat T(a\mid u)-T_\theta(a\mid u))^2\mathbf 1]=O(n^{-1})$;
since $T_\theta(a\mid u)\ge\delta$, being an average of transition
probabilities that are all $\ge\delta$, this also gives
$\mathbb P(\widehat T(a\mid u)<\tfrac\delta2)=O(n^{-1})$. A union bound
over the finitely many pairs $(u,a)$ gives
$\mathbb P(G_s'^{\,c})=O(n^{-1})$. The same Lipschitz argument
applies to the count ratios $n_{cs}/n_s$ that enter $\widehat J_s$, so
all the empirical inputs have $O(n^{-1})$ second moments about their
true values.

On $G_s'$ the frequencies stay in the compact region
$\{\widehat T(\cdot\mid u)\ge\delta/2\}$, on which
$\widehat J_s=\Phi(\xi)$ is a fixed $C^\infty$ function of the vector
$\xi=(\{\widehat T(a\mid u)\},\{n_{cs}/n_s\},n_s/N)$ of empirical inputs, with
gradient bounded uniformly (the entropies are
smooth away from the simplex boundary, which $\widehat T\ge\delta/2$
avoids, and the shared denominators
$n_s,n_{cs}\ge\tfrac12 N\pi_{\min}$ on
$G_s'$ keep the ratio derivatives bounded). By the
second-moment bounds
above the input vector has total second moment
$\mathbb E\|\xi-\bar\xi\|_2^2\mathbf 1_{G_s'}=O(n^{-1})$ about its true
value $\bar\xi$, and $J_s=\Phi(\bar\xi)$. Since $\Phi$
is Lipschitz on that region,
$|\widehat J_s-J_s|\le\|\nabla\Phi\|_\infty\|\xi-\bar\xi\|_2$ on
$G_s'$, so
\[
\mathbb E\bigl[|\widehat J_s-J_s|^2\mathbf 1_{G_s'}\bigr]
\le\|\nabla\Phi\|_\infty^2\,
\mathbb E\bigl[\|\xi-\bar\xi\|_2^2\mathbf 1_{G_s'}\bigr]
=O(n^{-1}).
\]
On the complement, $\widehat J_s$ and $J_s$ both lie in
$[0,\log|\mathcal A|]$ deterministically, so
$\mathbb E[|\widehat J_s-J_s|^2\mathbf 1_{G_s'^{\,c}}]
\le(\log|\mathcal A|)^2\,\mathbb P(G_s'^{\,c})=O(n^{-1})$. Adding the
two contributions gives~\eqref{eq:emp-cmi-mse}.
\end{proof}

\begin{proof}[Proof of Lemma~\ref{lem:pmc}]
All randomness in $\widehat T^{\mathrm{CTW},\alpha}_{\theta,n}$ comes from
the construction path $X_{1:n}\sim P_\theta^{(n)}$; expectations below are
under $P_\theta$. Throughout, $S_t:=X_{t-L:t-1}$ denotes
the length-$L$ context of the construction path, $u\preceq S_t$ means
that $u$ is a suffix of $S_t$, and $N:=n-L$ is the number of scoring
positions; we assume $n\ge 2L$, so that $N\ge n/2$. For a context
$u\in\mathcal A^{\ell}$, $0\le\ell\le L$,
write the counts
$n_u=\sum_{t=L+1}^{n}\mathbf 1\{u\preceq S_t\}$ and
$b_{u,a}=\sum_{t=L+1}^{n}\mathbf 1\{u\preceq S_t\}\mathbf 1\{X_t=a\}$,
and
the KT predictor $P_e^u(a)=(b_{u,a}+\tfrac12)/(n_u+\tfrac{|\mathcal A|}{2})$.
Its stationary occupancy is $\pi_\theta(u):=\sum_{w\in\mathcal A^{L}:\,u\preceq w}\pi_\theta(w)$,
so by Assumption~\ref{ass:occupancy} $\pi_\theta(u)\ge\pi_{\min}$ for every
context $u$ of length $\le L$ compatible with the chain.
Fix $w\in\mathcal A^L$ and let $s^\star(w)\in T^\star$ be the minimal-tree
leaf that is a suffix of $w$, of depth $\ell^\star(w)\le L^\star\le L$;
under Assumption~\ref{ass:topology}, $T_\theta(\cdot\mid w)=T_\theta(\cdot\mid s^\star(w))=:T$.
The frozen predictor at a scoring context $w$ is a
convex combination of the KT predictors along the suffix path of $w$.
Indeed, unrolling the one-step recursion~\eqref{eq:ctw-sequential} of
Appendix~\ref{app:ctw} from the root down to the leaf $w$, with the
weights $\beta^{w(i)}\in(0,1)$ of~\eqref{eq:beta} evaluated on the
construction path, gives
\begin{equation}
\widehat T^{\mathrm{CTW},\circ}_{\theta,n}(\cdot\mid w)
=\sum_{i=0}^{L}\gamma_i(w)\,P_e^{w(i)},
\qquad
\gamma_i(w):=\beta^{w(i)}\prod_{j<i}\bigl(1-\beta^{w(j)}\bigr),
\label{eq:mixture-rep}
\end{equation}
where $w(i)$ is the length-$i$ suffix of $w$ and the product is over
the ancestors of $w(i)$ on that path, with the convention
$\beta^{w(L)}=1$. The weights are nonnegative and telescope to one,
and they are computed from the construction path alone. Subtracting
$T$ and splitting the sum at the minimal-tree leaf gives
\[
\widehat T^{\mathrm{CTW},\circ}_{\theta,n}(\cdot\mid w)-T
=\underbrace{\sum_{i\in O}\gamma_i(w)\bigl(P_e^{w(i)}-T\bigr)}_{E_O}
+\underbrace{\sum_{i\in U}\gamma_i(w)\bigl(P_e^{w(i)}-T\bigr)}_{E_U},
\quad \rho:=\sum_{i\in U}\gamma_i(w),
\]
where $\widehat T^{\mathrm{CTW},\circ}_{\theta,n}$ is the unclipped
predictor, the underfit set is
$U=\{0\le i<\ell^\star(w)\}$, and the correct/overfit set is
$O=\{\ell^\star(w)\le i\le L\}$.

\emph{Step 0 (clipping is inactive with high probability).}
On $\mathcal G_w:=\{\max_a|\widehat T^{\mathrm{CTW},\circ}_{\theta,n}(a\mid w)-T_\theta(a\mid w)|\le\delta/2\}$,
every coordinate is $\ge\delta/2>\alpha$, so clipping leaves the
conditional unchanged. Both vectors being probability vectors,
$\|\widehat T^{\mathrm{CTW},\alpha}_{\theta,n}(\cdot\mid w)-\widehat T^{\mathrm{CTW},\circ}_{\theta,n}(\cdot\mid w)\|_1\le2\,\mathbf 1_{\mathcal G_w^c}$
and $\le4\,\mathbf 1_{\mathcal G_w^c}$ in $\ell_2^2$. By the second-moment
(Chebyshev) count bound of Step~1 and $\mathbb E\rho=O(n^{-1})$ (Step~3,
applied to the unclipped mixture and hence not circular),
$\mathbb P(\mathcal G_w^c)=O(n^{-1})$; averaging these corrections
against $\pi_{\theta_0}$ adds only $O(n^{-1})$, so it suffices to bound
the unclipped predictor.

\emph{Step 1 (KT error at a correct context).}
Let $u=w(i)$, $i\in O$, so $T_\theta(\cdot\mid u)=T$. Splitting
$P_e^u(a)-T(a)=\frac{b_{u,a}-n_uT(a)}{n_u+|\mathcal A|/2}
+\frac{\tfrac12-\tfrac{|\mathcal A|}{2}T(a)}{n_u+|\mathcal A|/2}$, the
increments $\xi_t(a)=\mathbf 1\{u\preceq X_{t-L:t-1}\}(\mathbf 1\{X_t=a\}-T_\theta(a\mid u))$
are martingale differences (the indicator is predictable and
$\mathbb E[\mathbf 1\{X_t=a\}\mid\mathcal F_{t-1}]=T_\theta(a\mid u)$ on
$\{u\preceq X_{t-L:t-1}\}$), so
$\mathbb E[(b_{u,a}-n_uT(a))^2]=\mathbb E[n_u]T(a)(1-T(a))\le\tfrac14 n$ and
$\mathbb E|b_{u,a}-n_uT(a)|\le\tfrac12\sqrt n$. On the occupancy event
$G_u=\{n_u\ge\tfrac12 N\pi_\theta(u)\}$, which has
$\mathbb P(G_u^c)=O(n^{-1})$ by Lemma~\ref{lem:mixing} and Chebyshev
($n_u$ is an additive functional with mean
$N\pi_\theta(u)\ge N\pi_{\min}$),
every denominator is $\ge\tfrac12 N\pi_{\min}$; bounding the centred and
KT-bias parts there and the complement by $\|P_e^u-T\|_1\le2$
(resp.\ $\le4$),
\[
\mathbb E\|P_e^u-T\|_1\le\frac{|\mathcal A|}{\pi_{\min}\sqrt n}+O(n^{-1})=O(n^{-1/2}),
\qquad
\mathbb E\|P_e^u-T\|_2^2\le\frac{|\mathcal A|}{\pi_{\min}^2\,n}+O(n^{-1})=O(n^{-1}).
\]

\emph{Step 2 (correct/overfit block).}
Each $i\in O$ has $w(i)$ refining $s^\star(w)$, so $T_\theta(\cdot\mid w(i))=T$
and Step~1 applies. Using $\gamma_i\le1$, $|O|\le L+1$, and the convexity
estimate $\|\sum_{i\in O}\gamma_i v_i\|_2^2\le\sum_{i\in O}\gamma_i\|v_i\|_2^2$,
$\mathbb E\|E_O\|_1=O((L{+}1)n^{-1/2})$ and $\mathbb E\|E_O\|_2^2=O((L{+}1)n^{-1})$.

\emph{Step 3 (underfit block is negligible).}
Since $\|E_U\|_1\le2\rho$ and $\|E_U\|_2^2\le4\rho$, it suffices that
$\mathbb E\rho=O(n^{-1})$.

\emph{(3a) Reduction to a log-odds sum.} By the $|\mathcal A|$-ary CTW
recursion $P_w^s=\tfrac12 P_e^s+\tfrac12\prod_{c}P_w^{cs}$
\cite{Kon24} and $P_w^{cs}\ge\tfrac12 P_e^{cs}$,
the posterior probability of treating $s$ as a leaf, given that the suffix
path reaches $s$, is
$\tfrac12 P_e^s/P_w^s\le P_e^s/\prod_c P_w^{cs}\le 2^{|\mathcal A|}\,e^{\Lambda_s}$
with $\Lambda_s:=\log P_e^s-\sum_{c}\log P_e^{cs}$. The unconditional
weight $\gamma_{|s|}(w)$ of stopping at $s$ is this quantity times the
product of the split probabilities strictly above $s$, each in $[0,1]$;
hence $\gamma_{|s|}(w)\le 2^{|\mathcal A|}e^{\Lambda_s}$ and, summing over
the at most $\ell^\star(w)\le L$ underfit ancestors,
$\rho\le 2^{|\mathcal A|}\sum_{s\in\mathcal I(T^\star)}e^{\Lambda_s}$.

\emph{(3b) Deterministic log-odds expansion.} The $|\mathcal A|$-ary KT
redundancy identity \cite[Lemma~4.2]{Kon24},
$-\log P_e^u=n_u\widehat H(Y\mid u)+\tfrac{|\mathcal A|-1}{2}\log(n_u+1)+\epsilon_u$
with $|\epsilon_u|\le c(|\mathcal A|)$ deterministic (the $+1$ keeps the
bound valid at $n_u=0$, where $P_e^u=1$), applied to $u=s$ and
to each child $u=cs$ and subtracted (using $\sum_c n_{cs}=n_s\le n$), gives
$\Lambda_s=-N\widehat J_s+r_s$, where $\widehat J_s\ge0$ is the empirical
conditional mutual information~\eqref{eq:emp-cmi} and the remainder
$|r_s|\le\tfrac{|\mathcal A|-1}{2}\bigl(\log(n_s+1)+\sum_c\log(n_{cs}+1)\bigr)+(|\mathcal A|+1)c(|\mathcal A|)=O(|\mathcal A|^2\log n)$
is deterministic.

\emph{(3c) Concentration of $\widehat J_s$.} By
Lemma~\ref{lem:emp-cmi}, $\mathbb E|\widehat J_s
-\pi_\theta(s)I_\theta(Y_t;C_t\mid s)|^2\le C/n$.

\emph{(3d) Conclusion.} Assumption~\ref{ass:gap} gives
$\pi_\theta(s)I_\theta(Y_t;C_t\mid s)\ge\kappa$, so
$\{\widehat J_s<\tfrac\kappa2\}$ implies
$|\widehat J_s-\pi_\theta(s)I_\theta(Y_t;C_t\mid s)|>\tfrac\kappa2$,
and Markov's inequality applied to~\eqref{eq:emp-cmi-mse} gives
\[
\mathbb P\bigl(\widehat J_s<\tfrac\kappa2\bigr)
\le\frac{4}{\kappa^2}\,
\mathbb E\bigl|\widehat J_s-\pi_\theta(s)I_\theta(Y_t;C_t\mid s)\bigr|^2
\le\frac{4C}{\kappa^2 n}.
\]
Using $\rho\le1$ always
and, on the event $\{\widehat J_s\ge\tfrac\kappa2\ \forall s\}$, the
deterministic bound
$e^{\Lambda_s}\le e^{-N\kappa/2+O(\log n)}$ from (3b),
\[
\mathbb E\rho\le\mathbb P\bigl(\exists s\in\mathcal I(T^\star):\widehat J_s<\tfrac\kappa2\bigr)
+2^{|\mathcal A|}|\mathcal I(T^\star)|\,e^{-N\kappa/2+O(\log n)}=O(n^{-1}).
\]

\emph{Conclusion.} Adding Steps~0--3, the per-context errors are
$\mathbb E\|\widehat T^{\mathrm{CTW},\alpha}_{\theta,n}(\cdot\mid w)-T\|_1=O(n^{-1/2})$
and $\mathbb E\|\widehat T^{\mathrm{CTW},\alpha}_{\theta,n}(\cdot\mid w)-T\|_2^2=O(n^{-1})$,
uniformly in $w$ and in $\theta\in\mathcal N(\theta_0)$, with constants
depending only on $(|\mathcal A|,L,\delta,\alpha,\pi_{\min},\kappa,C_{\mathrm{add}})$.
Averaging against $\pi_{\theta_0}$ gives~\eqref{eq:pmc-l1} and the sharper
$O(n^{-1})$ form of~\eqref{eq:pmc-l2}.
\end{proof}

\subsection{Proof of Proposition~\ref{prop:kl-rate}}
\label{app:kl-assembly}

\begin{proof}
Consider first the clipped estimator~\eqref{eq:dir-obj}. Take
expectations of absolute values in the
decomposition~\eqref{eq:master-decomp}, whose remainder satisfies
$|R_{\mathrm{bd}}|\le CL\log n/n$ deterministically (technical
conventions of Appendix~\ref{app:rate-anatomy}). Lemma~\ref{lem:oracle}
gives $\mathbb E|A_n|\le C_{\mathrm{or}}\varepsilon/\sqrt n$ (the
$\tfrac1n$-versus-$\tfrac1{n-L}$ discrepancy between~\eqref{eq:master-decomp}
and the lemma is a deterministic $O(L\varepsilon/n)$, absorbed into
$R_{\mathrm{bd}}$'s order); Lemma~\ref{lem:self-red} gives
$\mathbb E|B^{\mathrm{self}}_n|\le C_{\mathrm{self}}\log n/n$; and
Lemma~\ref{lem:cross-pred} with $\theta=\theta_0+\varepsilon u$, whose
prediction-moment hypothesis holds by Lemma~\ref{lem:pmc}, together with
$\Delta_\theta\le C_T\varepsilon$ gives
$\mathbb E|B^{\mathrm{cross}}_n|\le C(\varepsilon/\sqrt n+\log n/n)$
(same remark on the normalization). The triangle inequality
proves~\eqref{eq:kl-rate} with
$\widehat{\bar D}^{\,\alpha}_{n,\varepsilon}(u)$ in place of
$\widehat{\bar D}_{n,\varepsilon}(u)$.

\emph{From the clipped to the unclipped estimator.} It remains to
compare $\widehat{\bar D}^{\,\alpha}_{n,\varepsilon}(u)$ with the actual
output $\widehat{\bar D}_{n,\varepsilon}(u)$ of
Algorithm~\ref{alg:ctw-kl}, i.e.~\eqref{eq:dir-obj} with the unclipped
cross scorer $\widehat P^{\mathrm{CTW}}_{X_{1:n},L}$. Both cross scorers
have per-symbol probabilities bounded below, by
$(2n+|\mathcal A|)^{-1}$ for the unclipped scorer
(Appendix~\ref{app:ctw}) and by $\alpha'$ after clipping, so each
per-symbol log-ratio is $O(\log n)$ and
$\bigl|\widehat{\bar D}_{n,\varepsilon}(u)
-\widehat{\bar D}^{\,\alpha}_{n,\varepsilon}(u)\bigr|\le C\log n$
deterministically. On the event
$\bigcap_{w\in\mathcal A^L}\mathcal G_w$ of Step~0 of the proof of
Lemma~\ref{lem:pmc}, clipping is inactive at every scoring context, so
the two estimates can differ only through their warm-up factors, i.e.\
by at most $CL\log n/n$ deterministically. Since
$\mathbb P(\bigcup_w\mathcal G_w^c)=O(n^{-1})$ by Step~0 and a union
bound over the finitely many contexts,
\[
\mathbb E\bigl|\widehat{\bar D}_{n,\varepsilon}(u)
-\widehat{\bar D}^{\,\alpha}_{n,\varepsilon}(u)\bigr|
\le\frac{CL\log n}{n}+C\log n\cdot O(n^{-1})
=O\Bigl(\frac{\log n}{n}\Bigr),
\]
which is absorbed into the right-hand side of~\eqref{eq:kl-rate}. The
proposition therefore holds as stated for the output of
Algorithm~\ref{alg:ctw-kl}.
\end{proof}

\section{Proofs of the Directional and Matrix FIM Rates}
\label{app:rate-fim}

This appendix proves Theorem~\ref{thm:fim-rate}.
The directional rates follow from Proposition~\ref{prop:kl-rate} and
Lemma~\ref{lem:kl-fim}; the least-squares assembly is shared by both
parts of the theorem.

\begin{proof}[Proof of Theorem~\ref{thm:fim-rate}]
\emph{Step 1 (directional rates).} Fix a direction $u_j$
and write $v^\star_j:=u_j^\top I(\theta_0)u_j$. By
Proposition~\ref{prop:kl-rate} and Lemma~\ref{lem:kl-fim} (applied with
$\Delta=\varepsilon u_j$),
\[
\mathbb E\bigl|v_j-v^\star_j\bigr|
\le\tfrac{2}{\varepsilon^2}\,
\mathbb E\bigl|\widehat{\bar D}_{n}(P_{\theta_0}\,\|\,P_{\theta_0+\varepsilon u_j})
-\bar D(P_{\theta_0}\,\|\,P_{\theta_0+\varepsilon u_j})\bigr|
+\tfrac{2}{\varepsilon^2}\,\bigl|r(\varepsilon u_j)\bigr|
\le2C_{\mathrm{KL}}\Bigl(\tfrac{1}{\varepsilon\sqrt n}
+\tfrac{\log n}{n\varepsilon^2}\Bigr)+2C_3\varepsilon
=:\eta^{(1)}_{n,\varepsilon}.
\]
For the central measurement, the $C^4$ smoothness assumed in part (b)
puts Lemma~\ref{lem:sym-expansion} at our disposal, so that
$\bigl|\bar D(P_{\theta_0}\|P_{\theta_0+\varepsilon u_j})
+\bar D(P_{\theta_0}\|P_{\theta_0-\varepsilon u_j})
-\varepsilon^2 u_j^\top I(\theta_0)u_j\bigr|\le C_4\varepsilon^4$.
Combining this bound with Proposition~\ref{prop:kl-rate} applied at
$\theta_0\pm\varepsilon u_j$ (both $u_j$ and $-u_j$ are unit
directions),
\[
\mathbb E\bigl|v^{(2)}_j-v^\star_j\bigr|
\le\tfrac{1}{\varepsilon^2}\sum_{\pm}
\mathbb E\bigl|\widehat{\bar D}_{n}(P_{\theta_0}\,\|\,P_{\theta_0\pm\varepsilon u_j})
-\bar D(P_{\theta_0}\,\|\,P_{\theta_0\pm\varepsilon u_j})\bigr|
+C_4\varepsilon^2
\le2C_{\mathrm{KL}}\Bigl(\tfrac{1}{\varepsilon\sqrt n}
+\tfrac{\log n}{n\varepsilon^2}\Bigr)+C_4\varepsilon^2
=:\eta^{(2)}_{n,\varepsilon}.
\]

\emph{Step 2 (assembly).}
The OLS solution is linear in the measurement vector, and
$\operatorname{unvec}_\triangle(f^\star)=I(\theta_0)$ with
$Uf^\star=v^\star$; since $I(\theta_0)\succeq0$, the PSD projection is the
identity on it, and by non-expansiveness (Lemma~\ref{lem:psd-proj}) and
the matrix Lipschitz bound~\eqref{eq:ls-matrix-lipschitz},
for either family of measurements,
\[
\bigl\|\widehat I_{n,\varepsilon}(\theta_0)-I(\theta_0)\bigr\|_F
\le\sqrt2\,\kappa_U\,\bigl\|v-v^\star\bigr\|_2
\le\sqrt2\,\kappa_U\sum_{j=1}^m\bigl|v_j-v^\star_j\bigr|,
\]
and likewise with $\widehat I^{(2)}_{n,\varepsilon}(\theta_0)$
and $v^{(2)}$. Taking expectations and absorbing $\sqrt2\,\kappa_U m$
into the constant gives~\eqref{eq:fim-matrix-onesided}
and~\eqref{eq:fim-matrix-rate}. Substituting
$\varepsilon_n\asymp n^{-1/4}$ into $\eta^{(1)}_{n,\varepsilon}$
balances its first and third terms at $n^{-1/4}$, the middle term being
of lower order $(\log n)\,n^{-1/2}$, which proves part (a).
Substituting $\varepsilon_n\asymp n^{-1/6}$ into
$\eta^{(2)}_{n,\varepsilon}$ makes the first and third terms $n^{-1/3}$
and the middle term $n^{-2/3}\log n$, which
proves~\eqref{eq:fim-rate-final} and part (b).
\end{proof}

\bibliographystyle{ieeetr}
\bibliography{reference}

\end{document}